\pgfplotsset{compat=1.11}
\definecolor{newblue}{RGB}{40,210,251}
\definecolor{lightgray}{RGB}{170,170,170}
\definecolor{darkyellow}{RGB}{255,210,70}
\definecolor{darkyellow2}{RGB}{251,184,38}
\definecolor{metalblue}{RGB}{78,156,219}
\definecolor{metalblue2}{RGB}{34,52,103}
\definecolor{pink}{RGB}{237,16,118}
\definecolor{pink2}{RGB}{131,28,71}
\definecolor{violet}{HTML}{53257F} 
\definecolor{violet2}{RGB}{61,18,100}
\definecolor{applegreen}{rgb}{0.55, 0.71, 0.0}
\definecolor{applegreen2}{rgb}{0.4, 0.8, 0.0}
\definecolor{DarkGray}{gray}{0.25} 
\definecolor{MidGray}{gray}{0.38} 
\definecolor{NeutralGray}{gray}{0.5}
\definecolor{LightGray}{gray}{0.7}
\definecolor{lightGray}{gray}{0.85}
\definecolor{DarkRed}{rgb}{0.7,0,0}
\definecolor{DarkBlue}{rgb}{0,0,0.5}
\definecolor{SteelBlue}{rgb}{0,0.4,0.6}
\definecolor{Orange}{rgb}{0.7,0.5,0}
\definecolor{Violette}{rgb}{0.5,0,0.5}
\definecolor{Sand}{rgb}{0.84,0.8,0.55}
\definecolor{niceblue}{rgb}{0.33,0.5,0.8}
\definecolor{OliveGreen}{RGB}{0,102,102}
\definecolor{NiceGreen}{RGB}{0,153,72}
\colorlet{tensorcol}{niceblue!70!gray}
\definecolor{changecol}{rgb}{0.7,0,0}
\colorlet{tensorcol1}{darkyellow}
\colorlet{tensorcol1border}{darkyellow2}
\colorlet{tensorcol2}{metalblue}
\colorlet{tensorcol2border}{metalblue2}
\tikzset{
	ol/.style = {remember picture, overlay},
	eq-pic/.style = {inner sep = .5pt,draw, very thick, gray, rounded corners = 2pt},
}
\tikzset{
  edges/.style = {thick, line cap = round, gray},
  connection edge/.style ={very thick, gray},
  vertex/.style={
    circle,inner sep=0pt,minimum size=2mm,
    thick,draw=gray,
    fill = lightgray
    },
  ball/.style ={
      circle,
      radius=0.01cm,
      shading=ball, 
      ball color=blue},
  site/.style = {
    minimum width = 1.3em, 
    minimum height = 0.5\baselineskip,
    rounded corners=0.3mm,
    thick,draw=black!40,
    top color=white,bottom color=black!20
    },
  generaltensor/.style = {    
    rectangle,
    rounded corners = 0.3mm,
    text = white,
    fill = tensorcol,
    draw
    },
  semicircle/.style = {    
    semicircle,
    rounded corners = 0.3mm,
    text = white,
    fill = tensorcol,
    draw
    },
      tensorbox/.style={
    generaltensor,
    inner sep = 2pt, 
    minimum height = 1.2\baselineskip,
    minimum width = 1.2\baselineskip,
    fill = tensorcol2,
    draw = tensorcol2border
    },
  tensorleg/.style={
    very thick,black!80
    },
  channel/.style = {
    generaltensor,
    minimum height=0.5\baselineskip,
    minimum width=1.5cm
  }, 
1qubit/.style={
    generaltensor,
    inner sep = 2pt, 
    minimum height = 1.2\baselineskip,
    minimum width = 1.2\baselineskip,
    fill = metalblue,
    draw =tensorcol2border,
    thick
    },
1qubitg/.style={
	generaltensor,
	inner sep = 2pt, 
	minimum height = 1.2\baselineskip,
	minimum width = 1.2\baselineskip,
	fill = applegreen,
	draw =DarkGray,
	thick
},
1qubitp/.style={
	generaltensor,
	inner sep = 2pt, 
	minimum height = 1.2\baselineskip,
	minimum width = 1.2\baselineskip,
	fill = pink,
	draw =pink2,
	thick
},
1qubitgray/.style={
	generaltensor,
	inner sep = 2pt, 
	minimum height = 1.2\baselineskip,
	minimum width = 1.2\baselineskip,
	fill = lightray,
	draw =tensorcol2border,
	thick
},
2qubits/.style={
    generaltensor,
    inner sep = 2pt, 
    minimum height = 2.8\baselineskip,
    minimum width = 1.2\baselineskip,
    fill = tensorcol2,
    draw =tensorcol2border,
    thick
    },
2qubitsgray/.style={
	generaltensor,
	inner sep = 2pt, 
	minimum height = 2.8\baselineskip,
	minimum width = 1.2\baselineskip,
	fill = lightgray,
	draw =tensorcol2border,
	thick
},
3qubits/.style={
	generaltensor,
	inner sep = 2pt, 
	minimum height = 4.4\baselineskip,
	minimum width = 1.2\baselineskip,
	fill = tensorcol2,
	draw =tensorcol2border,
	thick
},
3qubitsgray/.style={
	generaltensor,
	inner sep = 2pt, 
	minimum height = 4.4\baselineskip,
	minimum width = 1.2\baselineskip,
	fill = lightgray,
	draw =tensorcol2border,
	thick
},
measure/.style={
    generaltensor,
    inner sep = 2pt, 
    minimum height = 1.2\baselineskip,
    minimum width = 1.2\baselineskip,
    fill = lightgray,
    draw =gray ,
    thick
    }
  }
\definecolor{dominik}{rgb}{0.4,.0,0.6}
\newtheorem{theorem}{Theorem}
\newtheorem{definition}[theorem]{Definition}
\newtheorem{lemma}[theorem]{Lemma}
\DeclareMathOperator{\poly}{poly}
\newcommand{\ket}[1]{\vert{#1}\rangle}
\newcommand{\bra}[1]{\langle{#1}\vert}
\newcommand{\id}{\mathbbm{1}}
\newcommand{\ii}{\mathrm{i}}
\definecolor{christian}{rgb}{0,.4,1}
\definecolor{jens}{rgb}{0,.8,.5}
\definecolor{juan}{rgb}{.7,.1,0}
\definecolor{dominik}{rgb}{0.4,.0,0.6}
\definecolor{paul}{rgb}{0.4,.5,0.6}
\definecolor{tm}{rgb}{0,1,0}
\newcommand{\fu}{Dahlem Center for Complex Quantum Systems, Freie Universit{\"a}t Berlin, 14195 Berlin, Germany}
\newcommand{\mfu}{Department of Mathematics and Computer Science, Freie Universit{\"a}t Berlin, 14195 Berlin, Germany}
\newcommand{\hzb}{Helmholtz-Zentrum Berlin f{\"u}r Materialien und Energie, 14109 Berlin, Germany}
\begin{document}

\title{Closing gaps of a quantum advantage with short-time Hamiltonian dynamics}

\author{J.\ Haferkamp}

\affiliation{\fu}

\author{D.\ Hangleiter} 
\affiliation{\fu}

\author{A.\ Bouland}
\affiliation{Department of Electrical Engineering and Computer Sciences, University of California, Berkeley}

\author{B.\ Fefferman}
\affiliation{Department of Computer Science, The University of Chicago}

\author{J.\ Eisert}
\affiliation{\fu}
\affiliation{\hzb}
\affiliation{\mfu}
\author{J.\ Bermejo-Vega}
\affiliation{\fu}

\begin{abstract} 
Demonstrating a quantum computational speedup is a crucial milestone for near-term quantum technology.
Recently, quantum simulation architectures have been proposed that have the potential to show such a quantum advantage, based on
commonly made assumptions. The key challenge in the theoretical analysis of this scheme -- as of other comparable schemes such as boson sampling -- is to lessen the assumptions and close the theoretical loopholes, replacing them by rigorous arguments. In this work, we prove two open conjectures for these architectures for Hamiltonian quantum simulators: anticoncentration of the generated probability distributions and average-case hardness of exactly evaluating those probabilities. 
The latter is proven building upon recently developed techniques for random circuit sampling. For the former, we develop new techniques that exploit the insight that approximate $2$-designs for the unitary group admit anticoncentration. We prove that the 2D translation-invariant, constant depth architectures of quantum simulation form approximate $2$-designs in a specific sense, thus obtaining a significantly stronger result. 
Our work provides the strongest evidence to date that Hamiltonian quantum simulation architectures are classically intractable.  
\end{abstract}

\maketitle

Quantum computers and simulators are expected to greatly outperform classical devices when solving certain tasks.
 Famous examples of such tasks include the factorization of integers~\cite{shor_polynomial-time_1999} and the simulation of Hamiltonian dynamics \cite{feynman_simulating_1982,lloyd_universal_1996,BlochSimulation}. 
While the importance of these results can hardly be overemphasized, the realization of devices capable of outperforming classical computers for practical problems appears to be far out of reach for current technology \cite{Reiher7555,campbell_applying_2018,litinski_game_2019,gidney_how_2019}. 
A key milestone in the development of quantum computers and simulators is therefore to assess the possibility of performing computations that cannot be efficiently reproduced by a classical computer, a state of affairs referred to as a quantum
advantage or ``quantum supremacy''.
Besides being a technological breakthrough, 
such an experiment  
can be regarded as the first experimental violation of the Extended Church-Turing thesis, and will be a watershed moment in the history of computation. 

In order to conclusively demonstrate the superior computational power of quantum devices we must hold ourselves to a particularly high standard of evidence.  
While several examples of large-scale experimental quantum simulators that outperform certain classical algorithms have been reported~\cite{Trotzky,Schreiber-pnas-2015,choi_exploring_2016,bernien_probing_2017,Monroe}, 
to have high confidence that these devices are providing bona fide quantum speedups, 
we must give evidence that \emph{no classical algorithm} will ever be able to solve this problem efficiently. 
This has been advanced by recent work providing evidence for the computational hardness of certain sampling tasks that are both robust against some errors and feasible on near-term quantum devices~\cite{aaronson_bosonsampling_2010,Bremner}.  
Indeed, these hardness-of-sampling results are widely viewed as the most promising avenue to achieving a provable quantum advantage in the near future.

However, several key open problems are outstanding for this approach.
First, in the near future only imperfect and small universal quantum devices are becoming available in laboratories around the world~\cite{boixo2016characterizing,wang2017high}.  
A key open question is hence to identify a task that is both feasible on the large-scale quantum hardware that is available today \emph{and} for which complexity-theoretic evidence for hardness can be provided. 
Second, it is crucial that the achievement of an advantage is verified~\cite{harrow_quantum_2017}, a daunting task~\cite{hangleiter_sample_2019,gogolin_boson-sampling_2013,aaronson_bosonsampling_2013,CertificationReview} 
given its computational hardness and the sheer size of the sample space. 
Finally, the hardness results rely on unproven albeit plausible conjectures beyond standard complexity-theoretic assumptions.
Answering these questions requires building new tools at the interface 
of quantum many-body physics and computational complexity theory.

Quantum advantage schemes for quantum simulators that involve the constant-time evolution of translation-invariant Ising Hamiltonians have been proposed in Refs.~\cite{gao_quantum_2017,bermejo-vega_architectures_2018}.
Those architectures show a provable quantum advantage under similar assumptions as in Refs.~\cite{aaronson_bosonsampling_2010,bremner_average-case_2016} for large-scale quantum simulators that are available and outperform known classical algorithms already today~\cite{Trotzky,Schreiber-pnas-2015,choi_exploring_2016,Monroe}.
At the same time, they admit an efficient and rigorous certification protocol that only requires partial trust in single-qubit measurements~\cite{hangleiter_direct_2017,bermejo-vega_architectures_2018}.
This constitutes a key step towards facilitating the large scale experimental realization of quantum advantages and closing the certification loophole. 
This approach is entirely measurement-based and hence different from gate-based proposals~\cite{aaronson_bosonsampling_2010,Bremner,boixo_characterizing_2018,bouland_quantum_2018}.

The central open problem in the complexity theoretic argument for hardness of all such sampling schemes revolves around its robustness to noise~\cite{aaronson_bosonsampling_2010,Bremner}. 
This argument builds upon ideas from Ref.~\cite{terhal_adaptive_2004} that shows the hardness of \emph{exact sampling} for certain models based on commonly believed complexity assumptions (i.e., a generalization of $\mathsf{P} \neq \mathsf{NP}$ called non-collapse of the polynomial hierarchy) using a technique called Stockmeyer's algorithm \cite{stockmeyer_approximation_1985}. 
To make these results noise-robust, the key idea, developed by \cite{aaronson_bosonsampling_2010}, is to make use of average, rather than worst-case complexity.
In particular, they showed that noise robust sampling hardness would follow if one could show that it is very hard to \emph{approximate} the output probabilities of \emph{most} randomly chosen quantum circuits.

Proving this key conjecture called \emph{approximate average-case hardness} has remained elusive for all known practical schemes that are amenable to the Stockmeyer proof strategy. 
Aaronson and Arkhipov have also observed, however, that evidence for approximate average-case hardness can be provided using certain properties of the sampled distribution:
First, \emph{exact average-case hardness} constitutes a necessary criterion for the approximate version thereof.
Second, the so-called \emph{anticoncentration} property reduces the notion of approximation that is necessary for the hardness proof to a more plausible one that involves only relative errors. 
Indeed, both of these loopholes have recently been closed for the prominent universal circuit sampling proposal~\cite{bouland_quantum_2018,brandao_local_2016,hangleiter_anticoncentration_2018,harrow_approximate_2018}.

In this work, we close both loopholes simultaneously for the simple quantum simulation architecture on a square lattice of Ref.~\cite{bermejo-vega_architectures_2018} -- thus bringing it up to the highest standard to date for evidence for computational intractability. 

First, we prove \emph{anticoncentration} for this model. 
In fact, our main contribution is to establish an even stronger property than anticoncentration, namely, that the effective circuits generated by the architectures mimic Haar-randomness up to second moments -- surprisingly -- already on square ($n \times \mathcal O(n)$) lattices.
In precise terms, we prove that these circuits form an approximate $2$-design. 
\begin{restatable}[Approximate $2$-design]{thm}{design}\label{theorem:t-design}
	Consider the architectures of quantum simulation with local rotation angles chosen uniformly from $[0, 2\pi)$ on an $n\times m$ lattice with $m\in\mathcal{O}\left(4n+\log\left(1/\varepsilon\right)\right)$.
	When measuring the first $m-1$ columns in the $X$-basis, the effective unitary acting on the last column forms a relative $\varepsilon$-approximate unitary $2$-design.
\end{restatable}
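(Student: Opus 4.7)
The plan is to exploit the measurement-based quantum computation (MBQC) viewpoint to recast the effective dynamics on the last column as a shallow random circuit in $1$D, and then to invoke (or re-derive) the standard fact that linear-depth local random circuits form approximate unitary $2$-designs.

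First, I would work out the explicit form of the effective unitary. Because the architecture consists of diagonal commuting entanglers together with on-site $Z$-rotations with random angles, and because measurements are in the $X$-basis, each column of measurements realises—up to Pauli byproducts—a teleportation step in which a layer of random single-qubit $Z$-rotations, sandwiched between Hadamards and nearest-neighbour controlled-phase gates, is applied to the column to the right. Iterating this over $m-1$ columns, the effective unitary on the final column is a brick-work circuit of depth $\Theta(m)$ on $n$ qubits, whose two-qubit blocks are of a fixed commuting form but parametrised by independent uniform angles in $[0,2\pi)$. Since the Pauli byproducts commute with the moment superoperator in the relevant way, they can be absorbed without affecting the $2$-design question.

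Second, I would reduce the $2$-design statement to a spectral-gap estimate for the moment operator $T_2 = \mathbb{E}[U^{\otimes 2,2}]$ associated with one layer of the brick-work ensemble. Concretely, by a standard martingale/detectability-lemma argument one has $\|T_2^d - T_{2,\mathrm{Haar}}\|_\infty \le (1-\Delta)^d$ where $\Delta$ is the gap of a single local block acting on the subspace orthogonal to the Haar-invariant one. The target depth $m\in\mathcal O(4n+\log(1/\varepsilon))$ and the factor $4$ in front of $n$ strongly suggest that this is where the estimate enters, via Brandão–Harrow–Horodecki-style gap bounds together with the conversion from $\infty$-norm approximation to the \emph{relative} notion of $\varepsilon$-approximate design (which typically costs a $4n\log d$ additive term).

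Third, and this I expect to be the main technical obstacle, I would have to show that the \emph{restricted} two-qubit ensemble generated by a single brick—namely random-angle $ZZ$-phase gates dressed by Hadamards and local random $Z$-rotations—has a strictly positive spectral gap on its local second-moment operator. Unlike the Haar measure on $U(4)$, this is a low-dimensional parametric family, so the projector onto the commutant of the Haar measure must be computed explicitly (e.g.\ on the basis furnished by Schur--Weyl duality for two copies) and one must verify that it has no eigenvalue $1$ outside the $2$-design fixed space. Continuity of the gap together with universality of the gate set (already established for the architecture) would imply a constant gap $\Delta=\Omega(1)$, independent of $n$.

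Finally, I would combine the single-brick gap with a $1$D-to-$2$-design amplification. Using the fact that the brick-work structure has depth two per layer on a $1$D chain of $n$ qubits, a standard gap-amplification inequality yields a $2$-design gap of order $1/n$ for the full column-to-column channel, so that $O(n)$ columns bring the additive error to a constant and an extra $O(n+\log(1/\varepsilon))$ columns convert this to a relative $\varepsilon$-approximate design. Tracking the constants gives the claimed scaling $m\in\mathcal O(4n+\log(1/\varepsilon))$. The non-routine part is the explicit gap estimate of step three; once it is in hand, the rest follows from established machinery.
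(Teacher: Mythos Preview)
Your outline follows the right high-level strategy (MBQC reduction, moment operator, gap bound), but it misidentifies the circuit structure, and this causes step three to fail as written.

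A single column of the architecture does \emph{not} yield a brickwork layer of two-qubit bricks: one layer is $E\prod_i e^{i\beta_i Z_i}$ with $E$ a \emph{global} product of $CZ$'s and Hadamards.  The paper groups \emph{three} layers and strips the fixed unitaries $E,E^2$ (they preserve $g(v,2)$), obtaining an ensemble generated by single-qubit $e^{i\varphi X_i},e^{i\varphi Z_i}$ together with the \emph{three-local} rotations $e^{i\varphi\,Z_{i-1}X_iZ_{i+1}}$ in the bulk (two-local only at the boundary).  Crucially, this bulk gate set is \emph{not} universal on its three-qubit support: on $(\mathbb C^2)^{\otimes 3}$ it fixes not only the two Schur--Weyl vectors $|\psi_0\rangle,|\psi_1\rangle$ of the Haar commutant but also a third vector $|\psi_\phi\rangle=|\psi_0\rangle-|\psi_1\rangle$.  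Thus your key check in step three---``verify that it has no eigenvalue $1$ outside the $2$-design fixed space''---would return a negative answer for every bulk brick, and the single-brick-gap-plus-amplification route stalls.

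The paper's workaround is the substantive content of the proof.  Universality is established \emph{globally} by using the two-local boundary gate to propagate universality inward; this guarantees that the full moment operator has the correct fixed space even though the bulk pieces do not.  The gap is then not obtained brick-by-brick but via the (generalised) detectability lemma, which reduces $g(v_n,2)$ to the spectral gap of a frustration-free Hamiltonian $H_n$ built from the local projectors $P^X_i,P^Z_i,P^{ZXZ}_i$.  That gap is lower-bounded by Nachtergaele's martingale method, which requires explicitly computing the bulk ground space (three-dimensional, spanned by $|\psi_0\rangle^{\otimes k},|\psi_1\rangle^{\otimes k},|\psi_\phi\rangle^{\otimes k}$) and checking the overlap condition~(iii) for $l=6$.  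The outcome is a \emph{constant} gap $\Delta(H_n)\ge \Delta(H^B_7)/32$, not $\Theta(1/n)$; note also that with a $1/n$ gap your final arithmetic would give $m=\Theta(n^2)$ rather than $\Theta(n)$ for a \emph{relative} design, since one needs $(1-1/n)^m\le 2^{-4n}\varepsilon$.
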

\addtocounter{theorem}{1}

Numerical evidence provided in Ref.~\cite{bermejo-vega_architectures_2018} suggests that anticoncentration happens already for $n \times n$ lattices. 
And in fact, the emergence of relatively $\varepsilon$-approximate $2$-designs is a much more powerful result than mere anticoncentration. 
First, observe that the $2$-design property directly implies anticoncentration \cite{hangleiter_anticoncentration_2018,bouland2018CCCofCCCs,mann_complexity_2017,harrow_approximate_2018}. 
Second, we note that generating the moments of the Haar measure is considered even stronger evidence for hardness of classical simulation than mere anticoncentration~\cite{boixo2016characterizing}. 
What is more, $2$-designs in fact find a number of applications such as decoupling~\cite{szehr_decoupling_2013,hirche_decoupling_2013,dupuis_decoupling_2014} and 
randomized benchmarking~\cite{emerson_benchmarking_2009} and robust quantum gate tomography~\cite{PhysRevLett.121.170502}.

But already rigorously establishing anticoncentration is a difficult endeavour, in our case particularly so due to the low depth involved.
Indeed, for the case of random circuit sampling anticoncentration holds at depth $O(\sqrt{n})$ on a $\sqrt{n} \times \sqrt{n} $ 2D grid \cite{harrow_approximate_2018}, and at depth $O(n)$ in 1D \cite{brandao_efficient_2016,hangleiter_anticoncentration_2018}, but it is not expected for constant depth~\cite{boixo_characterizing_2018,bremner_achieving_2017}.
With our work, we show anticoncentration at much lower -- constant -- depth but in a different model of random circuits which obey a form of translation invariance. 
Our result implies the first non-trivial anticoncentration bound for constant-time, translation-invariant dynamics on a square lattice, going significantly beyond direct measurement-based embeddings~\cite{miller_quantum_2017,hangleiter_anticoncentration_2018,gao_quantum_2017,mezher_efficient_2018,mezher2019}.
It can also be seen as an analytical proof of the two-design property, which was numerically explored in a similar measurement-based scheme~\cite{brown_quantum_2008_1}.

Our proof of the $2$-design property is inspired by and significantly develops further a recent result of \citet{brandao_local_2016} that 
shows that random universal circuits form an approximate $t$-design. 
Explicitly, we exploit the connection to gaps of frustration-free Hamiltonians~\cite{brown_convergence_2010,brandao_local_2016}. 
We follow the general strategy of \citet{brandao_local_2016}, but every individual step of the proof requires new methods, which might be of independent interest.
In particular, we prove that the effective circuits generated by the translation-invariant time evolution are computationally universal -- a non-trivial task given that those circuits are \emph{not locally universal}. 
We then exploit a recent generalization of the detectability lemma \cite{anshu_detectability_2016} and the famous martingale method pioneered by~\citet{nachtergaele_gap_1994} to lower-bound the spectral gap of the detectability Hamiltonian.

As our second main contribution, we prove average-case hardness for \emph{exactly evaluating} the output probabilities of the architectures. 
To do so, we extend a recent result of \citet{bouland_quantum_2018} showing exact average-case hardness of universal circuit sampling \cite{boixo2016characterizing} to the translation invariant case. 
Informally, we obtain the following result: 
\begin{theorem}[Average-case hardness]\label{theorem:informalaverage-case}
 It is \#P-hard to exactly compute any $3/4+1/\poly(N)$ fraction of the output probabilities of the architectures of quantum simulation.
\end{theorem}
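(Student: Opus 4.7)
The plan is to adapt the polynomial-interpolation strategy of Bouland, Fefferman, Nirkhe, and Vazirani (BFNV) from universal random circuit sampling to the translation-invariant architecture of Ref.~\cite{bermejo-vega_architectures_2018}. At the high level, I would set up a worst-case-to-average-case reduction: assuming an oracle $\mathcal{O}$ that outputs the correct probability of a $3/4 + 1/\poly(N)$ fraction of architectures (drawn with uniform rotation angles), I would use $\mathcal{O}$ together with polynomial interpolation and Berlekamp-Welch error correction to exactly compute the output probability of an adversarially chosen instance whose probability is known to be \#P-hard.

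First I would establish the required worst-case hardness anchor, i.e.\ fix a concrete choice $\vec\theta^{\star}$ of rotation angles in the architecture such that computing $p^{\star} = |\langle 0|^{\otimes N} U(\vec\theta^{\star}) |0\rangle^{\otimes N}|^2$ exactly is \#P-hard. This should follow by combining the (post-selected) universality of the architecture already used for exact hardness of sampling in Ref.~\cite{bermejo-vega_architectures_2018} with the standard Aaronson $\mathsf{PostBQP} = \mathsf{PP}$ argument, or equivalently by embedding a partition-function instance known to be \#P-hard into the Ising-plus-rotation structure.

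Second, I would construct the interpolation path. Given $\vec\theta^{\star}$, sample a uniformly random angle vector $\vec\theta$ and define $\vec\theta(t) = \vec\theta^{\star} + t(\vec\theta - \vec\theta^{\star})$, yielding an analytic family of architectures $U(\vec\theta(t))$ for $t \in [0,1]$. The amplitude $\langle 0|U(\vec\theta(t))|0\rangle$ is an analytic function whose Taylor series, truncated at degree $K = \poly(N)$, approximates the true amplitude up to exponentially small error on a bounded disk; squaring yields a degree-$2K$ polynomial approximation to $p(t)$. Choosing $t$ small and appropriately distributed keeps the marginal distribution of $\vec\theta(t)$ within $1/\poly(N)$ total variation distance of the uniform distribution on $[0,2\pi)^{\#\text{angles}}$, so the oracle $\mathcal{O}$ still returns the correct value at $\vec\theta(t)$ with probability at least $3/4 + 1/\poly(N)$ minus this TVD slack. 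Querying $\mathcal{O}$ at $\poly(N)$ well-spaced nodes $t_1,\dots,t_L$ then yields a list of claimed evaluations of a low-degree polynomial at which at most a $1/4 - 1/\poly(N)$ fraction can be corrupted; Berlekamp-Welch (or the BFNV rational-function reconstruction) recovers the polynomial and hence $p(0) = p^{\star}$ exactly.

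The main obstacle I anticipate is the simultaneous control of three small parameters: the Taylor-truncation error, the statistical distance between the perturbed and uniform angle distributions, and the $1/\poly(N)$ slack in the success fraction. Unlike BFNV, where every gate has independently uniform parameters on $U(4)$, here the nontrivial geometric structure of the architecture (fixed Ising couplings plus independent single-qubit rotations) means that only the rotation angles move along the interpolation path; this should still yield the needed closeness to the uniform product distribution, but the analytic-continuation bounds on the Taylor remainder must be reworked for the specific gate set in order to keep $K = \poly(N)$. Once these three error budgets are balanced, the reduction goes through, yielding \#P-hardness on any $3/4 + 1/\poly(N)$ fraction of instances.
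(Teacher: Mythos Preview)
Your overall strategy is the same as the paper's: anchor on worst-case \#P-hardness of the architecture (which the paper imports from Ref.~\cite{bermejo-vega_architectures_2018}), build a one-parameter polynomial family via Taylor-truncated rotations, query the oracle at many nearby parameter values, and use Berlekamp--Welch to reconstruct the polynomial. The paper's appendix carries this out essentially line-for-line with the BFNV template, so you have identified the right argument.

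There is, however, one concrete error in your setup. With your parametrization $\vec\theta(t)=\vec\theta^{\star}+t(\vec\theta-\vec\theta^{\star})$ you have $\vec\theta(0)=\vec\theta^{\star}$ (the fixed worst-case instance) and $\vec\theta(1)=\vec\theta$ (the uniformly random instance). Hence for \emph{small} $t$ the marginal of $\vec\theta(t)$ is concentrated near the fixed point $\vec\theta^{\star}$ and is nowhere near uniform; your claim that ``choosing $t$ small \dots\ keeps the marginal distribution of $\vec\theta(t)$ within $1/\poly(N)$ total variation distance of the uniform distribution'' is therefore false as written. The paper orients the path the other way: it writes the gates as $e^{i\gamma_i Z}\bigl(\sum_{k\le K}(i\theta(\beta_i-\gamma_i)Z)^k/k!\bigr)$, so that $\theta=0$ is the Haar-random instance and $\theta=1$ is (up to the truncation error) the worst-case instance; one then queries at $\theta_j\in[0,1/\poly(N))$, where the marginal really is close to uniform, and \emph{extrapolates} the exactly reconstructed polynomial to $\theta=1$. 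Since Berlekamp--Welch recovers the polynomial exactly under the promise, extrapolation is free. You should either flip your parametrization or query near $t=1$ and extrapolate to $t=0$.

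Two smaller points the paper handles that you should add. First, the reduction to a single fixed output string uses the \emph{hiding property}: an $X$ on any output wire can be pushed back through the circuit to a $\pi/2$ shift of one rotation angle, which is absorbed by translation invariance of the Haar measure on $S^1$. Second, the Taylor-truncated gates are not exactly unitary, so even perfect polynomial reconstruction yields $p_0$ of the truncated circuit, not of the worst-case instance; the paper closes this by invoking that worst-case hardness already holds for $2^{-\poly(n)}$-additive approximation (their Lemma on hardness of approximation), and bounding the truncation discrepancy via the Feynman path-integral expansion. Your remark that the Taylor bounds ``must be reworked'' overstates the difficulty: here the random gates are single-qubit $e^{i\varphi Z}$ rotations, so the remainder estimates are in fact simpler than in BFNV.
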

Our work also demonstrates that these average-case hardness methods are applicable to many other sampling architectures, such as continuous forms of IQP circuits~\cite{Bremner} and other measurement-based schemes \cite{gao_quantum_2017,miller_quantum_2017,mezher2019}.

\paragraph*{Architectures of quantum simulation showing a quantum speedup.}
Our new analysis 
builds on the proposal for a quantum speedup from Ref.~\cite{bermejo-vega_architectures_2018}, which we recall here.
It is a scheme that is much reminiscent of a quench-type quantum simulation, involving the 
evolution of a product state under a nearest-neighbour Hamiltonian for a constant time.
Here, we define a slightly modified protocol, as the angles are drawn Haar-randomly from $S^1$ and not discretely as in Ref.~\cite{bermejo-vega_architectures_2018}:
\begin{itemize}
	\item \textbf{Preparation}: Arrange $N \coloneqq nm$ qubits on an $n$-row $m$-column lattice $\mathcal{L}$, with vertices $V$ and edges $E$. 
	Prepare the product state vector
	\begin{equation}\label{eq:preparation}
	|\psi_{\beta}\rangle=\bigotimes_{i=1}^N\left(|0\rangle+e^{\ii\beta_i}|1\rangle\right),\, \beta\in[0,2\pi)^N,
	\end{equation}
	for $\beta$ chosen randomly from the Haar measure on $(S^1)^{\times N}$. 
	$S^1$ denotes the circle $[0,2\pi]/\sim$, where $\sim$ identifies $0$ and $2\pi$.
	\item \textbf{Time evolution}: Let the system evolve for constant time $\tau=1$ under a nearest-neighbour and translation-invariant Ising Hamiltonian
	\begin{equation}
	H \coloneqq \sum_{(i,j)\in E} J_{i,j} Z_iZ_j-\sum_{i\in V}h_iZ_i,
	\end{equation}
	with constants $J_{i,j}$ and $h_i$ chosen to implement a unitary $e^{\ii H}$.
	This amounts to a constant depth circuit.
	\item \textbf{Measurement}: Measure all qubits in the $X$ basis.
\end{itemize}

This protocol can be translated to the setting of deep quantum circuits via measurement based quantum computing. 
In particular, it can be proven similarly to Ref.~\cite{bermejo-vega_architectures_2018} that the above architecture is equivalent to a circuit with randomly drawn gates acting on the last column of $n$ qubits.
We can express this random circuit with Haar-randomly drawn angles $\tilde{\beta}_i^j$ as
\begin{equation}\label{eq:circuitpicture}
	U_\beta^m =E\left(\prod_{i=1}^n e^{\ii\tilde{\beta}^{m-1}_iZ_i}\right)E...E\left(\prod_{i=1}^n e^{\ii\tilde{\beta}^1_i Z_i}\right),
\end{equation}
 with the global entangling unitary
\begin{equation}\label{eq:def-E}
E \coloneqq \left(\prod_{i=n}^n H_i\right)\left(\prod_{i=1}^{n-1}CZ_{i,i+1}\right),
\end{equation}
where $H_i$ denotes the Hadamard gate acting on the $i$th qubit and $CZ$ denotes the controlled $Z$ gate.
Based on the conjectures of approximate average-case hardness and anticoncentration, the protocol was shown to yield a superpolynomial speed up with high probability~\cite{bermejo-vega_architectures_2018} using the techniques of Ref.~\cite{aaronson_bosonsampling_2010}.

\paragraph*{The 2-design property and anticoncentration.}
We now prove anticoncentration for the architecture in Ref.~\cite{bermejo-vega_architectures_2018} with 
continuous angle choices.
For a proof in full technical detail, we refer to Appendix~\ref{appendix:proof-2-design}.
Consider a distribution $v$ on the unitary group $\mathbb{U}(2^n)$ acting on $n$ qubits and the corresponding output probabilities $|\langle x|U|0\rangle|^2$ for obtaining $x\in\{0,1\}^n$. 
We say that $v$ \textit{anticoncentrates} if there exist constants $\alpha, \beta>0$ such that for any fixed $x\in\{0,1\}^n$ 
\begin{equation}
	\Pr_{U\sim v}\left(|\langle x|U|0\rangle|^2\geq \frac{\alpha}{2^n}\right)\geq\beta .
\end{equation}
Instead of proving this directly, we show a stronger property: 
the architectures of quantum simulation define relative $\varepsilon$-approximate $2$-designs in the sense that the random circuit $\tilde U_\beta^m$ relative $\varepsilon$-approximate unitary $2$-design. 
That is, the deep random circuit $U_\beta^m$ approximates the first and second moments of the Haar measure up to a relative error $\varepsilon$: 
\begin{definition}[Relative $\varepsilon$-approximate unitary $t$-designs]
	Let $v$ be a distribution on $\mathbb{U}(N)$. Then, $v$ is an $\varepsilon$-approximate $t$-design if
	\begin{equation}
	(1-\varepsilon)\Delta_{\mu_{\rm Haar},t}\preccurlyeq \Delta_{v,t}\preccurlyeq (1+\varepsilon)\Delta_{\mu_{\rm Haar},t},
	\end{equation}
	where the superoperator $\Delta_{v,t}$ is defined via
	\begin{equation}
	\Delta_{v,t}(\rho) \coloneqq \int_{\mathbb{U}(N)}U^{\otimes t}\rho \left(U^{\dagger}\right)^{\otimes t}\mathrm{d}v(U),
	\end{equation}
	and $A\preccurlyeq B$ if and only if $B-A$ is completely positive.
\end{definition}
It has been observed that this property together with the \textit{Paley-Zygmund inequality} yields anticoncentration \cite{hangleiter_anticoncentration_2018,bouland2018CCCofCCCs,mann_complexity_2017,harrow_approximate_2018}:
\begin{lemma}[\cite{hangleiter_anticoncentration_2018}]
Let $v$ be an $\varepsilon$-approximate unitary $2$-design on $\mathbb{U}(2^n)$. Then, $v$ anticoncentrates in the sense that for $0\leq \alpha\leq 1$ and for all $x\in\{0,1\}^n$ we have
\begin{equation}
\Pr_{U\sim v}\left(|\langle x|U|0\rangle|^2>\frac{\alpha(1-\varepsilon)}{N}\right)\geq \frac{(1-\alpha)^2(1-\varepsilon)^2}{2(1+\varepsilon)}.
\end{equation}
\end{lemma}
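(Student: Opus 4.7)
The plan is to recognize that this is exactly the setup for the Paley--Zygmund second-moment inequality: for any non-negative random variable $Z$ with finite variance and $0\le\alpha\le 1$,
\[
\Pr(Z \ge \alpha\,\mathbb{E}[Z]) \;\ge\; (1-\alpha)^2\,\frac{(\mathbb{E}[Z])^2}{\mathbb{E}[Z^2]}.
\]
Taking $Z \coloneqq |\langle x|U|0\rangle|^2$ with $U\sim v$, the task reduces to lower-bounding $\mathbb{E}_v[Z]$ and upper-bounding $\mathbb{E}_v[Z^2]$, and then choosing the Paley--Zygmund threshold so that the stated inequality falls out.

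The key step is to express the $k$-th moment through the moment superoperator,
\[
\mathbb{E}_{U\sim v}[Z^k] \;=\; \tr\!\bigl[(|x\rangle\langle x|)^{\otimes k}\,\Delta_{v,k}\bigl((|0\rangle\langle 0|)^{\otimes k}\bigr)\bigr],
\]
and to feed in the hypothesis that both $\Delta_{v,t}-(1-\varepsilon)\Delta_{\mu_{\rm Haar},t}$ and $(1+\varepsilon)\Delta_{\mu_{\rm Haar},t}-\Delta_{v,t}$ are completely positive. Evaluated on the positive semidefinite input $(|0\rangle\langle 0|)^{\otimes k}$ and paired against the positive semidefinite test operator $(|x\rangle\langle x|)^{\otimes k}$, complete positivity descends immediately to the scalar sandwich
\[
(1-\varepsilon)\,\mathbb{E}_{\mu_{\rm Haar}}[Z^k]\;\le\;\mathbb{E}_v[Z^k]\;\le\;(1+\varepsilon)\,\mathbb{E}_{\mu_{\rm Haar}}[Z^k].
\]
Plugging in the standard Haar values $\mathbb{E}_{\mu_{\rm Haar}}[Z]=1/N$ and $\mathbb{E}_{\mu_{\rm Haar}}[Z^2]=2/(N(N+1))\le 2/N^2$ then yields $\mathbb{E}_v[Z]\ge(1-\varepsilon)/N$ and $\mathbb{E}_v[Z^2]\le 2(1+\varepsilon)/N^2$. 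The $k=1$ case needs the short side remark that a relative $\varepsilon$-approximate $2$-design is automatically a relative $\varepsilon$-approximate $1$-design, which follows from the identity $\Delta_{v,1}(\rho)=\tr_2[\Delta_{v,2}(\rho\otimes\id/N)]$ together with complete positivity of the partial trace.

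Since $\alpha(1-\varepsilon)/N\le\alpha\,\mathbb{E}_v[Z]$, the event $\{Z\ge\alpha\,\mathbb{E}_v[Z]\}$ is contained in $\{Z>\alpha(1-\varepsilon)/N\}$, so Paley--Zygmund together with the moment bounds above gives
\[
\Pr_v\!\Bigl(Z>\tfrac{\alpha(1-\varepsilon)}{N}\Bigr)\;\ge\;(1-\alpha)^2\,\frac{(\mathbb{E}_v[Z])^2}{\mathbb{E}_v[Z^2]}\;\ge\;\frac{(1-\alpha)^2(1-\varepsilon)^2}{2(1+\varepsilon)},
\]
which is the claim. There is no deep obstacle once Paley--Zygmund is singled out as the driver; the only real subtlety is verifying that the relative CP-ordering $\preccurlyeq$ on superoperators really does descend to a two-sided \emph{scalar} inequality between the moments, which works precisely because both the input state and the test projector are positive semidefinite.
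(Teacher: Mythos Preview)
Your proof is correct and matches the approach the paper points to: the lemma is quoted from Ref.~\cite{hangleiter_anticoncentration_2018} and the paper explicitly notes that it follows from the Paley--Zygmund inequality, which is exactly what you apply together with the first- and second-moment bounds inherited from the relative $\varepsilon$-approximate $2$-design definition. The only cosmetic fix is to invoke the strict form of Paley--Zygmund, $\Pr(Z>\alpha\,\mathbb{E}[Z])\ge(1-\alpha)^2(\mathbb{E}[Z])^2/\mathbb{E}[Z^2]$, so that the containment $\{Z>\alpha\,\mathbb{E}_v[Z]\}\subseteq\{Z>\alpha(1-\varepsilon)/N\}$ follows cleanly from $\mathbb{E}_v[Z]\ge(1-\varepsilon)/N$.
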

Thus, if the deep circuit family $\{U_\beta^m\}_\beta$ forms an $\varepsilon$-approximate unitary $2$-designs the remaining quantum state on the last column anticoncentrates. 
But this already implies that the full output distribution $p_\beta(x) \coloneqq |\langle x|\exp(\ii H) |\psi_{\beta}\rangle|^2$ anticoncentrates by a property of measurement-based quantum computation:
Let $x_L \in \{ 0,1\}^{n (m-1)}$ be a string of outcomes obtained from measuring the first $m-1$ columns, and $x_R \in \{0,1\}^n$ a string of outcomes obtained from measuring the last column. 
Then $p(x) \equiv p(x_L, x_R) = p(x_R|x_L) p(x_L) = p(x_R|x_L) /2^{n(m-1)}$. 
Hence, $p(x)$ anticoncentrates, if $p(x_R|x_L)$ does. 
But this is proven by the $2$-design property of the depth-$(m-1)$ circuit  $U _\beta^m$.

Notice that the \emph{relative-error} notion of approximate $2$-designs which we use here is distinct from an \emph{additive-error} definition of $t$-designs, which is much weaker and in particular would not suffice to prove anticoncentration.

\begin{proof}[Proof of Theorem~\ref{theorem:t-design} (outline)]
	
Proving Theorem~\ref{theorem:t-design} amounts to proving the relative $\varepsilon$-approximate $2$-design property of the depth-$m$ random circuit family $\{U^m_\beta\}_\beta$.
In the proof, we follow a strategy for showing this type of result pioneered in Refs.~\cite{harrow_random_2009,brandao_local_2016}:
the key idea behind this strategy is to successively reduce the $2$-design property of the full circuit to a simpler property, namely the spectral gap of a certain frustration-free Hamiltonian~\cite{brown_convergence_2010}.

More precisely, we proceed in three steps.
In the first step, we reduce the $2$-design property to a so-called $2$-copy tensor product expander (TPE) property defined as an upper bound on the quantity $g(v,2)$, which is defined as follows. 
Given a distribution $v$ on the unitaries, let 
\begin{equation*}
 g(v,2)\coloneqq\left|\left| \ \int \limits_{\mathbb{U}(2^n)}U^{\otimes 2,2}\mathrm{d}v(U)-\int\limits_{\mathbb{U}(2^n)}U^{\otimes 2,2}\mathrm{d}\mu_{\rm Haar}(U)\right|\right|_{\infty}
\end{equation*}
with $U^{\otimes 2,2} \coloneqq U\otimes U\otimes U^*\otimes U^*$.
Tensor-product expanders are similar to approximate $2$-designs in the sense that if 
\begin{equation}
g(v,2)\leq 2^{-4n}\varepsilon,
\end{equation}
then $v$ is a relative $\varepsilon$-approximate $2$-design~\cite{brandao_local_2016}.
$g(v,2)$ has the convenient property that $g(v^{*k},2)\leq g(v,2)^k$, where $v^{*k}$ is the $k$-fold convolution.
The $k$-fold convolution corresponds to concatenations of the form $U_1 \cdots U_k$ with each $U_i$ drawn from $v$.
Since the circuit $U_\beta^m$ is a concatenation of $m-1$ unitaries that are distributed according to the same measure, this property allows to reduce the proof to proving the TPE property for one or more steps of the circuit. 
 
To prove the $2$-design property for~\eqref{eq:circuitpicture} an obvious but ultimately not fruitful approach is thus to choose $v$ to be one layer $E\left(\prod_{i=1}^{n}e^{\ii\tilde{\beta}_iZ_i}\right)$ in the circuit~\eqref{eq:circuitpicture}.
Instead, we choose three such layers and rewrite it in the form $E^2UE$ with fixed global unitaries $E$ and $E^2$. We obtain
 \begin{align}
 U=&\left(e^{\mathrm{i}\varphi^{ZXZ_1} ZXZ_1}... e^{\mathrm{i}\varphi^{ZXZ}_n ZXZ_n}\right)\nonumber\\
 &\left(e^{\mathrm{i}\varphi^Z_1 Z_1}...e^{\mathrm{i}\varphi^Z_n Z_n}\right)\left(e^{\mathrm{i}\varphi^X_1 X_1}... e^{\mathrm{i}\varphi^X_n X_n}\right)\label{eq:unitaries}
 \end{align}
 for Haar-randomly drawn $\varphi^X_i$, $\varphi^Z_i$ and $\varphi^{ZXZ}_i$, with the notation $ZXZ_i=Z_{i-1}\otimes X_i\otimes Z_{i+1}$ for $2\leq n-1$, $ZXZ_n=Z_{n-1}\otimes X$ and $ZXZ_1=Z_1\otimes X_2$. 
 The correct intuition here is that fixed unitaries do 
 not alter the degree of randomness and that we can simply remove $E$ and $E^2$.
 Indeed, we can prove for some any distribution $v$ and general unitaries $V$ and $W$ that $g\left(VvW,2\right)= g(v,2)$.  
 It thus suffices to bound $g(v_n,2)$, where $v_n$ denotes the distribution that $U$ is drawn from.

 However, for such a bound to be feasible it is crucial that the distribution $v$ contains a universal gate set.
 We prove this using the fact that $e^{\ii\alpha Z}$-gates and $e^{\ii\alpha X}$-gates are dense in the single qubit unitaries.
 Furthermore, any additional two-qubit entangling gates suffices to obtain full universality~\cite{brylinski_universal_2001,bremner_universal_2002}.
 We have entangling two-qubit gates on the boundary and entangling three-qubit gates in the bulk. 
 Using the the boundary unitaries, we can propagate the universality into the bulk.
 
In the next step, we reduce the tensor-product expander property of $U$, that is, an upper bound on $g(v_n,2)$ to a spectral gap of a certain frustration-free Hamiltonian. 
To do so, we apply a generalized version~\cite{anshu_detectability_2016} of the so-called \textit{detectability lemma}~\cite{aharonov_detectability_2008}, which yields the bound 
 \begin{equation}
 g(v_{n},2)\leq \frac{1}{\sqrt{\frac{\Delta(H_n)}{9}+1}},
 \end{equation}
 where $\Delta(H_n)$ denotes the \textit{spectral gap} of the local Hamiltonian $H_n$, i.e., the difference 
 between its first and second eigenvalue.
This Hamiltonian is defined as
 \begin{equation}
 H_n \coloneqq \sum_{i=1}^n\left(\mathbb{1}-P_i^{X}\right)+\sum_{i=1}^n\left(\mathbb{1}-P_i^Z\right)+\sum_{i=1}^n\left(\mathbb{1}-P_i^{ZXZ}\right),
 \end{equation}
 with local orthogonal projectors 
 \begin{align*}
 P^X_i& \coloneqq \frac{1}{2\pi}\int \left(e^{\mathrm{i}\varphi^X_i X_i}\right)^{\otimes 2,2}\mathrm{d}\varphi^X_i,\\
 P^Z_i& \coloneqq \frac{1}{2\pi}\int \left(e^{\mathrm{i}\varphi^Z_i Z_i}\right)^{\otimes 2,2}\mathrm{d}\varphi^X_i,\\
 P^{ZXZ}_i& \coloneqq \frac{1}{2\pi}\int \left(e^{\mathrm{i}\varphi^{ZXZ}_i ZXZ_i}\right)^{\otimes 2,2}\mathrm{d}\varphi^{ZXZ}_i.
 \end{align*}
 
In the last step of the proof, we find a lower bound to the spectral gap of $H_n$. 
Proving that a local Hamiltonian has a spectral gap in the thermodynamic limit is in general a highly non-trivial task.
In fact, deciding whether a general Hamiltonian is gapped in 
the thermodynamic limit is known to be undecidable~\cite{cubitt_gap_2015,bausch_gap_2018}.
To prove a lower bound to the spectral gap of $H_n$ we exploit that it is \textit{frustration-free} so that the global ground states simultaneously minimize all local Hamiltonian terms.
This property can be used as leverage to tackle the problem of lower bounding the spectral gap. 
In particular, we apply the \textit{Nachtergaele bound}~\cite{nachtergaele_gap_1994}, sometimes called \textit{martingale trick}.
This method requires frustration-freeness, finite range of interactions and a third condition concerning the overlap of ground state projectors $G_{[m,n]}$ of the local Hamiltonians $\id^{\otimes m-n}\otimes H_{[m,n]}\otimes \id^{\otimes N-n+1}$.

 The ground space for intervals containing the boundary terms corresponds to a well studied problem in the context of the Schur-Weyl duality, namely, characterizing the set of matrices that commute with all unitaries of the form $U^{\otimes 2}$.
These are precisely the standard representations of the symmetric group $S_2$, thus the corresponding ground space is $2$-dimensional.
 The bulk Hamiltonian admits a larger ground space but -- perhaps surprisingly -- turns out to be explicitly computable and always of dimension $3$.
 Using (non-orthonormal) basis states for these ground spaces, we can construct approximations to the ground state projectors.
 This allows us to verify the third condition for $l=6$. 
 Then, the Nachtergaele bound can be applied and yields
 \begin{equation}
 \Delta(H_n)\geq \frac{\Delta\left(H^B_7\right)}{32}\quad\text{for all}\quad n\geq 8,
 \end{equation}
 where $H^B_n$ is the Hamiltonian $H_n$ without the two-local boundary terms.
 As $\Delta(H^B_7)>0$ is a constant, we have proven that the circuits $U_\beta^m$ defined in Eq.~\eqref{eq:circuitpicture} is a relative $\varepsilon$-approximate $2$-design in depth $m \in \mathcal{O} (4n+\log(1/\varepsilon))$.
\end{proof}

 \paragraph*{Average-case hardness.}

In this section, we prove average-case hardness of calculating the output probabilities of the architecture using polynomial interpolation techniques. 
Our argument follows the proof strategy developed in Ref.~\cite{bouland_quantum_2018}. 
Specifically, we show that a machine $\mathcal{O}$ that computes a certain fraction of all instances can be used to construct a machine $\mathcal{O}$ that solves all instances in random polynomial time, a property known as \textit{random self-reducibility}.
But solving all instances is known to be \#P hard, which implies that solving that fraction of instances is just as hard. 
We now sketch the proof of this statement. 
For a detailed proof and a technical statement, we refer to Appendix~\ref{appendix:proof_average}.
Moreover, we provide a generalization of this statement in Appendix~\ref{appendix:generalized-average}, which contains the statement in Ref.~\cite{bouland_quantum_2018} as a special case and moreover shows average-case complexity for commuting quantum circuits (IQP circuits)~\cite{Bremner}.

For the proof it suffices to consider the probability of a single output probability, which we choose w.l.o.g.\ to be $p_{0,\beta} = |\langle 0|\exp(\ii H) |\psi_{\beta}\rangle|^2$. 
This is a consequence of the so-called \textit{hiding property}, which refers to the fact that we can hide all output strings in the circuit without changing the probabilities with which the circuits are drawn. 
To see this, note that we can write any state $| 1 \rangle$ of a qubit in the output state as $X|0\rangle$.
But the operator $X$ can be propagated through the circuit~\eqref{eq:circuitpicture} to meet a Hadamard gate, where it becomes the operator $Z$. 
Together with a gate $\exp(\ii\alpha Z)$ it then forms the gate $\ii \exp(\ii(\alpha+\pi/2) Z)$.
By translation-invariance of the Haar measure, the angle $\pi/2$ does not change the probabilities for the circuit.

We can thus show the worst-to-average reduction for computing the output probability $p_{0,\beta}$. 
This reduction is inspired by a similar reduction due to \citet{lipton_permanent_1991} for the \textit{permanent} of a matrix: 
Consider an instance $p_{0,\beta}$ of our computational problem defined by $\beta=(\beta_{1},...,\beta_{N})$.
Suppose, we draw an instance $\gamma=(\gamma_1,...,\gamma_N)$ from the Haar measure on $(S^1)^{\times N}$. 
The reduction is based on an interpolation between the fixed instance $\beta$ and the randomly drawn instance $\gamma$.
This can be achieved by linear interpolation between the angles: $\eta(\theta) \coloneqq \theta\beta+(1-\theta)\gamma$. 
However, for worst-to-average reduction similar to the permanent to work, we need that  $|\langle 0|\exp(\ii H) |\psi_{\beta}\rangle|^2$ is a polynomial in $\theta$. 
This can be dealt with by truncating the Taylor expansion of the gates: 
In the circuit picture, the above interpolation corresponds to
$\exp(\ii(\theta\beta_i+(1-\theta)\gamma_i)Z)$. Instead we can consider the gate
\begin{equation}
	G_i(\theta)=e^{\ii\gamma_i Z}\left(\sum_{k=0}^K\frac{(\ii\theta(\beta_i-\gamma_i) Z)^k}{k!}\right),
\end{equation}
with $K=\poly(N)$.
This defines a circuit $\tilde{U}(\theta)$ for which the output probability $p_{0,\beta}(\theta)=|\langle 0|\tilde{U}(\theta)|0\rangle|^2$ can be shown to be a polynomial in $\theta$ with degree polynomial in $n$, using a Feynman path integral form.
Furthermore, we can show that $G_i(\theta)$ is drawn from a distribution that is arbitrarily close to the Haar measure.
Similarly to the reduction for the permanent, we can now query the machine $\mathcal{O}$ polynomially many times to recover the polynomial $p_{0,\beta}(\theta)$. 
Using modern techniques for the recovery of polynomials such as the \textit{Berlekamp-Welch algorithm}~\cite{berlekamp_polynomial_1986}, we can bound the probability with which $\mathcal{O}$ needs to be correct.
In particular, we can prove (with the Markov inequality) that if $\mathcal{O}$ solves a fraction of $3/4+1 /\poly(N)$ of the instances drawn from the perturbed Haar measure, then one can solve all instances with a probability of $1/2+1 /\poly(N)$. 
Moreover, by repeating the above procedure polynomially many times and taking a 
majority vote over all trials, this probability can be exponentially amplified.

We note that, strictly speaking, our result does not prove average-case hardness of the Haar distribution on $S^1$ but a close distribution with the property that it takes values outside of the unitary group which are $1/2^{\poly(n)}$ close to the ideal ones. 
\citet{movassagh_efficient_2018} provided a fix for this technical caveat by replacing the polynomial interpolation step with a rational-function interpolation which is based on the $QR$-decomposition.
The polynomial interpolation method has the advantage, however, to allow for $1/2^{\poly'(n)}$ robustness to noise at the cost of reducing the fraction of hard instances to a polynomially small one. 
This level of robustness is crucial to formalize the result in a Turing machine model which has finite precision. 

\paragraph*{Conclusion.}
In this work, we have solved two of the open conjectures that provide the theoretical footing of quantum simulation proposals in Ref.~\cite{bermejo-vega_architectures_2018}
showing a complexity theoretic quantum speedup.
We have, thus, provided a translation-invariant, nearest-neighbour and constant depth proposal for a quantum advantage with the strongest theoretical evidence that is feasible with state-of-the-art methods. 
The conjecture of approximate average-case complexity remains open for all quantum speedup proposals and is the only missing step to a full loophole free complexity-theoretic argument.
These observations render the scheme considered in this work among the most stringent quantum advantage schemes for which the conceptual loopholes are most convincingly closed.

\paragraph*{Acknowledgments.}
We would like to thank Michael Bremner and Ashley Montanaro for comments on the manuscript. 
This work has been supported by the ERC (TAQ), the Templeton Foundation, and
the DFG (EI 519/14-1, EI 519/15-1, CRC 183) and MATH+. 
This work has also received funding from the European Union's Horizon 2020 research and innovation 
programme under grant agreement No 817482 (PASQuanS). A.~B. was supported in part by ARO Grant W911NF-12-1-0541, NSF Grant CCF-1410022, and a Vannevar Bush faculty fellowship. B.~F. acknowledges support from AFOSR YIP number FA9550-18-1-0148.  Portions of this paper are a contribution of NIST, an 
agency of the US government, and is not subject to US copyright.

\bibliographystyle{myapsrev4-1}

%

\onecolumngrid

\appendix

\section{The full Hamiltonian and mapping to effective circuits}
\label{app:effective circuit mapping}
The Ising Hamiltonian for the architectures of quantum simulation reads
\begin{equation}
H \coloneqq \sum_{(i,j)\in E}\frac{\pi}{4}Z_iZ_j-\sum_{i\in V}\frac{\pi}{4}\mathrm{deg}(i)Z_i,
\end{equation}
where $\mathrm{deg}(i)$ is the number of adjacent edges to the vertex $i$, i.e., 
it takes the values $2,3$ or $4$ depending on whether $i$ is located at an edge, the inner boundary or the bulk.
We now turn to showing how the architecture of quantum simulation described in the main text can be mapped to a circuit of the form~\eqref{eq:circuitpicture}.
In more detail, we show that the architectures are equivalent to the following random circuits (compare Fig.~\ref{figure:architecture}):

\begin{definition}[Effective random circuit]\label{def:architecture}
	We define an effective circuit by the following protocol:
	\begin{enumerate}
		\item Prepare each qubit in the state $\ket +$.
		\item For each qubit, draw a phase $\varphi_j\in S^1\stackrel{\sim}{=}[0,2\pi]/\sim$ uniformly at random and apply the diagonal gate $G_j \coloneqq e^{\ii \varphi_j Z}$.
		\item Apply a controlled $Z$ gate $CZ$ to all neighbouring qubits.
		\item Apply a Hadamard gate to each qubit.
		\item Repeat the above $D=\poly(N)$ many times.
		\item Measure in the $Z$ eigenbasis.
	\end{enumerate}
	In general, we refer to the resulting quantum circuits arising from drawing random $G_j$ as \emph{random circuits}.
\end{definition}
The task that we will show to be average-case hard is to sample from the output distribution of this circuit.
In general, we restrict to families of graphs that are uniformly bounded, i.e. the number of edges associated to each vertex is bounded by a constant.

\begin{figure}[H]\label{figure:architecture}
	\centering
	\includegraphics[scale=1.2]{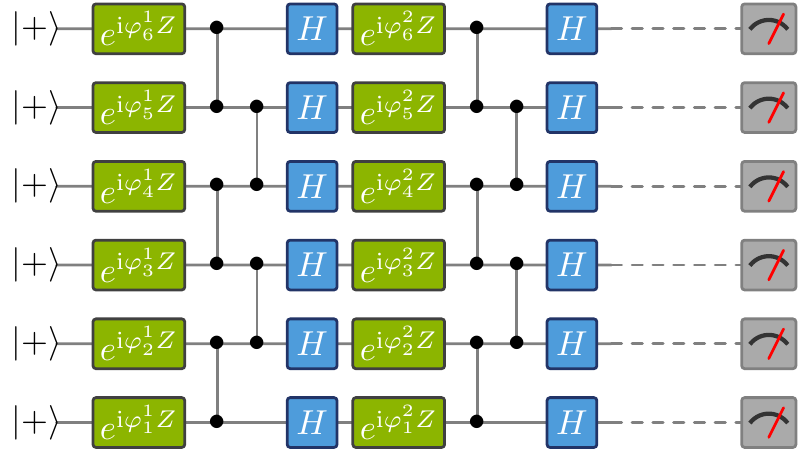}
	\caption{Two layers of the circuit described in Definition~\ref{def:architecture}.}
\end{figure}
First, notice that $e^{\ii H}$ implements a $CZ$-gate (up to a global phase) along each edge.
Furthermore, the preparation state defined in~\eqref{eq:preparation} can be written as
\begin{equation}
|\psi_{\beta}\rangle=\bigotimes_{i=1}^N\left(|0\rangle+e^{\ii\beta_i}|1\rangle\right)=\bigotimes_{i=1}^N R(\beta_i)|+\rangle,
\end{equation}
with $R(\theta) \coloneqq \mathrm{diag}(1,e^{\ii \theta})$.
What is more, the gates $R(\beta_i)$ commute with the Hamiltonian. 
The form of the effective circuit follows from teleporting the gates acting on the columns along the $CZ$ gates between the columns.
Notice that 
\begin{equation}
R(\beta_i)=e^{\ii \frac{\beta_i}{2}}e^{-\ii \frac{\beta_i}{2}Z}.
\end{equation}

\section{$2$-designs from architectures of quantum simulation}
\label{appendix:proof-2-design}

In this appendix, we prove that the architectures for quantum simulation described in Ref.~\cite{bermejo-vega_architectures_2018} form relative $\varepsilon$-approximate $2$-designs as stated in Theorem~\ref{theorem:t-design}. 
In turn, the $2$-design property of the last column implies 
anticoncentration of the state on all qubits as explained in the main text. 
Let us restate the theorem here for convenience. 

\design*

To prove the theorem, we will apply a general proof strategy that has been pioneered in Refs.~\cite{harrow_random_2009,brandao_local_2016}.
The idea of the proof is to successively reduce the $2$-design property to spectral gaps of certain frustration-free Hamiltonians: 

In the first step (App.~\ref{appendix:tpe}), we reduce the $2$-design property to the so-called $2$-copy tensor-product expander property of the probability distribution $v$ with respect to which the random circuit is drawn. 
Technically, this property is an upper bound of the quantity $g(v,2)$ for the probability distribution $v$.
Using the translation-invariance of the measure for our random circuits, one can reduce bounding the quantity $g(v,2)$ for the full measure to bounding the same quantity for the measure for a constant number of steps in the circuit.
In the second step (App.~\ref{app:spectral gap reduction}), we identify a suitable number of circuit steps for which we characterize the corresponding measure $v$.  
We then show that an upper bound for $g(v,2)$ can be reduced to a lower bound for certain frustration-free Hamiltonians.
In the last step (App.~\ref{app:lower bound gap}), we apply methods from quantum many-body theory to obtain a lower bound for the spectral gap of those frustration-free Hamiltonian. 
In particular, we check that those frustration-free Hamiltonians satisfy the conditions required for a famous theorem of \citet{nachtergaele_gap_1994} to hold.

\subsection{Tensor product expanders and designs}\label{appendix:tpe}

In the first step of the proof, we reduce the $2$-design property to a so-called $2$-copy tensor product expander property as established in Ref.~\cite{brandao_local_2016}. 
For completeness and to set the notation, we review this step here. 
We use the following \emph{relative-error} definition of unitary $t$-designs. 
This notion is distinct from an \emph{additive-error} definition of $t$-designs, which would not suffice to prove anticoncentration. 
\begin{definition}[Unitary $t$-designs]
	Let $v$ be a distribution on $\mathbb{U}(N)$. Then, $v$ is an $\varepsilon$-approximate $t$-design if
	\begin{equation}
	(1-\varepsilon)\Delta_{\mu_{\rm Haar},t}\preccurlyeq \Delta_{v,t}\preccurlyeq (1+\varepsilon)\Delta_{\mu_{\rm Haar},t},
	\end{equation}
	where
	\begin{equation}
	\Delta_{v,t}(\rho) \coloneqq \int_{\mathbb{U}(N)}U^{\otimes t}\rho \left(U^{\dagger}\right)^{\otimes t}\mathrm{d}v(U),
	\end{equation}
	and $A\preccurlyeq B$ if and only if $B-A$ is completely positive.
\end{definition}

The basis of our proof is that approximate $t$-designs are closely related to the notion of a quantum $t$-copy tensor product expander (TPE). 
\begin{definition}[\cite{brandao_local_2016}]
	Let $v$ be a distribution on $\mathbb{U}(N)$. $v$ is a $(N,\lambda,t)$ TPE if 
	\begin{equation}
	g(v,t) \coloneqq \left|\left|\int_{\mathbb{U}(N)}U^{\otimes t,t}\mathrm{d}v(U)-\int_{\mathbb{U}(N)}U^{\otimes t,t}\mathrm{d}\mu_{\rm Haar}(U)\right|\right|_{\infty}\leq\lambda,
	\end{equation}
	where we denote $U^{\otimes t,t} \coloneqq U^{\otimes t}\otimes (U^{*})^{\otimes t}$.
\end{definition}

This connection is formalized in the following lemma.
\begin{lemma}[\cite{brandao_local_2016}]\label{lemma:expandertodesign}
	If $g(v,t)\leq\varepsilon$, then $v$ is an $\varepsilon N^{2t}$-approximate $t$-design.
\end{lemma}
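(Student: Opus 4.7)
The plan is to translate the operator-norm bound $g(v,t) \leq \varepsilon$ on the moment matrix into a semidefinite sandwich on Choi matrices. Writing $M_v := \int U^{\otimes t,t}\,\mathrm{d}v(U)$ and $M_H := \int U^{\otimes t,t}\,\mathrm{d}\mu_{\rm Haar}(U)$, I would first note that $M_H$ is an orthogonal projector: left-invariance of the Haar measure yields $U^{\otimes t,t} M_H = M_H$ for every $U$, so $M_H^2 = M_H = M_H^\dagger$, and the same computation gives $M_H M_v = M_v M_H = M_H$ for any probability measure $v$. Consequently $M_v - M_H$ is orthogonal to $M_H$ and satisfies $\|M_v - M_H\|_\infty \leq \varepsilon$ by hypothesis.

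Second, I would pass to Choi matrices via the realignment $M_\Phi \mapsto J(\Phi)$, which preserves the Hilbert--Schmidt norm. Combined with the elementary inequality $\|A\|_2 \leq \sqrt{d}\,\|A\|_\infty$ for a $d\times d$ matrix (here $d = N^{2t}$), this yields
\begin{equation*}
\|J(\Delta_{v,t}) - J(\Delta_{\mu_{\rm Haar},t})\|_\infty \leq \|M_v - M_H\|_2 \leq N^t\, \|M_v - M_H\|_\infty \leq \varepsilon\, N^t.
\end{equation*}
The final step compares this difference against $J(\Delta_{\mu_{\rm Haar},t})$ itself, using two structural facts: (a) the range of $J(\Delta_{v,t})$ sits inside the range of $J(\Delta_{\mu_{\rm Haar},t})$, since both matrices are positive integrals of rank-one operators of the form $|(U^T)^{\otimes t}\rangle\langle (U^T)^{\otimes t}|$ and the Haar average has full span among these; and (b) every non-zero eigenvalue of $J(\Delta_{\mu_{\rm Haar},t})$ is at least $1/N^t$. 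Granting (a) and (b), $J(\Delta_{\mu_{\rm Haar},t})^{-1/2}$ has operator norm at most $N^{t/2}$ on its range, so
\begin{equation*}
\bigl\|J(\Delta_{\mu_{\rm Haar},t})^{-1/2}\bigl(J(\Delta_{v,t}) - J(\Delta_{\mu_{\rm Haar},t})\bigr)J(\Delta_{\mu_{\rm Haar},t})^{-1/2}\bigr\|_\infty \leq N^t \cdot \varepsilon\, N^t = \varepsilon\, N^{2t},
\end{equation*}
which rearranges to the PSD sandwich $(1 - \varepsilon N^{2t}) J(\Delta_{\mu_{\rm Haar},t}) \leq J(\Delta_{v,t}) \leq (1 + \varepsilon N^{2t}) J(\Delta_{\mu_{\rm Haar},t})$ on the range, and holds trivially off it. By Choi--Jamiolkowski this is exactly the CP relation $(1-\varepsilon N^{2t})\Delta_{\mu_{\rm Haar},t}\preccurlyeq \Delta_{v,t}\preccurlyeq (1+\varepsilon N^{2t})\Delta_{\mu_{\rm Haar},t}$.

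The main obstacle I expect is pinning down the eigenvalue bound (b) cleanly; the rest is essentially bookkeeping through the realignment and Choi--Jamiolkowski isomorphisms. A clean proof of (b) uses Schur--Weyl duality: decomposing $(\mathbb{C}^N)^{\otimes t} \cong \bigoplus_\lambda V_\lambda^{U(N)} \otimes V_\lambda^{S_t}$, the Haar twirl acts, block by block, as the $U(N)$-depolarizing channel on the $V_\lambda^{U(N)}$ factor, and its Choi eigenvalues work out to $1/\dim V_\lambda^{U(N)} \geq 1/N^t$. A representation-theory-free alternative is to exploit that $J(\Delta_{\mu_{\rm Haar},t})$ is invariant under conjugation by $W^{\otimes t}\otimes I$ and by $I\otimes U^{\otimes t}$ for all $W,U$, and to combine this invariance with $\mathrm{tr}\,J(\Delta_{\mu_{\rm Haar},t}) = N^t$ and the idempotence of the twirl to fix the eigenvalue structure.
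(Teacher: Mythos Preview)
The paper does not prove this lemma itself; it is quoted as a black box from Brand\~{a}o--Harrow--Horodecki. Your argument is correct and follows essentially the route taken in that reference: pass from the operator-norm bound on the moment operator to a bound on the Choi-matrix difference via the Hilbert--Schmidt-preserving realignment, then invoke the Schur--Weyl lower bound on the nonzero Choi eigenvalues of the Haar twirl to upgrade the additive bound to the relative CP sandwich.

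One small correction that does not affect your conclusion: on the Schur--Weyl block $V_\lambda^{U(N)}\otimes V_\lambda^{S_t}$ the Haar twirl is depolarizing on $V_\lambda^{U(N)}$ \emph{tensored with the identity} on $V_\lambda^{S_t}$, so the nonzero Choi eigenvalue on that block is $\dim V_\lambda^{S_t}/\dim V_\lambda^{U(N)}$ rather than $1/\dim V_\lambda^{U(N)}$. Since $\dim V_\lambda^{S_t}\geq 1$ and $\dim V_\lambda^{U(N)}\leq N^t$, the bound $\geq 1/N^t$ you need still holds.
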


A key step of the proof is to use a property of TPEs about how they behave under the concatenation of randomly drawn unitaries.
This allows one to reduce the TPE property for the full circuit to a TPE property of few steps of the circuit.
First, note that for two unitaries $V,U\in \mathbb{U}(N)$ the map $?^{\otimes t,t}$ is an action of the unitary group:
\begin{equation}\label{eq:multiplicativeotimestt}
(UV)^{\otimes t,t}=U^{\otimes t,t}V^{\otimes t,t}.
\end{equation}
If we draw $k$ times independently unitaries from the distribution $v$ and concatenate them, we obtain a unitary drawn from the \textit{$k$-fold convolution} $v^{*k}$. 
The latter is defined as the push-forward measure under the multiplication:
\begin{equation}
v^{*k}(X) \coloneqq \int\chi_{X}(U_1...U_k)\mathrm{d}v(U_1)...\mathrm{d}v(U_n),
\end{equation}
with $\chi_X$ denoting the characteristic function of the set $X\subset \mathbb{U}(N)$.
Notice that this is precisely the situation we are faced with for random local quantum circuits as effected by the MBQC representation of the architectures. By definition, in this case, the measure of the full circuit is a concatenation of local measures for a small number of steps in the circuit. 

Since in the end we want to upper bound $g(v,t)$, the following well-known lemma (see,
e.g., Ref.~\cite{brown_convergence_2010}) capturing this composition property will be at the heart of our proof.
\begin{lemma}[Composition property]
\label{lem:composition property}
	$g(v^{*k},t)\leq g(v,t)^k$.
\end{lemma}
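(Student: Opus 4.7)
The plan is to reduce everything to a short algebraic identity about the moment operator $M_v \coloneqq P_v - P_H$, where $P_v \coloneqq \int_{\mathbb{U}(N)} U^{\otimes t,t}\,\mathrm{d}v(U)$ and $P_H \coloneqq \int_{\mathbb{U}(N)} U^{\otimes t,t}\,\mathrm{d}\mu_{\rm Haar}(U)$, so that by definition $g(v,t) = \|M_v\|_\infty$. The target bound will then follow from submultiplicativity of the operator norm once I establish the key identity $M_{v^{*k}} = M_v^k$.

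First, I would use the action property~\eqref{eq:multiplicativeotimestt} inside the defining integral of $v^{*k}$ together with Fubini on the compact group $\mathbb{U}(N)$ to obtain $P_{v^{*k}} = P_v^k$. Applying the same observation to $\mu_{\rm Haar}$ and invoking the bi-invariance of the Haar measure then yields the absorption identities $P_H^2 = P_H$ and $P_v P_H = P_H P_v = P_H$; for example,
\begin{equation}
P_v P_H = \int\!\!\int (UV)^{\otimes t,t}\,\mathrm{d}v(U)\,\mathrm{d}\mu_{\rm Haar}(V) = \int\!\!\int V^{\otimes t,t}\,\mathrm{d}v(U)\,\mathrm{d}\mu_{\rm Haar}(V) = P_H
\end{equation}
after the change of variables $V \mapsto U^{-1} V$, and the right-invariance version handles $P_H P_v$. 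Iterating gives $P_v^j P_H = P_H$ for every $j \geq 1$.

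Next, I would prove by induction on $k$ that $M_v^k = P_v^k - P_H$. The base case is the definition, and the inductive step expands
\begin{equation}
M_v^{k+1} = (P_v^k - P_H)(P_v - P_H) = P_v^{k+1} - P_v^k P_H - P_H P_v + P_H^2,
\end{equation}
which collapses to $P_v^{k+1} - P_H$ by the absorption identities. Combined with $P_{v^{*k}} = P_v^k$ this gives $M_{v^{*k}} = M_v^k$, and submultiplicativity of $\|\cdot\|_\infty$ immediately yields $g(v^{*k},t) = \|M_v^k\|_\infty \leq \|M_v\|_\infty^k = g(v,t)^k$.

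The substantive content is simply that $P_H$ is the orthogonal projector onto the commutant of the representation $U \mapsto U^{\otimes t,t}$, and therefore absorbs any factor of $P_v$ on either side; everything else is bookkeeping. The only small technical point is the Fubini interchange of iterated group integrals with finite-dimensional matrix products, which is harmless because we integrate a bounded, continuous matrix-valued function against probability measures on a compact group. I do not expect any genuine obstacle in this proof.
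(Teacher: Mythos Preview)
Your proposal is correct and follows essentially the same route as the paper: both establish the absorption identities $P_v P_H = P_H P_v = P_H^2 = P_H$ from Haar invariance, deduce $(P_v - P_H)^k = P_v^k - P_H = P_{v^{*k}} - P_H$, and conclude via submultiplicativity of the operator norm. The only cosmetic difference is that you prove the identity $(P_v - P_H)^k = P_v^k - P_H$ by induction, whereas the paper expands $(P_v - P_H)^k$ binomially and collapses every cross term to $P_H$.
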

\begin{proof}
	We use the translation-invariance of the Haar measure. Let us denote
	\begin{equation}
	P_v \coloneqq \int_{\mathbb{U}(N)}U^{\otimes t,t}\mathrm{d}v(U), \qquad P_{\rm Haar} \coloneqq \int_{\mathbb{U}(N)}U^{\otimes t,t}\mathrm{d}\mu_{\rm Haar}(U).
	\end{equation}
	Using \eqref{eq:multiplicativeotimestt}, we can now easily see that for any distribution $v$ we have
	\begin{align}
	P_vP_{\rm Haar}&=\left(\int_{\mathbb{U}(N)}U^{\otimes t,t}\mathrm{d}v(U)\right)\left(\int_{\mathbb{U}(N)}U^{\otimes t,t}\mathrm{d}\mu_{\rm Haar}(U)\right)\\
	&=\int_{\mathbb{U}(N)}\int_{\mathbb{U}(N)}(U_1U_2)^{\otimes t,t}\mathrm{d}\mu_{\rm Haar}(U_1)\mathrm{d}v(U_2)\\
	&=\int_{\mathbb{U}(N)}\int_{\mathbb{U}(N)}(U_1)^{\otimes t,t}\mathrm{d}\mu_{\rm Haar}(U_1)\mathrm{d}v(U_2)\\
	&=\int_{\mathbb{U}(N)}(U_1)^{\otimes t,t}\mathrm{d}\mu_{\rm Haar}(U_1)=P_{\rm Haar}.
	\end{align}
	Analogously, we obtain $P_{\rm Haar}P_v=P_{\rm Haar}$. Notice that this also shows $P_{\rm Haar}^2=P_{\rm Haar}$, i.e. $P_{\rm Haar}$ is a projector. Notice that $P_{\rm Haar}$ is in fact an orthogonal projector.
	As a consequence we have
	\begin{align}
	g(v,t)^k&=\left|\left|P_v-P_{\rm Haar}\right|\right|_{\infty}^k\geq\left|\left|(P_v-P_{\rm Haar})^k\right|\right|_{\infty}=\left|\left|(P_v-P_{\rm Haar})^k\right|\right|_{\infty}=\left|\left|P_v^k+\sum_{j=1}^{k}{k\choose j}(-1)^jP_{\rm Haar}\right|\right|_{\infty}\\
	&=\left|\left|P_v^k-P_{\rm Haar}\right|\right|_{\infty}=g(v^{*k},t).
	\end{align}
	We notice that for distributions $v$ such that $\int_{\mathbb{U}(N)}U^{\otimes t,t} \mathrm{d}v(U)$ is normal, even equality holds. 
\end{proof}

\subsection{Reduction to spectral gaps of frustration-free Hamiltonians}
\label{app:spectral gap reduction}

The random quantum circuits generated by measuring the first $m-1$ columns are translation invariant in the sense that the full measure is the $(m-1)$-fold convolution of the measure for an individual column. 
We refer to the circuit generated by one column as a \emph{layer of the random circuit}. 
Clearly, when using Lemma~\ref{lem:composition property} one has the freedom of choosing any $k$-fold convolution of the measure for a single layer since the full circuit is then the $(m-1)/k$-fold convolution of the resulting measure. 

The simplest choice of local measure when proving Theorem~\ref{theorem:t-design} is of course to choose $v$ to be the distribution that is defined by one layer of~\eqref{eq:circuitpicture}.
However, the fixed unitary $E$ in this layer complicates the reduction to spectral gaps. 
On the other hand, the gates of the form $\prod_ie^{\ii\varphi_i Z_i}$ are not entangling and thus can not generate all unitaries, a condition required for the proof. 

A more sophisticated and ultimately successful approach is to group the circuits into subroutines of three layers:
\begin{figure}[H]
	\center
	\includegraphics[scale=1.15]{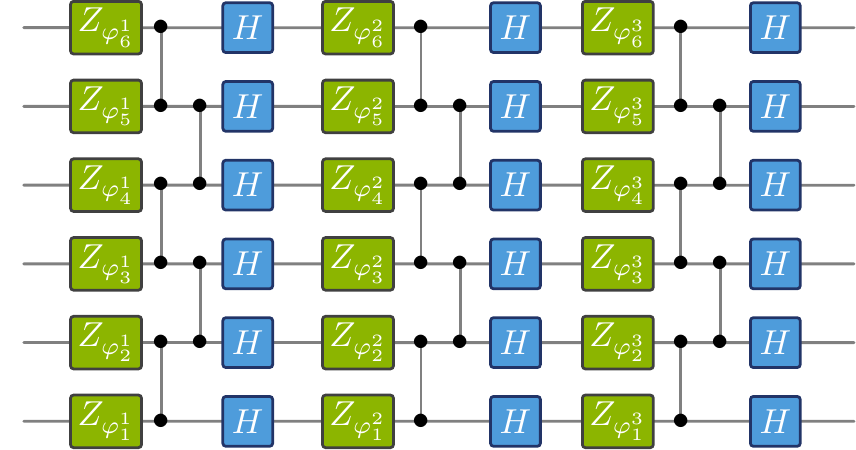}
	\caption{Three layers of the physical circuit on six sites. 
	For a convenient graphical representation, we use the notation $Z_{\theta} \coloneqq \exp(\mathrm{i}\theta Z)$.}
\end{figure}
The global entangling gates $E$ are still present. 
We observe that any such group of three layers can be reformulated as follows.
\begin{figure}[H]
	\center
	\includegraphics[scale=1.15]{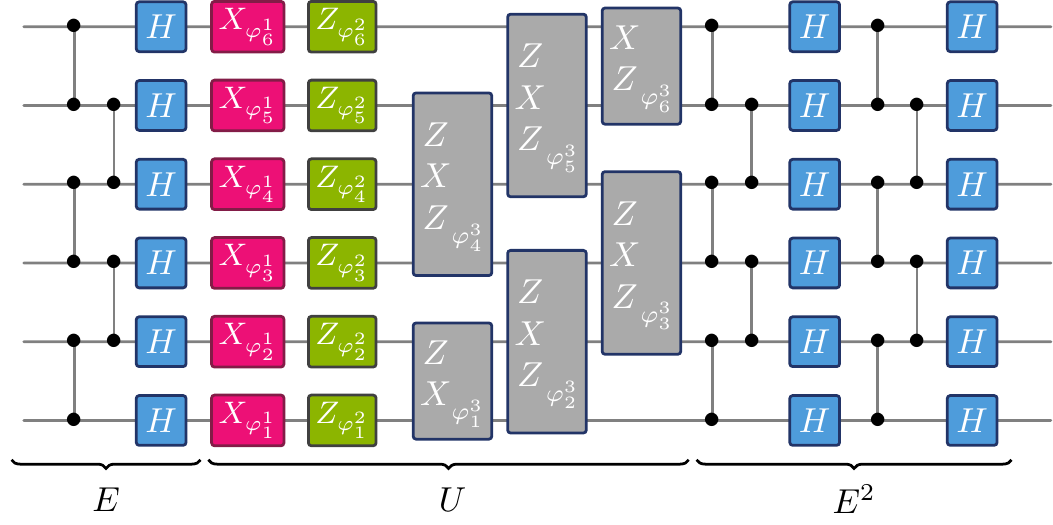}
	\caption{A circuit on six sites that implements the same unitary as the one in Figure 2.}
\end{figure}
Every layer is now of the form $E^2UE$, where $U$ is drawn randomly. 
The following lemma shows that we can simply remove $E$ and $E^2$:
\begin{lemma}[Removal lemma]\label{lemma:fixedunitaries}
	Given a distribution $v$ on $\mathbb{U}(N)$ and two fixed unitaries $V,W\in \mathbb{U}(N)$. 
	Consider the distribution $VvW$ that is defined by drawing $U$ from $v$ and then form $VUW$. Then, 
    $g(VvW,t)= g(v,t)$.
\end{lemma}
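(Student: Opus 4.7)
The plan is to compute the operator $P_{VvW}\coloneqq \int (VUW)^{\otimes t,t}\,\mathrm{d}v(U)$ directly, exploit the multiplicativity of the representation $U\mapsto U^{\otimes t,t}$ already used in the preceding proof of the composition property (Equation~\eqref{eq:multiplicativeotimestt}), and then reduce $g(VvW,t)=\|P_{VvW}-P_{\mathrm{Haar}}\|_\infty$ to $\|P_v-P_{\mathrm{Haar}}\|_\infty$ via the unitary invariance of the operator norm. This is a short calculation; the only nontrivial ingredient is the two-sided translation invariance of the Haar measure.

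Concretely, I would first pull the fixed unitaries out of the integral:
\begin{equation}
P_{VvW}=\int_{\mathbb{U}(N)}(VUW)^{\otimes t,t}\,\mathrm{d}v(U)=V^{\otimes t,t}\left(\int_{\mathbb{U}(N)}U^{\otimes t,t}\,\mathrm{d}v(U)\right)W^{\otimes t,t}=V^{\otimes t,t}P_v W^{\otimes t,t},
\end{equation}
where the first equality uses $(VUW)^{\otimes t,t}=V^{\otimes t,t}U^{\otimes t,t}W^{\otimes t,t}$. Next, applying the same identity to the Haar measure and using its left- and right-translation invariance, the map $U\mapsto VUW$ pushes $\mu_{\mathrm{Haar}}$ forward to itself, so
\begin{equation}
V^{\otimes t,t}P_{\mathrm{Haar}}W^{\otimes t,t}=\int_{\mathbb{U}(N)}(VUW)^{\otimes t,t}\,\mathrm{d}\mu_{\mathrm{Haar}}(U)=\int_{\mathbb{U}(N)}U^{\otimes t,t}\,\mathrm{d}\mu_{\mathrm{Haar}}(U)=P_{\mathrm{Haar}}.
\end{equation}

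Combining these two identities gives $P_{VvW}-P_{\mathrm{Haar}}=V^{\otimes t,t}(P_v-P_{\mathrm{Haar}})W^{\otimes t,t}$. Since $V$ and $W$ are unitary, so are $V^{\otimes t,t}$ and $W^{\otimes t,t}$ (they are tensor products of $V,V^*,W,W^*$ respectively), and the operator norm is invariant under left and right multiplication by unitaries. Therefore
\begin{equation}
g(VvW,t)=\|V^{\otimes t,t}(P_v-P_{\mathrm{Haar}})W^{\otimes t,t}\|_\infty=\|P_v-P_{\mathrm{Haar}}\|_\infty=g(v,t),
\end{equation}
as claimed. There is no real obstacle; the only subtlety worth flagging is that the argument uses \emph{both} left and right invariance of the Haar measure, which is why the statement is symmetric in $V$ and $W$ and holds simultaneously for $V$ on the left and $W$ on the right rather than requiring separate applications.
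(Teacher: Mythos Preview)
Your proof is correct and is essentially identical to the paper's own argument: both pull $V^{\otimes t,t}$ and $W^{\otimes t,t}$ out of the integral via multiplicativity of $U\mapsto U^{\otimes t,t}$, invoke two-sided translation invariance of the Haar measure to rewrite $P_{\mathrm{Haar}}$, and conclude by unitary invariance of the operator norm. The only cosmetic difference is that you package the intermediate steps using the $P_{VvW}$, $P_v$, $P_{\mathrm{Haar}}$ notation, while the paper writes out the integrals in a single chain of equalities.
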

\begin{proof}
	The proof follows from invariance of the operator norm under unitaries:
	\begin{align}
	g(v_{V,W},t)&=\left|\left|\int_{\mathbb{U}(N)}(VUW)^{\otimes t,t}\mathrm{d}v(U)-\int_{\mathbb{U}(N)}U^{\otimes t,t}\mathrm{d}\mu_{\rm Haar}(U)\right|\right|_{\infty}\\
	&=\left|\left|\int_{\mathbb{U}(N)}(VUW)^{\otimes t,t}\mathrm{d}v(U)-\int_{\mathbb{U}(N)}(VUW)^{\otimes t,t}\mathrm{d}\mu_{\rm Haar}(U)\right|\right|_{\infty}\\
    &=\left|\left|V^{\otimes t,t}\left(\int_{\mathbb{U}(N)}U^{\otimes t,t}\mathrm{d}v(U)\right)W^{\otimes t,t}-V^{\otimes t,t}\left(\int_{\mathbb{U}(N)}U^{\otimes t,t}\mathrm{d}\mu_{\rm Haar}(U)\right)W^{\otimes t,t}\right|\right|_{\infty}\\
    &=\left|\left|V^{\otimes t,t}\left(\int_{\mathbb{U}(N)}U^{\otimes t,t}\mathrm{d}v(U)-\int_{\mathbb{U}(N)}U^{\otimes t,t}\mathrm{d}\mu_{\rm Haar}(U)\right)W^{\otimes t,t}\right|\right|_{\infty}\\
    &\stackrel{\dagger}{=} \left|\left|\int_{\mathbb{U}(N)}U^{\otimes t,t}\mathrm{d}v(U)-\int_{\mathbb{U}(N)}U^{\otimes t,t}\mathrm{d}\mu_{\rm Haar}(U)\right|\right|_{\infty}=g(v,t).
	\end{align}
	In $\dagger$ we use that $V^{\otimes t,t}$ and $W^{\otimes t,t}$ are unitary operators and that multiplying with unitaries does not change the singular values. However, the operator norm coincides with the norm of the largest singular value.
\end{proof}
Lemma~\ref{lemma:fixedunitaries} allows us to restrict to the distribution $v_{n}$ defined as follows: First, draw randomly $\varphi^X_{i},\varphi^Z_{i},\varphi^{ZXZ}_i\in S^1$ from the Haar measure on $S^1$ and then form the unitary
\begin{equation}
U=\left(e^{\mathrm{i}\varphi^{ZXZ_1} ZXZ_1}... e^{\mathrm{i}\varphi^{ZXZ}_n ZXZ_n}\right) \left(e^{\mathrm{i}\varphi^Z_1 Z_1}... e^{\mathrm{i}\varphi^Z_n Z_n}\right)\left(e^{\mathrm{i}\varphi^X_1 X_1}... e^{\mathrm{i}\varphi^X_n X_n}\right).
\end{equation}
As we will see in Lemma~\ref{lemma:universalgateset} this distribution is universal in the sense of the following subsection:
\begin{definition}[Universal propability distributions]
We call a probability distribution $v$ on the unitary group $\mathbb{U}(N)$ \textit{universal} if for every ball $B_{\varepsilon}\subset\mathbb{U}(N)$, there exists a $k\in\mathbb{N}$ such that $v^{*k}(B_{\varepsilon})>0$.
\end{definition}
For convenience, we define $ZXZ_1 \coloneqq X_1Z_2$ and $ZXZ_n=Z_{n-1}X_{n}$ in the following.
\begin{lemma}\label{lemma:universalgateset}
The gate set consisting of the unitaries $\left\{e^{\ii\varphi^Z_j Z_j}, e^{\ii\varphi^X_j X_j}, e^{\ii\varphi^{ZXZ}_jZXZ_j}\right\}_{j}$ is universal.
\end{lemma}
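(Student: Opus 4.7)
The plan is to reduce universality of $v_n$ to a Lie-algebraic statement: the Lie algebra $\mathfrak{g} \subseteq \mathfrak{u}(2^n)$ generated by $\{\ii X_j,\, \ii Z_j,\, \ii\,ZXZ_j\}_{j=1}^n$ equals $\mathfrak{u}(2^n)$. Once this is in hand, the closed subgroup of $\mathbb{U}(2^n)$ topologically generated by the one-parameter subgroups $\{e^{\ii \varphi X_j}\}_\varphi$, $\{e^{\ii \varphi Z_j}\}_\varphi$, $\{e^{\ii \varphi ZXZ_j}\}_\varphi$ sitting inside $\mathrm{supp}(v_n)$ is all of $\mathbb{U}(2^n)$; since $\mathbb{U}(2^n)$ is compact and connected, for $k$ large enough the $k$-fold product map from the torus $(S^1)^{3kn}$ into $\mathbb{U}(2^n)$ has surjective differential at some regular point, so $v_n^{*k}$ has an absolutely continuous component on an open neighbourhood and hence assigns strictly positive mass to every ball $B_\varepsilon$.

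To verify $\mathfrak{g} = \mathfrak{u}(2^n)$ I build up operators by nested commutators. First, $\tfrac{1}{2}[\ii X_j,\ii Z_j] = \ii Y_j$ (up to sign), so every single-qubit Pauli lies in $\mathfrak{g}$. Second, commutators of the boundary generator $\ii X_1 Z_2$ with $\ii Y_1$ and $\ii Y_2$ yield $\ii Z_1 Z_2$ and $\ii X_1 X_2$ (up to signs), and further commutators with single-qubit Paulis on $\{1,2\}$ sweep out the full $\mathfrak{su}(4)$ acting on this pair. The core step is an induction on $j$ propagating ``full $\mathfrak{su}(4)$-control on $\{j-1,j\}$'' to the pair $\{j,j+1\}$ by invoking the bulk generator: the inductive hypothesis supplies $\ii Z_{j-1} Z_j \in \mathfrak{g}$, and the identity
\begin{equation*}
\bigl[Z_{j-1} Z_j,\, Z_{j-1} X_j Z_{j+1}\bigr] = Z_{j-1}^2\,[Z_j, X_j]\, Z_{j+1} = 2\ii\, Y_j Z_{j+1}
\end{equation*}
then places $\ii Y_j Z_{j+1}$ into $\mathfrak{g}$; commutators of $\ii Y_j Z_{j+1}$ with single-qubit Paulis on $\{j, j+1\}$ fill out the full $\mathfrak{su}(4)$ on this pair, in particular producing $\ii Z_j Z_{j+1}$ to continue the induction. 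With every single-qubit Pauli and every two-qubit Pauli on each nearest-neighbour pair in $\mathfrak{g}$, a standard commutator-closure argument gives $\mathfrak{su}(2^n)$, and adjoining $\ii\id$ yields $\mathfrak{u}(2^n)$.

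The nontrivial step is the bulk propagation. The bulk generator is three-local, so a priori it is not clear that $\mathfrak{g}$ contains any genuine two-body interaction on $\{j,j+1\}$. The crucial observation is that the already-generated operator $\ii Z_{j-1} Z_j$ cancels the outer $Z_{j-1}$ leg of $Z_{j-1} X_j Z_{j+1}$ via the commutator displayed above, exposing a two-local remainder on $\{j,j+1\}$ that seeds the induction. Everything else --- single-qubit generation, the boundary base case, and the standard passage from all nearest-neighbour two-qubit Paulis to $\mathfrak{u}(2^n)$ (equivalently, an appeal to Brylinski--Brylinski along the chain) --- is routine Lie-algebra bookkeeping.
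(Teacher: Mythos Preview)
Your proof is correct and takes a different route from the paper's. The paper argues at the group level: it notes that $\{e^{\ii\varphi Z_j}, e^{\ii\varphi X_j}\}$ is dense in the single-qubit unitaries, invokes Brylinski--Brylinski on the two-local boundary generator $e^{\ii\varphi X_1 Z_2}$ to obtain full universality on $\{1,2\}$, and then propagates into the bulk by \emph{conjugation}: from the already-universal pair it approximates $\mathrm{CNOT}_{1,2}$ and $H_2$, and uses the identity $(\mathrm{CNOT}\otimes\id)(\id\otimes H\otimes\id)\,e^{\ii\varphi Z_1 X_2 Z_3}\,(\id\otimes H\otimes\id)(\mathrm{CNOT}\otimes\id) = e^{\ii\varphi\, Z_2 Z_3}$ to produce a two-local entangling gate on $\{2,3\}$, iterating along the chain. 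You carry out the same propagation at the Lie-algebra level, replacing the group conjugation by the commutator $[Z_{j-1}Z_j,\, Z_{j-1}X_j Z_{j+1}] = 2\ii\, Y_j Z_{j+1}$, which strips the outer $Z_{j-1}$ leg directly. The core insight --- seed a two-local operator at the boundary, then use it to collapse the three-local bulk generator to a two-local one on the next pair --- is identical in both arguments; your version is more self-contained (explicit commutators rather than an appeal to Brylinski--Brylinski at every inductive step), while the paper's makes the connection to standard gate-set universality results more transparent. One minor point: since every generator $X_j$, $Z_j$, $ZXZ_j$ is traceless, what you actually generate is $\mathfrak{su}(2^n)$ --- there is no $\ii\id$ available to adjoin --- so the closed group is $\mathrm{SU}(2^n)$ rather than $\mathbb{U}(2^n)$. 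The paper's statement shares this imprecision, and it is harmless for the downstream application because $U^{\otimes 2,2}$ is insensitive to global phase.
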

\begin{proof}
	We first observe that the generated set of $\left\{e^{\ii\varphi^Z_j Z_j}, e^{\ii\varphi^X_j X_j}\right\}$ is dense in the unitaries acting on the system at site $j$. 
	We first consider the boundary: On sites $1$ and $2$, we have gates of the form $e^{\ii\varphi_1^{ZXZ} X_1\otimes Z_2}$. 
	For most $\varphi_1^{ZXZ}$ this is an entangling gate in the sense of Refs.~\cite{brylinski_universal_2001,bremner_universal_2002}.
	Together with the $1$-qubit unitaries, these hence yield a universal gate set on sites $1$ and $2$~\cite{brylinski_universal_2001,bremner_universal_2002}. 
	
	We can now use this universality on the boundary to `propagate universality to the bulk`: 
	Consider the gate $e^{\ii\varphi_2^{ZXZ}Z_1\otimes X_2\otimes Z_3}$. 
	Since we have a universal gate set on site $1$ and $2$, we can use it to approximate the unitary $\mathrm{CNOT}$ which satisfies
	\begin{equation}
	 \mathbb{1}\otimes Z_2=\mathrm{CNOT}(Z_1\otimes Z_2)\mathrm{CNOT}.
	\end{equation}
	Hence, we can build the unitary
	\begin{equation}
	(\mathrm{CNOT}\otimes\mathbb{1})(\mathbb{1}\otimes H\otimes\mathbb{1})e^{\ii\varphi_2^{ZXZ}Z_1\otimes X_2\otimes Z_3}(\mathbb{1}\otimes H\otimes \mathbb{1})(\mathrm{CNOT}\otimes\mathbb{1})=e^{\ii\varphi_2^{ZXZ}\mathbb{1}\otimes Z_2\otimes Z_3},
	\end{equation}
	which is entangling on sites $2$ and $3$. We can now repeat this procedure inductively and thereby propagate universality to all sites.
\end{proof}
The above lemma allows us to keep the asymptotic scaling low:
To obtain universality with a more standard translation-invariant gate set, one would have to group the effective circuit into concatenations of $n$ layers. 
Then, one would obtain the gates $e^{\ii\varphi XZ}$, a $2$-local entangling unitary, on every pair of qubits. 
A similar observation was used in Ref.~\cite{mantri_universal_2016} to show universality of quantum computing with the cluster state and $(X,Y)$-plane measurements.
However, in our case this would lead to a linear overhead in the convergence to secon moments (plus an additional overhead from the detectability lemma). 
Consider the projectors 
\begin{equation}\label{eq:def-projectors}
P^X_i \coloneqq \frac{1}{2\pi}\int \left(e^{\mathrm{i}\varphi^X_i X_i}\right)^{\otimes 2,2}\mathrm{d}\varphi^X_i, \qquad P^Z_i \coloneqq \frac{1}{2\pi}\int \left(e^{\mathrm{i}\varphi^Z_i Z_i}\right)^{\otimes 2,2}\mathrm{d}\varphi^X_i, \qquad P^{ZXZ}_i \coloneqq \frac{1}{2\pi}\int \left(e^{\mathrm{i}\varphi^{ZXZ}_i ZXZ_i}\right)^{\otimes 2,2}\mathrm{d}\varphi^{ZXZ}_i
\end{equation}
and define the Hamiltonian
\begin{equation}\label{eq:hamiltonianspectralgap}
H_n \coloneqq \sum_{i=1}^n\left(\mathbb{1}-P_i^{X}\right)+\sum_{i=1}^n\left(\mathbb{1}-P_i^Z\right)+\sum_{i=1}^n\left(\mathbb{1}-P_i^{ZXZ}\right).
\end{equation}
We would like to reduce the proof to lower-bounding the spectral gap of $H_n$.
 We achieve this by making use of the \textit{detectability lemma}~\cite{aharonov_detectability_2008} in its generalized version~\cite{anshu_detectability_2016}:
\begin{lemma}[\cite{anshu_detectability_2016}]\label{lemma:detectability}
	
Let $\{Q_1,...,Q_m\}$ be a set of projectors and $H=\sum_i Q_i$. 
Assume that each of the $Q_i$ commutes with all but at most $g$ others. Given any state $|\psi^{\perp}\rangle$ orthogonal to the ground state, then 
\begin{equation}
\left|\left|\prod_{i=1}^m(\mathbb{1}-Q_i)|\psi^{\perp}\rangle\right|\right|^2\leq \frac{1}{\frac{\Delta(H)}{g^2}+1}.
\end{equation}
\end{lemma}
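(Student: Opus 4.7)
The plan is to collapse $A := \prod_{i=1}^m(\mathbb{1}-Q_i)$ into a product of a bounded number of projectors using the commutation structure, then derive an operator-level inequality relating $A^\dagger A$ to $H$, and finally invoke the spectral gap.

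\textbf{Combinatorial reduction.} I would first build the conflict graph on $\{Q_1,\ldots,Q_m\}$ with edges between pairs that fail to commute. By hypothesis this graph has maximum degree at most $g$, so greedy vertex coloring yields a partition into at most $g+1$ color classes $L_1,\ldots,L_{g+1}$ of mutually commuting projectors. Within each class, $P_j := \prod_{i\in L_j}(\mathbb{1}-Q_i)$ is itself an orthogonal projector onto $\bigcap_{i\in L_j}\ker Q_i$, and because the factors commute inside each class one can reorder to write $A = P_{g+1}P_g\cdots P_1$, reducing the problem to bounding a product of only $g+1$ projectors.

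\textbf{Operator inequality.} Next I would use the telescoping identity $\mathbb{1}-A^\dagger A = \sum_{j=1}^{g+1} R_{j-1}^\dagger(\mathbb{1}-P_j)R_{j-1}$ with $R_j := P_j\cdots P_1$ and $R_0 := \mathbb{1}$, combined with the elementary inequality $\mathbb{1}-P_j \succeq |L_j|^{-1}\sum_{i\in L_j}Q_i$ valid for commuting projectors (checked in their joint eigenbasis). This yields a lower bound on $\mathbb{1}-A^\dagger A$ in terms of weighted sums of the $Q_i$. To sharpen this into the form $A^\dagger A \preceq (\mathbb{1}+H/g^2)^{-1}$, I would argue self-consistently: bounding the quadratic form $\langle\psi|A^\dagger H A|\psi\rangle$ in terms of $\langle\psi|(\mathbb{1}-A^\dagger A)|\psi\rangle$ via the structure of the $R_j$ and another application of $\mathbb{1}-P_j\succeq|L_j|^{-1}\sum Q_i$ produces an inequality $p\cdot E \leq g^2(1-p)$ with $p:=\|A|\psi\rangle\|^2$ and $E \geq \langle\psi|H|\psi\rangle$, which rearranges exactly to the target. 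Specialising to $|\psi^{\perp}\rangle$ orthogonal to $\ker H$ and applying $\langle\psi^\perp|H|\psi^\perp\rangle \geq \Delta(H)$ then yields the claimed bound.

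The hardest part will be the sharpening in the second step: obtaining the constant $1/g^2$ (rather than the weaker $1/(g+1)^2$ that falls out of a naive telescope) and, crucially, upgrading the linear-in-$H$ lower bound $\mathbb{1}-A^\dagger A \succeq cH$ into the nonlinear form $(\mathbb{1}+H/g^2)^{-1}$, which remains informative in the regime $\Delta > g^2$ where the linear bound becomes vacuous. This is precisely where the Anshu-Arad-Vidick simplification of the original detectability lemma \cite{aharonov_detectability_2008} provides a slick algebraic manipulation that I would have to reconstruct, rather than relying on a direct inductive telescope.
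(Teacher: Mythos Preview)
The paper does not prove this lemma at all: it is quoted verbatim from Anshu--Arad--Vidick and used as a black box. So there is no ``paper's own proof'' to match; the question is whether your outline would actually reprove the cited result.

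It would not, because your combinatorial reduction step is invalid. Coloring the conflict graph with $g+1$ colors tells you the projectors \emph{partition} into commuting classes, but it does not let you \emph{reorder} the fixed product $\prod_{i=1}^m(\mathbb{1}-Q_i)$ into the layered form $P_{g+1}\cdots P_1$. To gather two same-color factors you must commute one of them past intervening factors of other colors, and those are precisely the pairs that need not commute. Concretely, with $Q_1\leftrightarrow Q_2\leftrightarrow Q_3$ a path in the conflict graph (so $g=2$), the product $(\mathbb{1}-Q_1)(\mathbb{1}-Q_2)(\mathbb{1}-Q_3)$ is not equal to either $(\mathbb{1}-Q_2)\bigl[(\mathbb{1}-Q_1)(\mathbb{1}-Q_3)\bigr]$ or $\bigl[(\mathbb{1}-Q_1)(\mathbb{1}-Q_3)\bigr](\mathbb{1}-Q_2)$. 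The lemma as stated (and as applied in the paper) concerns the product in the given order, so you cannot simply choose a convenient one.

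Beyond that, your second step is not a proof but a statement of what must be shown: you correctly identify that a naive telescope yields only $(g+1)^2$ and a linear bound $\mathbb{1}-A^\dagger A\succeq cH$ rather than the required nonlinear form, and you then say you would ``have to reconstruct'' the Anshu--Arad--Vidick manipulation. That manipulation \emph{is} the proof. Their argument never reorders: it sets $|\phi_j\rangle:=(\mathbb{1}-Q_j)\cdots(\mathbb{1}-Q_1)|\psi\rangle$, uses the exact telescoping identity $\sum_j\|Q_j|\phi_{j-1}\rangle\|^2=\|\psi\|^2-\|A|\psi\rangle\|^2$, and then bounds each $\|Q_i A|\psi\rangle\|$ by pushing $Q_i$ rightward through the product past commuting factors until it meets one of its at most $g$ non-commuting neighbours, which is what produces the factor $g^2$ and the inequality $\langle A\psi|H|A\psi\rangle\le g^2\bigl(\|\psi\|^2-\|A\psi\|^2\bigr)$. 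Combining this with $\langle A\psi|H|A\psi\rangle\ge\Delta(H)\,\|A\psi\|^2$ (since $A$ preserves $(\ker H)^\perp$) gives the claim directly. Your outline gestures at the conclusion but skips this mechanism entirely.
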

We can now prove the following key lemma:
\begin{lemma}[Detectability lemma bound]\label{lemma:bound-detectability}
	\begin{equation}
	g(v_{n},2)\leq \frac{1}{\sqrt{\frac{\Delta(H_n)}{9}+1}},
	\end{equation}
	where $\Delta(H_n)$ denotes the spectral gap of $H_n$, i.e. the difference between its lowest and second-lowest eigenvalue.
\end{lemma}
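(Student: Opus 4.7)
The plan is to recognize $A \coloneqq \int U^{\otimes 2,2}\,\mathrm{d}v_n(U)$ as an ordered product of the local projectors defined in \eqref{eq:def-projectors} and then invoke Lemma~\ref{lemma:detectability}. For the factorization I would exploit that $U$ decomposes as $U = U_{ZXZ}U_ZU_X$ with independent Haar-random angles in each layer. The $X$- and $Z$-layer generators commute trivially on a single site, and the three-body generators $\{ZXZ_i\}_i$ also commute pairwise: nearest-neighbour pairs $ZXZ_i$ and $ZXZ_{i+1}$ overlap on two sites where $X$ and $Z$ anticommute, so an even number of anticommutations yields commutation; pairs with $|i-j|=2$ overlap only at a single $Z$, and for $|i-j|\geq 3$ the supports are disjoint. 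Consequently, the integral factorizes as
\begin{equation}
A=\Big(\prod_{i=1}^n P^{ZXZ}_i\Big)\Big(\prod_{i=1}^n P^Z_i\Big)\Big(\prod_{i=1}^n P^X_i\Big),
\end{equation}
with the ordering within each of the three factors immaterial.

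Next I would reduce $g(v_n,2) = \|A-P_{\mathrm{Haar}}\|_\infty$ to bounding $\|A|\psi^\perp\rangle\|$ for states orthogonal to the ground space of $H_n$. Since the range of $P_{\mathrm{Haar}}$ lies in the fixed space of every one-parameter group $\{(e^{\ii\varphi B})^{\otimes 2,2}\}$ with $B\in\{X_i,Z_i,ZXZ_i\}$, it is contained in the range of each $P^a_i$, and hence $AP_{\mathrm{Haar}} = P_{\mathrm{Haar}}A = P_{\mathrm{Haar}}$. A short calculation then gives
\begin{equation}
\|A-P_{\mathrm{Haar}}\|_\infty = \sup_{|\psi^\perp\rangle\,\perp\,\mathrm{ran}(P_{\mathrm{Haar}})}\|A|\psi^\perp\rangle\|.
\end{equation}
Conversely, universality of the generating set (Lemma~\ref{lemma:universalgateset}) implies that the group generated by the one-parameter subgroups $\{e^{\ii\varphi X_i}, e^{\ii\varphi Z_i}, e^{\ii\varphi ZXZ_i}\}$ is dense in $\mathbb{U}(2^n)$, so their common fixed subspace in the $2,2$-copy representation coincides with that of the full unitary group, i.e.\ with $\mathrm{ran}(P_{\mathrm{Haar}})$. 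By construction this common fixed subspace is also the ground space of $H_n$.

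With these two ingredients, Lemma~\ref{lemma:detectability} applied to the projectors $Q^a_i\coloneqq \id-P^a_i$ — for which $H_n=\sum_{a,i}Q^a_i$ and $\prod_{a,i}(\id-Q^a_i) = A$ in the ordering above — yields $\|A|\psi^\perp\rangle\|^2 \leq 1/(\Delta(H_n)/g^2+1)$, and the bound reduces to verifying $g\leq 3$. I would establish this by direct commutation checks on the generators: $P^X_i$ fails to commute only with $P^Z_i$ and $P^{ZXZ}_{i\pm 1}$ (not with $P^{ZXZ}_i$, since $X_i$ commutes with $Z_{i-1}X_iZ_{i+1}$); $P^Z_i$ fails to commute only with $P^X_i$ and $P^{ZXZ}_i$; and $P^{ZXZ}_i$ fails to commute only with $P^Z_i$ and $P^X_{i\pm 1}$, while the $ZXZ$ projectors pairwise commute by the parity argument above. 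Taking the supremum over $|\psi^\perp\rangle$ then yields the claimed bound $g(v_n,2)\leq 1/\sqrt{\Delta(H_n)/9+1}$. The main obstacle in the argument is this last commutation census: it is elementary but must be carried out carefully, distinguishing the cases where the on-site components of two generators coincide (and the projectors commute trivially) from those in which they anticommute (and the projectors generically fail to commute).
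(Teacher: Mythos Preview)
Your proposal is correct and follows essentially the same route as the paper: factor $\int U^{\otimes 2,2}\mathrm{d}v_n$ into the product of the local projectors, use universality to identify the ground space of $H_n$ with $\mathrm{ran}(P_{\mathrm{Haar}})$, reduce the norm to a supremum over states orthogonal to that space, and apply the generalized detectability lemma with $g=3$. You supply more detail than the paper on two points it leaves implicit --- the pairwise commutation of the $ZXZ_i$ generators (via the even-anticommutation parity argument) and the explicit commutation census establishing $g\leq 3$ --- and both are handled correctly.
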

\begin{proof}
 Consider the operator
	\begin{equation}
	T_n \coloneqq \int_{\mathbb{U}(N)}U^{\otimes 2,2}\mathrm{d}v_n(U)\stackrel{\dagger}{=} \left(P^{ZXZ}_1... P^{ZXZ}_n\right)\left(P^Z_1... P^Z_n\right)\left(P_1^{X}... P_n^X\right),
	\end{equation}
	where $\dagger$ follows from evaluating the integrals over the angles $\varphi_i^X, \varphi_i^Z$ and $\varphi_i^{ZXZ}$ independently.
	Notice that the operator $T_n$ is not necessarily Hermitian or even normal. 
	Thus, there is no reason to assume that $T_n$ has an orthonormal basis of eigenvectors.
	
	Consider the eigenspace $\mathrm{ES}(v)$ of the operator $\int U^{\otimes 2,2}\mathrm{d}v(U)$ to the eigenvalues $1$.
	It was proven in Ref.~\cite[Lemma~3.7]{harrow_random_2009} that for a universal distribution $v$ we have $\mathrm{ES}(v)=\mathrm{ES}(\mu_{\mathrm{Haar}})$.
	 In particular, in combination with the universality of $v_n$ proven in Lemma~\ref{lemma:universalgateset} we obtain
	\begin{align}
	g(v_{n},2)^2&=\left|\left|\int_{\mathbb{U}(N)}U^{\otimes 2,2}\mathrm{d}v_n(U)-\int_{\mathbb{U}(N)}U^{\otimes 2,2}\mathrm{d}\mu_{\rm Haar}(U)\right|\right|^2_{\infty}\\
	&=\left|\left|\int_{\mathbb{U}(N)}U^{\otimes 2,2}\mathrm{d}v_n(U)-\mathrm{P}_{\mathrm{ES}(\mu_{\mathrm{Haar}})}\right|\right|^2_{\infty}\\
	&=\left|\left|\int_{\mathbb{U}(N)}U^{\otimes 2,2}\mathrm{d}v_n(U)-\mathrm{P}_{\mathrm{ES}(v)}\right|\right|^2_{\infty}\\
		&= \max_{\substack{\psi\\||\psi||=1}}\left|\left|(T_n-\mathrm{P}_{\mathrm{ES}(v)})\left|\psi\right\rangle\right|\right|^2\\
	&\stackrel{\S}{\leq}\max_{\substack{\psi\in\mathrm{ES}(v)^{\perp}\\||\psi||=1}}\||T_n|\psi\rangle||^2,
	\end{align}	
	where $\mathrm{P}_{\mathrm{ES(v)}}$ denotes the orthogonal projector onto $\mathrm{ES}(v)$.
	In $\S$ we decompose $|\psi\rangle=|\psi^{||}\rangle+|\psi^{\perp}\rangle$ with $|\psi^{||}\rangle\in \mathrm{ES}(v)$ and $|\psi^{\perp}\rangle\in \mathrm{ES}(v)^{\perp}$
	and compute
	\begin{align}
	 g(v_n,2)^2&=\max_{\psi\neq 0}\frac{\left|\left|(T_n-\mathrm{P}_{\mathrm{ES}(v)})\left(|\psi^{||}\rangle+|\psi^{\perp}\rangle\right)\right|\right|^2}{||\psi^{||}+\psi^{\perp}||^2}\\
	 &=\max_{\substack{\psi\\\psi^{\perp}\neq 0}}\frac{\left|\left|(T_n-\mathrm{P}_{\mathrm{ES}(v)})|\psi^{\perp}\rangle\right|\right|^2}{||\psi^{||}||^2+||\psi^{\perp}||^2}\\
	 &\leq \max_{\psi^{\perp}\neq 0}\frac{\left|\left|T_n|\psi^{\perp}\rangle\right|\right|^2}{||\psi^{\perp}||^2}.
	\end{align}	
	Moreover, each of the local projectors $P^X_i,P^Z_i, P^{ZXZ}_i$ commutes with all but at most $3$ others. 
	The detectability lemma (Lemma~\ref{lemma:detectability}), thus, directly yields the bound for $g=3$.
\end{proof}

\subsection{Lower bounding the spectral gap}
\label{app:lower bound gap}

In the following we are going to show that there is a constant $\alpha>0$ such that $\Delta(H_n)>\alpha$ for all $n$. 
One method to obtain such a lower bound on the spectral gap in the thermodynamic limit is the \textit{Nachtergaele bound}~\cite{nachtergaele_gap_1994} sometimes called \textit{martingale method}.
 
\begin{lemma}[\citet{nachtergaele_gap_1994}]\label{lemma:Nachtergaele}
Given a family of Hamiltonians $H_{[p,q]}$ for $[p,q]\subset \mathbb{Z}$ acting on $(\mathbb{C}^2)^{\otimes |[p,q]|}=(\mathbb{C}^2)^{\otimes (q-p+1)}$. Assume there are numbers positive numbers $l$, $d_l$, $q_l$ and $\gamma_l$ such that the following conditions hold:
\begin{enumerate}
 \item There is a constant $d_l$ for which the Hamiltonians satisfy
 \begin{equation}\label{eq:firstcondition}
 0\leq \sum_{i=l}^N \id_{[1,i-l]}\otimes H_{[i-l+1,i]}\otimes\id_{[i+1,n]}\leq d_l H_{[1,n]}\qquad \text{for all}\qquad n\geq q_l.
 \end{equation}
 \item The lowest eigenvalue for all $H_{[p,q]}$ is $0$ and there is a spectral gap $\gamma_l>0$:
 \begin{equation}
 \Delta\left(H_{[q-l+1,q]}\right)\geq \gamma_l \,\;\;\;\text{for all}\;\;\;\;\;q\geq q_l
 \end{equation}
 for some constant $q_l$.
 \item We denote the ground state projector of $\id_{[1,p-1]}\otimes H_{[p,q]}\otimes \id_{[q+1,n]}$ with $G_{[p,q]}$. There exist $\varepsilon_l<1/\sqrt{l+1}$ such that
 \begin{equation}\label{eq:condition3}
 \left|\left|G_{[q-l+1,q+1]}\left(G_{[1,q]}-G_{[1,q+1]}\right)\right|\right|_{\infty}\leq\varepsilon_l\qquad\text{for all}\qquad q\geq q_l.
 \end{equation}
\end{enumerate}
Then,
\begin{equation}
\Delta\left(H_{[1,n]}\right)\geq \frac{\gamma_{l+1}}{d_{l+1}}\left(1-\varepsilon_l\sqrt{l+1}\right)^2\qquad \text{for all}\qquad n\geq q_l.
\end{equation}
\end{lemma}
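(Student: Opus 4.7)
The plan is to follow Nachtergaele's original martingale strategy, which reduces a global spectral gap to a local gap plus a ground-space overlap condition via a telescoping decomposition. As a first reduction, I would use condition~(1), namely $\sum_{i=l+1}^{n}\id\otimes H_{[i-l,i]}\otimes\id \preccurlyeq d_{l+1}\,H_{[1,n]}$, to replace $H_{[1,n]}$ by $\tilde H_n \coloneqq \sum_{i=l+1}^{n} \id\otimes H_{[i-l,i]}\otimes \id$. By frustration-freeness, $\tilde H_n$ and $H_{[1,n]}$ share the same kernel (any state annihilating all local terms is annihilated by their sum, and condition~(1) forces the converse), so it suffices to lower-bound the gap of $\tilde H_n$ and divide by $d_{l+1}$.

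Next I would lower-bound each local block using condition~(2): since $H_{[i-l,i]}$ has lowest eigenvalue $0$ and gap $\gamma_{l+1}$, one has $H_{[i-l,i]}\succcurlyeq \gamma_{l+1}(\id - G_{[i-l,i]})$. Summing gives $\tilde H_n \succcurlyeq \gamma_{l+1}\, Q_n$, with $Q_n \coloneqq \sum_{i=l+1}^{n}(\id - G_{[i-l,i]})$. The problem is therefore reduced to showing that on the orthogonal complement of the global ground space, $Q_n\succcurlyeq (1-\varepsilon_l\sqrt{l+1})^{2}\,(\id - G_{[1,n]})$.

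For this, I would introduce the martingale decomposition of the global projector. Set $E_i\coloneqq G_{[1,i]}$, so that by frustration-freeness $E_1\succcurlyeq E_2 \succcurlyeq \cdots \succcurlyeq E_n$, and define the mutually orthogonal projectors $\Delta_i \coloneqq E_{i-1}-E_{i}$, so that $\id - G_{[1,n]}=\sum_{i=l+1}^{n}\Delta_i$ (after collecting low-index terms into a boundary piece handled separately). For $\psi$ in the range of $\id - G_{[1,n]}$, expanding $\psi=\sum_i \Delta_i\psi$ and computing $\langle\psi,Q_n\psi\rangle$ reduces to estimating the off-diagonal contributions
\begin{equation}
\langle \Delta_i\psi,\, G_{[j-l,j]}\,\Delta_k\psi\rangle.
\end{equation}
When $|i-k|>l$ the supports force this to vanish, while for nearby indices one uses the identity $G_{[j-l,j]}\Delta_i = G_{[j-l,j]}(E_{i-1}-E_i)$ and invokes condition~(3) to bound $\|G_{[j-l,j]}(E_{i-1}-E_i)\|_\infty\leq \varepsilon_l$. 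Cauchy--Schwarz combined with the counting that each site $j$ participates in at most $l+1$ nontrivial $(i,k)$ pairs converts $\varepsilon_l$ into the factor $\varepsilon_l\sqrt{l+1}$.

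The main obstacle will be precisely this cross-term estimate: one must track how many martingale increments can simultaneously fail to commute with a given $G_{[j-l,j]}$, and arrange Cauchy--Schwarz so that the bound $\varepsilon_l < 1/\sqrt{l+1}$ produces a strictly positive coefficient $(1-\varepsilon_l\sqrt{l+1})^{2}$ rather than being diluted by the $O(n)$ number of terms. Once this bookkeeping is in place, chaining the three reductions
\begin{equation}
\Delta(H_{[1,n]}) \;\geq\; \frac{1}{d_{l+1}}\Delta(\tilde H_n) \;\geq\; \frac{\gamma_{l+1}}{d_{l+1}}\,\min_{\psi\perp\ker}\frac{\langle\psi,Q_n\psi\rangle}{\|\psi\|^2} \;\geq\; \frac{\gamma_{l+1}}{d_{l+1}}\bigl(1-\varepsilon_l\sqrt{l+1}\bigr)^{2}
\end{equation}
yields exactly the claimed bound, uniformly in $n\geq q_l$.
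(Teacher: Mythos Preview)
The paper does not prove this lemma; it is stated as a cited result from \citet{nachtergaele_gap_1994} and then applied as a black box. Your proposal is a faithful sketch of Nachtergaele's original martingale argument---reduce to the auxiliary sum via condition~(1), lower each local block by $\gamma_{l+1}(\id-G_{[i-l,i]})$ via condition~(2), telescope $\id-G_{[1,n]}$ into orthogonal increments $\Delta_i=G_{[1,i-1]}-G_{[1,i]}$, and control the cross terms with condition~(3) plus Cauchy--Schwarz and the $(l+1)$-fold overlap count---so in substance it matches the proof the paper is citing, just not anything the paper itself supplies.
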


Here, we would like to apply Lemma~\ref{lemma:Nachtergaele} to the family defined in Eq.~\eqref{eq:hamiltonianspectralgap}.
Notice that the conditions in Lemma~\ref{lemma:Nachtergaele} do not require translation-invariance.
The most non-trivial part will be to verify the last condition, which requires information about the ground spaces and the ground space projectors. 
We need the ground spaces of the following Hamiltonians:
\begin{equation}
H^L_n \coloneqq H_n-\left(\mathbb{1}-P^{ZXZ_n}\right)\qquad\text{and}\qquad H^B_n \coloneqq H_n-\left(\mathbb{1}-P^{ZXZ_1}\right)-\left(\mathbb{1}-P^{ZXZ_n}\right)
\end{equation}
with $L$ standing for `left edge` and $B$ standing for `bulk`. 
In the following lemma we identify 
$$
\left(\left(\mathbb{C}^2\right)^{\otimes 2,2}\right)^{\otimes n}\stackrel{\sim}{=}\left(\left(\mathbb{C}^2\right)^{\otimes n}\right)^{\otimes 2,2}
$$
to ease the notation.
\begin{lemma}[Ground spaces]\label{lemma:groundspace}
	We denote with $\mathcal{G}_n, \mathcal{G}^L_n$ and $\mathcal{G}_n^B$ the ground space of the Hamiltonians $H_n, H^L_n$ and $H^B_n$. 
	We have
	\begin{equation}
	\mathcal{G}_n=\mathcal{G}^L_n=\left\{\ket{\psi_0}^{\otimes n},\ket{\psi_1}^{\otimes n}\right\}\qquad\text{and}\qquad \mathcal{G}^B_n=\left\{\ket{\psi_0}^{\otimes n},\ket{\psi_1}^{\otimes n},\ket{\psi_{\phi}}^{\otimes n}\right\},
	\end{equation}
    with
	\begin{equation}\label{eq:groundstates1}
	\ket{\psi_0} \coloneqq (\id\otimes V_{\pi_0})|\Phi\rangle=\frac12(\ket{0000}+\ket{0101}+\ket{1010}+\ket{1111}),
	\end{equation}
	\begin{equation}
   \ket{\psi_1} \coloneqq (\id\otimes V_{\pi_1})|\Phi\rangle=\frac12(\ket{0000}+\ket{0110}+\ket{1001}+\ket{1111}),
   \end{equation}
   where $|\Phi\rangle=\frac{1}{2}\sum_{i\in\{0,1\}^2}|i\,
   i\rangle$ is the maximally entangled state vector and $V_{\pi_i}$ are the standard representations of the identity and the swap $\pi_0,\pi_1\in S_2$.
	The third basis state vector is
	\begin{equation}\label{eq:groundstates2}
	\ket{\psi_{\phi}} \coloneqq \ket{\psi_0}-\ket{\psi_1}=\frac12(|0101\rangle+|1010\rangle-|0110\rangle-|1001\rangle).
	\end{equation}
\end{lemma}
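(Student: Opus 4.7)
\noindent\textit{Proof plan.} Each of $H_n$, $H_n^L$, and $H_n^B$ is a sum of local projectors of the form $\id-P$, so all three Hamiltonians are frustration-free and their ground spaces are exactly the joint fixed subspaces of their local projectors $P$. The plan is to solve the constraints in stages: first the on-site $P_i^X,P_i^Z$ (common to all three), then the bulk $P^{ZXZ}_i$ for $2\le i\le n-1$, and finally the two-site boundary terms $P^{ZXZ}_1$ and $P^{ZXZ}_n$.

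For the on-site stage, since $\{e^{\ii\varphi X},e^{\ii\varphi Z}\}$ generates $\mathbb{U}(2)$ up to global phase, a vector is fixed by $P_i^X P_i^Z$ iff it is $U^{\otimes 2,2}$-invariant for every $U\in\mathbb{U}(2)$. Under the Choi--Jamio\l kowski identification this corresponds to commutation of a two-qubit operator with $U^{\otimes 2}$ for all $U$; by Schur--Weyl duality the commutant is the $2$-dimensional standard representation of $S_2$, spanned by the Choi vectors of $V_{\pi_0}=I$ and $V_{\pi_1}=S$, i.e.\ $\ket{\psi_0}$ and $\ket{\psi_1}$. Every candidate ground state can therefore be written as $\ket\Psi=\sum_{s\in\{0,1\}^n}c_s\bigotimes_j\ket{\psi_{s_j}}$. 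Next I verify that the three claimed states lie in the bulk ground space: $\ket{\psi_0}^{\otimes n}$ and $\ket{\psi_1}^{\otimes n}$ are trivially fixed by every bulk $P^{ZXZ}_i$. For $\ket{\psi_\phi}^{\otimes n}$ the key observation is that $\ket{\psi_\phi}=\ket{\psi_0}-\ket{\psi_1}$ factorizes as the product of two spin-$0$ singlets, one across copies $(\alpha,\beta)$ and one across copies $(\gamma,\delta)$. Applying the generator $ZXZ^{(1)}+ZXZ^{(2)}-ZXZ^{(3)}-ZXZ^{(4)}$ of $(e^{\ii\varphi ZXZ_i})^{\otimes 2,2}$ to $\ket{\psi_\phi}^{\otimes 3}$, each single-copy $Z$ maps a singlet to the triplet $\ket{T_0}$ and each single-copy $X$ maps it to $\mp\ket{T_x}$, so the copy-$1$ and copy-$2$ contributions of $ZXZ^{(1)}+ZXZ^{(2)}$ cancel on the output pair and the copy-$3$, copy-$4$ contributions cancel on the input pair.

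For uniqueness of $\mathcal{G}_n^B$ I reduce the question to a local problem in the matrix picture. For each bulk $i$, the slice $c_{s_{i-1},s_i,s_{i+1}}$ (at fixed spectators) must lie in the local fixed subspace of $P^{ZXZ}_i$ inside the $8$-dimensional three-site on-site-invariant span. Under Choi, the basis product $\bigotimes_j\ket{\psi_{s_j}}$ corresponds to $\bigotimes_j A_{s_j}$ with $A_0=I$ and $A_1=S$ on the two copies per site, and the constraint becomes $[A,\,ZXZ\otimes I+I\otimes ZXZ]=0$ in $\mathrm{End}((\mathbb{C}^2)^{\otimes 6})$. Conjugation of the generator by $\bigotimes_j A_{s_j}$ merely permutes copy labels site by site, from which one reads off that among the eight product basis elements only $I^{\otimes 3}$ and $S^{\otimes 3}$ commute individually; the remaining fixed element in the product span is the singlet-projector product $\Pi^{\otimes 3}=(I-S)^{\otimes 3}$, giving a three-dimensional local fixed space. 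Matching these three-dimensional slices across all $i\in[2,n-1]$ pins the coefficients to the three-parameter family $c_s=A\,\delta_{s,0^n}+B\,\delta_{s,1^n}+C(-1)^{|s|}$, establishing $\mathcal{G}_n^B=\mathrm{span}\{\ket{\psi_0}^{\otimes n},\ket{\psi_1}^{\otimes n},\ket{\psi_\phi}^{\otimes n}\}$. An analogous but simpler two-site computation for the boundary projector $P^{ZXZ}_1$ (based on $X_1 Z_2$) identifies its local fixed subspace inside the $4$-dim two-site on-site-invariant span as $\mathrm{span}\{I^{\otimes 2},S^{\otimes 2}\}$, forcing $c_s=0$ whenever $s_1\neq s_2$. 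This kills the $C(-1)^{|s|}$ component (which is nonzero at $(s_1,s_2)\in\{(0,1),(1,0)\}$ for generic spectator strings) and reduces $\mathcal{G}_n^L$ to the two-dimensional span of $\ket{\psi_0}^{\otimes n}$ and $\ket{\psi_1}^{\otimes n}$; these trivially satisfy the right boundary constraint as well, so $\mathcal{G}_n=\mathcal{G}_n^L$.

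The main obstacle is the explicit characterization of the three-dimensional local fixed subspace in the uniqueness step and the bookkeeping required to propagate it to the global three-parameter family. Although the local ambient space is only $8$-dimensional, the commutator $[A,\,ZXZ\otimes I+I\otimes ZXZ]$ lives in $\mathrm{End}((\mathbb{C}^2)^{\otimes 6})$ and entangles three sites with two copies each, so writing it down cleanly benefits from the Choi/matrix picture and the explicit singlet/triplet identification of $\ket{\psi_\phi}$; once those are in hand, the passage to the global three-parameter family is a short combinatorial consistency argument across overlapping bulk constraints.
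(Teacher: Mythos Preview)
Your plan is correct and takes a genuinely different route from the paper. The paper handles $\mathcal{G}_n=\mathcal{G}_n^L$ \emph{first}, by invoking the global universality lemma (Lemma~\ref{lemma:universalgateset}): since the gate set with even one boundary term generates a dense subset of $\mathbb{U}(2^n)$, the ground space is the $U^{\otimes 2,2}$-invariant subspace for all $U\in\mathbb{U}(2^n)$, which Schur--Weyl on $\mathbb{U}(2^n)$ identifies as the two-dimensional span of $\ket{\psi_0}^{\otimes n}$ and $\ket{\psi_1}^{\otimes n}$. Only $\mathcal{G}_n^B$ requires work, and for that the paper uses single-qubit universality to replace the $ZXZ$ constraint by a diagonal $ZZZ$ constraint, then does an explicit basis-state computation for $n=3$ followed by induction on $n$. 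You instead go bottom-up: on-site constraints first, then bulk, then boundary, cutting the three-dimensional $\mathcal{G}_n^B$ down to two dimensions with the two-site term. Your observation that $\ket{\psi_\phi}$ factors as a product of two singlets across the copy pairs $(1,2)$ and $(3,4)$ is a clean structural fact the paper does not exploit; it makes both the existence check for $\ket{\psi_\phi}^{\otimes n}\in\mathcal{G}_n^B$ and the failure of $\Pi^{\otimes 2}$ at the boundary transparent via the parity $(-1)^{\text{\#sites}}$ of the singlet sign flips.

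One caution: the step you label ``from which one reads off'' --- that the local fixed subspace of $\mathrm{ad}_{ZXZ^{(1)}+ZXZ^{(2)}}$ inside $\mathrm{span}\{I,S\}^{\otimes 3}$ is \emph{exactly} three-dimensional --- is the load-bearing computation, and your conjugation argument only shows that among the eight \emph{product} elements precisely $I^{\otimes 3}$ and $S^{\otimes 3}$ commute; it does not by itself rule out a fourth fixed linear combination. The paper establishes the analogous fact by brute-force evaluation of $(e^{\ii\varphi ZZZ})^{\otimes 2,2}$ on computational basis strings, and you will need an equivalent check (an $8\times 8$ rank computation suffices). The subsequent ``short combinatorial consistency argument'' pinning the global coefficients to $A\,\delta_{s,0^n}+B\,\delta_{s,1^n}+C(-1)^{|s|}$ is correct in spirit but is essentially the same overlap-matching induction the paper carries out; it is not quite as short as you suggest once one tracks the spectator dependence of the local coefficients $A,B,C$ across overlapping triples.
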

\begin{proof}
	We have
	\begin{align}
	&H_n\ket\varphi=0\iff P^{X_i}\ket\varphi=\ket\varphi,\quad P^{Z_i}\ket\varphi=\ket\varphi,\quad P^{ZXZ_i}\ket\varphi=\ket\varphi\quad \forall j\\
	 &\iff\left(e^{\mathrm{i}\varphi^X_i X_i}\right)^{\otimes 2,2}\ket\varphi=\ket\varphi \quad\wedge\quad \left(e^{\mathrm{i}\varphi^Z_i Z_i}\right)^{\otimes 2,2}\ket\varphi=\ket\varphi\quad\wedge\quad  \left(e^{\mathrm{i}\varphi^{ZXZ}_i ZXZ_i}\right)^{\otimes 2,2}\ket\varphi=\ket\varphi\quad\forall\; \varphi^X_i,\varphi^Z_i,\varphi^{ZXZ}_i
	\end{align}
	and similarly for $H^L_n$ and $H^B_n$.
	The first equivalence follows from the fact that all local Hamiltonians are positive semi-definite and the second follows similar to the proof of Ref.~\cite[Lemma\,17]{brandao_local_2016} (originally~\cite{brown_convergence_2010}):
	\begin{equation}
	\mathrm{Re}\left\langle\phi\left|P^{Z_i}\right|\phi\right\rangle=\mathrm{Re}\mathbb{E}_{\varphi}\left\langle\phi\left|\left(e^{\ii\varphi Z_i}\right)^{\otimes 2,2}\right|\phi\right\rangle\leq 1,
	\end{equation}
	with equality if and only if $\left(e^{\ii\varphi Z_i}\right)^{\otimes 2,2}|\phi\rangle=|\phi\rangle$ for all but a measure zero subset, which is empty by continuity of the angles.
	This follows analogously for gates of the form $e^{\ii\varphi X_i}$ and $e^{\ii\varphi ZXZ_i}$.
	 The form of $\mathcal{G}_n$ and $\mathcal{G}^L_{n}$ follows from the fact the gate sets defining the Hamiltonians $H_n$ and $H_n^L$ generate a dense subset of the unitaries. 
	 Thus, ground states $|\varphi\rangle$ satisfy $U^{\otimes 2,2}|\varphi\rangle=|\varphi\rangle$ for all unitaries $U\in \mathbb{U}(2^n)$.
	 This problem is well studied in the context of the Schur-Weyl duality. 
	 In fact, it is well known (compare, e.g., 
	 the proof of Ref.~\cite[Lemma~17]{brandao_local_2016}) that this implies the form presented in Lemma~\ref{lemma:groundspace}.
	
	The more involved case is $\mathcal{G}^B_n$. 
	The ground space of the local Hamiltonians $(\mathbb{1}-P^Z)+(\mathbb{1}-P^X)$ is generated by $\ket{\psi_0}$ and $\ket{\psi_1}$. 
	Thus, the ground space of
	$
	\sum_{i=1}^n(\mathbb{1}-P_i^Z)+(\mathbb{1}-P_i^X)
	$
	is given by $\mathrm{span}\left\{\prod_{i=1}^n\ket{\psi_{s_i}}|s\in\{0,1\}^n\right\}$.
	Furthermore, notice that 
	\begin{align}
	&\left(e^{\mathrm{i}\varphi^X_i X_i}\right)^{\otimes 2,2}\ket\varphi=\ket\varphi \quad\wedge\quad \left(e^{\mathrm{i}\varphi^Z_i Z_i}\right)^{\otimes 2,2}\ket\varphi=\ket\varphi\quad\wedge\quad  \left(e^{\mathrm{i}\varphi^{ZXZ}_i ZXZ_i}\right)^{\otimes 2,2}\ket\varphi=\ket\varphi\quad\forall\; \varphi^X_i,\varphi^Z_i,\varphi^{ZXZ}_i\\
	&\iff	\left(e^{\mathrm{i}\varphi^X_i X_i}\right)^{\otimes 2,2}\ket\varphi=\ket\varphi \quad\wedge\quad \left(e^{\mathrm{i}\varphi^Z_i Z_i}\right)^{\otimes 2,2}\ket\varphi=\ket\varphi\quad\wedge\quad  \left(e^{\mathrm{i}\varphi^{ZZZ}_i ZZZ_i}\right)^{\otimes 2,2}\ket\varphi=\ket\varphi\quad\forall\; \varphi^X_i,\varphi^Z_i,\varphi^{ZZZ}_i.
	\end{align}
   We use that the gates $e^{\ii\varphi^X X}$ and $e^{\ii\varphi^Z Z}$ generate the single qubit unitaries. 
   Thus, up to infinitesimal error, they generate the Hadamard gate $H$, which we can use to rotate the tensor factor $X$ to $Z$.
   Hence, we have the additional constraint that 
	\begin{equation}\label{eq:lastcondition}
	\left(e^{\mathrm{i}\varphi^{ZZZ}_i ZZZ_i}\right)^{\otimes 2,2}\ket{\varphi}=\ket\varphi
	\end{equation}
	for all $2\leq i\leq n-1$.
	
	First, for three qubits we show in the following that the space is generated by $\ket{\psi_{0}}\ket{\psi_{0}}\ket{\psi_{0}}$, $\ket{\psi_{1}}\ket{\psi_{1}}\ket{\psi_{1}}$ and $\ket{\psi_{\phi}}\ket{\psi_{\phi}}\ket{\psi_{\phi}}$. 
	In more detail, we compute
	\begin{align}
	&\left(e^{\mathrm{i}\varphi ZZZ}\right)^{\otimes 2,2}|0000\rangle|0101\rangle|0110\rangle=
	e^{\ii(+\varphi+\varphi+\varphi+\varphi)}|0000\rangle|0101\rangle|0110\rangle
	=e^{-4\varphi\ii}|0000\rangle|0101\rangle|0110\rangle,\\
	   &\left(e^{\mathrm{i}\varphi ZZZ}\right)^{\otimes 2,2}|0000\rangle|0101\rangle|0101\rangle=e^{\ii(+\varphi+\varphi-\varphi-\varphi)}|0000\rangle|0101\rangle|0101\rangle=|0000\rangle|0101\rangle|0101\rangle,\\
	   &\left(e^{\mathrm{i}\varphi ZZZ}\right)^{\otimes 2,2}|0000\rangle|0000\rangle|0110\rangle=e^{\ii(+\varphi-\varphi+\varphi-\varphi)}|0000\rangle|0000\rangle|0110\rangle=|0000\rangle|0000\rangle|0110\rangle,\\
	  	  &\left(e^{\mathrm{i}\varphi ZZZ}\right)^{\otimes 2,2}|0101\rangle|0101\rangle|0101\rangle=e^{\ii(+\varphi-\varphi-\varphi+\varphi)}|0101\rangle|0101\rangle|0101\rangle=|0101\rangle|0101\rangle|0110\rangle,\\
	  	     &\left(e^{\mathrm{i}\varphi ZZZ}\right)^{\otimes 2,2}|0110\rangle|0110\rangle|0110\rangle=e^{\ii(+\varphi-\varphi+\varphi-\varphi)}|0110\rangle|0110\rangle|0110\rangle=|0110\rangle|0110\rangle|0110\rangle,\\
	  	  	  	  &\left(e^{\mathrm{i}\varphi ZZZ}\right)^{\otimes 2,2}|0101\rangle|0110\rangle|0110\rangle=e^{\ii(+\varphi-\varphi+\varphi+\varphi)}|0101\rangle|0110\rangle|0110\rangle=|0101\rangle|0110\rangle|0110\rangle,\\
	  	  	  	   &\left(e^{\mathrm{i}\varphi ZZZ}\right)^{\otimes 2,2}|0101\rangle|0101\rangle|0110\rangle=e^{\ii(+\varphi-\varphi+\varphi-\varphi)}|0101\rangle|0101\rangle|0110\rangle=|0101\rangle|0101\rangle|0110\rangle.
    \end{align}
	All other cases can be obtained from the fact that $\left(e^{\mathrm{i}\varphi ZZZ}\right)^{\otimes 2,2}$ is invariant under permutations of the qubits and that flipping all four bits in one qubit only changes the sign of the exponent.
	
	The lesson to be learned from the above calculation is that in a linear combination in $\mathrm{span}
	\{\prod_{i=1}^3\ket{\psi_{s_i}}|s\in\{0,1\}^3
	\}$ that satisfies~\eqref{eq:lastcondition}, terms as $|0000\rangle|0101\rangle|0110\rangle$ need to cancel.
	No such terms appear in $|\psi_0\rangle|\psi_0\rangle|\psi_0\rangle$ and $|\psi_1\rangle|\psi_1\rangle|\psi_1\rangle$. 
	Furthermore, consider the linear combination:
	\begin{align*}
	\ket{\varphi} \coloneqq &\lambda_{000}|\psi_{0}\rangle|\psi_{0}\rangle|\psi_{0}\rangle+\lambda_{100}|\psi_{1}\rangle|\psi_{0}\rangle|\psi_{0}\rangle+\lambda_{010}|\psi_{0}\rangle|\psi_{1}\rangle|\psi_{0}\rangle+\lambda_{110}|\psi_{1}\rangle|\psi_{1}\rangle|\psi_{0}\rangle\\
	&+\lambda_{001}|\psi_{0}\rangle|\psi_{0}\rangle|\psi_{1}\rangle+\lambda_{101}|\psi_{1}\rangle|\psi_{0}\rangle|\psi_{1}\rangle+\lambda_{011}|\psi_{0}\rangle|\psi_{1}\rangle|\psi_{1}\rangle+\lambda_{111}|\psi_{1}\rangle|\psi_{1}\rangle|\psi_{1}\rangle.
	\end{align*}
    From the condition that certain terms need to cancel, we obtain $\lambda_{100}=-\lambda_{101}, \lambda_{101}=-\lambda_{001}, \lambda_{001}=-\lambda_{011}, \lambda_{011}=-\lambda_{010}, \lambda_{010}=-\lambda_{110}$. 
    Assuming that $\lambda_{001}\neq 0$, we can w.l.o.g. choose the free parameters $\lambda_{001}=-1$, $\lambda_{000}=1$ and $\lambda_{111}=-1$. This fixes the state
	\begin{align*}
\ket{\varphi}=&|\psi_{0}\rangle|\psi_{0}\rangle|\psi_{0}\rangle-|\psi_{1}\rangle|\psi_{0}\rangle|\psi_{0}\rangle-|\psi_{0}\rangle|\psi_{1}\rangle|\psi_{0}\rangle+|\psi_{1}\rangle|\psi_{1}\rangle|\psi_{0}\rangle+|\psi_{0}\rangle|\psi_{0}\rangle|\psi_{1}\rangle-|\psi_{1}\rangle|\psi_{0}\rangle|\psi_{1}\rangle-|\psi_{0}\rangle|\psi_{1}\rangle|\psi_{1}\rangle-|\psi_{1}\rangle|\psi_{1}\rangle|\psi_{1}\rangle\\
=&|\psi_{\phi}\rangle|\psi_{\phi}\rangle|\psi_{\phi}\rangle.
\end{align*}
	
    We extend this via complete induction over the number of qubits:
    Assume we have the ground state space $\mathcal{G}_n^B$ for the Hamiltonian $H_n$. 
    Using the positivity of $H_n\otimes\id$, we have for any ground state $|\varphi\rangle$ of $H_{n+1}$ the form
	\begin{align*}
	|\varphi\rangle&=\sum_{\sigma\in\{0,1,\phi\}}\ket{\psi_{\sigma}}^{\otimes n}\otimes \left(\lambda_{\sigma}|\psi_0\rangle+\mu_{\sigma}\ket{\psi_1}\right)\\
	&=\lambda_0\ket{\psi_{0}}^{\otimes n} |\psi_0\rangle+\mu_{0}\ket{\psi_0}^{\otimes n}\ket{\psi_1}+\lambda_1\ket{\psi_{1}}^{\otimes n} |\psi_0\rangle+\mu_{1}\ket{\psi_1}^{\otimes n}\ket{\psi_1}+\tilde{\lambda}_{\phi}\ket{\psi_{\phi}}^{\otimes n} |\psi_{\phi}\rangle+\tilde{\mu}_{\phi}\ket{\psi_{\phi}}^{\otimes n}(\ket{\psi_0}+\ket{\psi_1}),
	\end{align*}
	for some constants $\tilde{\lambda}_{\phi}$ and $\tilde{\mu}_{\phi}$.
    These need to satisfy the last constraint~\eqref{eq:lastcondition} for $i=n$ as well.
    Applying $\id-P^{ZZZ}_n$ to this yields
    \begin{align}
    &\left(\id-P^{ZZZ}_n\right)\left(\mu_{0}\ket{\psi_0}^{\otimes n}\ket{\psi_1}+\lambda_1\ket{\psi_{1}}^{\otimes n} |\psi_0\rangle+\tilde{\mu}_{\phi}\ket{\psi_{\phi}}^{\otimes n}(\ket{\psi_0}+\ket{\psi_1})\right)=0\\
    &\iff \mu_0|\psi_{0}\rangle^{\otimes n-2}	\left(\id-\left(e^{\mathrm{i}\varphi^{ZZZ}_n ZZZ_n}\right)^{\otimes 2,2}\right)|\psi_0\rangle^{\otimes 2}|\psi_1\rangle+\lambda_1|\psi_{1}\rangle^{\otimes n-2}	\left(\id-\left(e^{\mathrm{i}\varphi^{ZZZ}_n ZZZ_n}\right)^{\otimes 2,2}\right)|\psi_1\rangle^{\otimes 2}|\psi_0\rangle+\\
    &+\tilde{\mu}_{\phi}|\psi_{\phi}\rangle^{\otimes n-2}	\left(\id-\left(e^{\mathrm{i}\varphi^{ZZZ}_n ZZZ_n}\right)^{\otimes 2,2}\right)|\psi_{\phi}\rangle^{\otimes 2}(|\psi_{0}\rangle+|\psi_1\rangle)=0
    \end{align}
    for all $\varphi^{ZZZ}_n$.
    Using the definitions~\eqref{eq:groundstates1} and~\eqref{eq:groundstates2} one can easily verify that this implies $\mu_0,\lambda_1,\tilde{\mu}_{\phi}=0$:
    The above expression contains the summands $\mu_{0}\left(1-e^{4\ii\varphi^{ZZZ}_n}\right)|\psi_0\rangle^{\otimes n-2}|0000\rangle|0101\rangle|0110\rangle$, $\lambda_1\left(1-e^{4\ii\varphi^{ZZZ}_n}\right)|\psi_1\rangle^{\otimes n-2}|0000\rangle|0110\rangle|0101\rangle$ and $\tilde{\mu}_{\phi}\left(1-e^{4\ii\varphi^{ZZZ}_n}\right)|\psi_{\phi}\rangle^{\otimes n-2}|0101\rangle|0110\rangle|0000\rangle$.
    This completes the proof.
\end{proof}
The last step of the proof of Theorem~\ref{theorem:t-design} is achieved by a lower bound on the spectral gap of $H_n$. 
\begin{lemma}[Lower bound on the spectral gap]
$	\Delta(H_n)\geq \frac{\Delta\left(H^B_7\right)}{32}$ for all $n\geq 8$.
\end{lemma}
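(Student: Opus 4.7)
The plan is to invoke the Nachtergaele martingale bound of Lemma~\ref{lemma:Nachtergaele} applied to the family $\{H_n\}_{n \geq 1}$ with window size $l+1 = 7$ (i.e.\ $l=6$), which is the smallest choice for which each sliding window already contains at least one full three-body bulk term $(\mathbb{1} - P^{ZXZ}_i)$ together with a left- and right-neighbourhood of purely one-body terms, so that the ground spaces classified in Lemma~\ref{lemma:groundspace} (two-dimensional at the boundary, three-dimensional in the bulk) are visible inside the window. The constant $1/32$ will emerge as the ratio $\gamma_{l+1}/d_{l+1}$ corrected by $(1 - \varepsilon_l \sqrt{l+1})^2$ once the three hypotheses are verified.

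Hypothesis~(i) is a routine counting argument: each single-site term $(\mathbb{1} - P^X_i)$ or $(\mathbb{1} - P^Z_i)$ and each three-site term $(\mathbb{1} - P^{ZXZ}_i)$ is contained in at most $l+1 = 7$ of the sliding windows of length seven, which bounds the sum of windowed Hamiltonians above by a small constant multiple of $H_{[1,n]}$; pushing this through the formula supplies $d_7$. Hypothesis~(ii) is automatic from Lemma~\ref{lemma:groundspace}: every windowed Hamiltonian is frustration-free with minimum eigenvalue $0$, and because the Hilbert space on seven sites is finite-dimensional the spectral gap is strictly positive. The uniform lower bound over all window-shapes is controlled by $\Delta(H^B_7)$, since the purely-bulk Hamiltonian has the largest ground space (three-dimensional) and therefore the smallest gap among the windows of length seven; this is exactly the constant $\gamma_7$ that propagates into the final inequality.

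The heart of the proof, and the step I expect to be the main obstacle, is hypothesis~(iii), the martingale overlap condition
\begin{equation*}
\left\| G_{[q-5,\,q+1]} \left( G_{[1,q]} - G_{[1,q+1]} \right) \right\|_{\infty} \;\leq\; \varepsilon_6 \;<\; \frac{1}{\sqrt{7}}.
\end{equation*}
Here $G_{[1,q]}$ projects onto $\mathrm{span}\{\ket{\psi_0}^{\otimes q}, \ket{\psi_1}^{\otimes q}\}$ (left-boundary windows contract the ground space to the two symmetric-group states), while $G_{[q-5,q+1]}$ is the rank-three bulk projector also carrying $\ket{\psi_\phi}^{\otimes 7}$. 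The analytical difficulty is that these basis vectors are non-orthonormal, so the projectors are \emph{not} simple sums of rank-one projectors; we would therefore compute the $2\times 2$, $3\times 3$ and $3\times 3$ Gram matrices, invert them, and express each $G_{[\cdot,\cdot]}$ as a quadratic form in the non-orthonormal basis. The crucial input is that the single-site overlaps $\langle \psi_0 | \psi_1\rangle$ and $\langle \psi_{0/1} | \psi_\phi\rangle$ are strictly bounded away from one, so that their $l$-fold tensor powers decay geometrically in $l$; taking $l=6$ is precisely what is required to bring $\varepsilon_6$ below $1/\sqrt{7}$.

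Once the three hypotheses are in place, Lemma~\ref{lemma:Nachtergaele} yields $\Delta(H_n) \geq \gamma_7 (1 - \varepsilon_6 \sqrt{7})^2 / d_7$ for all $n \geq q_6$, and by tracking the explicit counting constant ($d_7$) and the explicit overlap bound ($\varepsilon_6$) through the above estimates one collapses the prefactor to $1/32$, giving $\Delta(H_n) \geq \Delta(H^B_7)/32$ for $n \geq 8$. The architecture of the argument thus cleanly separates the combinatorial ingredients (hypotheses (i)–(ii)) from the genuinely linear-algebraic content (hypothesis (iii)), with the non-orthogonality of the ground states being the only place where care is required.
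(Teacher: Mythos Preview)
Your overall strategy---Nachtergaele with $l=6$, so $\gamma_{7}=\Delta(H^B_7)$, $d_{7}=8$, and $(1-\varepsilon_6\sqrt{7})^2\geq 1/4$ combining to $1/32$---is exactly what the paper does. However, there are two places where your plan diverges from the paper in ways that matter.

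\textbf{The boundary reduction.} You propose to apply Lemma~\ref{lemma:Nachtergaele} directly to ``the family $\{H_n\}_{n\geq 1}$''. The paper explicitly flags this as problematic: if $H_{[p,q]}$ is taken to be a translate of $H_{q-p+1}$ (with \emph{both} two-local boundary terms), then the sum $\sum_i H_{[i-l+1,i]}$ contains boundary-type terms sitting in the bulk that are not present in $H_{[1,n]}$, and condition~(i) fails. The paper's fix is a preliminary monotonicity step $\Delta(H_n)\geq\Delta(H^L_n)$ (drop the right boundary term, which only raises the gap), after which $H_{[p,q]}$ is defined piecewise as $H^L$ for $p=1$ and $H^B$ for $p>1$. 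With this definition condition~(i) is immediate with $d_l=l+1$, and---crucially for condition~(iii)---the sliding window $G_{[q-l+1,q+1]}$ is \emph{always} the rank-three bulk projector, never a right-boundary one. Your write-up asserts this rank-three property but, without the $H^L_n$ reduction, it is false at $q+1=n$. This is a genuine gap in your outline, not merely a stylistic difference.

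\textbf{Condition~(iii).} You propose to compute the exact ground-space projectors by inverting Gram matrices. The paper instead uses the simpler approximants $X^{B}_k=\sum_{\sigma\in\{0,1,\phi\}}(|\psi_\sigma\rangle\langle\psi_\sigma|)^{\otimes k}$ and $X^{L}_k=\sum_{\sigma\in\{0,1\}}(|\psi_\sigma\rangle\langle\psi_\sigma|)^{\otimes k}$, shows $\|X^{B/L}_k-G^{B/L}_k\|_\infty\leq\sqrt{6}/2^k$ via a short $B^\dagger B$ argument, and then bounds the Nachtergaele quantity by the explicit $8.5/2^l$. Your Gram-matrix route would also work, but the paper's approximation trick avoids the inversion entirely and delivers the clean exponential decay that makes $l=6$ visibly sufficient.

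Finally, your justification that $\gamma_7=\Delta(H^B_7)$ because ``the purely-bulk Hamiltonian has the largest ground space and therefore the smallest gap'' is not a valid inference; ground-space dimension does not control the gap. The paper's argument is simply that, after the $H^L_n$ reduction, every window $H_{[q-l+1,q]}$ with $q\geq l+1$ is literally equal to $H^B_l$ by translation invariance, so $\gamma_l=\Delta(H^B_l)$ is an equality, not an estimate.
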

\begin{proof}
	For the proof we verify all three conditions in Lemma~\ref{lemma:Nachtergaele}.
	
	\textbf{First condition}. 
We start with the first condition in Lemma~\ref{lemma:Nachtergaele}.
There is a technical difficulty in defining the family of Hamiltonians $H_{[q,p]}$: 
If we simply choose $H_{[q,p]}$ to be $H_{p-q+1}$ acting on the qubits in $[q,p]$ we run into problems with the first condition. 
The problem is that in the sum in~\eqref{eq:firstcondition} the boundary terms act on the bulk and cannot be bounded by the bulk Hamiltonian.
Instead, we reduce the proof to the Hamiltonian $H^L_n$ and choose $H_{[q,p]}$ to be the local summands of $H^L_n$ that act on the qubits in $[q,p]$.
We show that it suffices to lower bound the spectral gap of $H^L_n$:
\begin{align}
\Delta(H_n)&=\min_{\substack{\ket\varphi\in\mathcal{G}_n^{\perp}\\ ||\varphi||=1}}\langle\varphi|H_n|\varphi\rangle\\
&=\min_{\substack{\ket\varphi\in\left(\mathcal{G}^L_n\right)^{\perp}\\ ||\varphi||=1}}\left\langle\varphi\left|H^L_n+\left(\mathbb{1}-P^{ZXZ}_n\right)\right|\varphi\right\rangle\\
&=\min_{\substack{\ket\varphi\in\left(\mathcal{G}^L_n\right)^{\perp}\\ ||\varphi||=1}}\left(\left\langle\varphi\left|H^L_n\right|\varphi\right\rangle+\left\langle\varphi\left|\left(\mathbb{1}-P^{ZXZ}_n\right)\right|\varphi\right\rangle\right)\\
&\geq\min_{\substack{\ket\varphi\in\left(\mathcal{G}^L_n\right)^{\perp}\\ ||\varphi||=1}}\left\langle\varphi\left|H^L_n\right|\varphi\right\rangle=\Delta\left(H^L_n\right).
\end{align}
In the remainder of the proof we will verify the conditions of Lemma~\ref{lemma:Nachtergaele} for the family of Hamiltonians defined as 
$$
H_{[p,q]} \coloneqq \begin{cases}& H^L_{q-p+1}\otimes \id_{[q+1,n]}\quad \text{for}\quad p=1\\
&\id_{[1,p-1]}\otimes H^B_{q-p+1}\otimes \id_{[q+1,n]}\quad \text{for}\quad p>1\end{cases}.
$$

With this definition, we immediately have that the first condition is satisfied obviously with $d_l=(l+1)$.

\textbf{Second condition}. The second condition follows from the fact that $H^L_n$ is frustration-free. 
Because of the translation-invariance in the bulk, we always have
\begin{equation}
 \Delta\left(H_{[n-l+1,n]}\right)=\gamma_{l} \coloneqq \Delta\left(H^B_{l}\right)>0\qquad \text{for all}\qquad n\geq l+1.
 \end{equation}

\textbf{Third condition}. Denote with $G_{n}, G^L_{n}$ and $G^B_n$ the ground space projectors onto the ground spaces $\mathcal{G}_n, \mathcal{G}^L_n$ and $\mathcal{G}^B_n$.
In Lemma~\ref{lemma:groundspace} we identified a basis for these ground spaces. 
The strategy is to use this basis to construct approximations $X^{B/L}$ to the ground space projectors $G^{B/L}$.
For these approximations, we can then verify the third condition directly.
We illustrate how the terms appearing in condition~\eqref{eq:condition3} act on a chain of qubits in Fig.~\ref{fig: nachtergaele condition 3}. 
\begin{figure}[t]
	\center
	\includegraphics[scale=0.9]{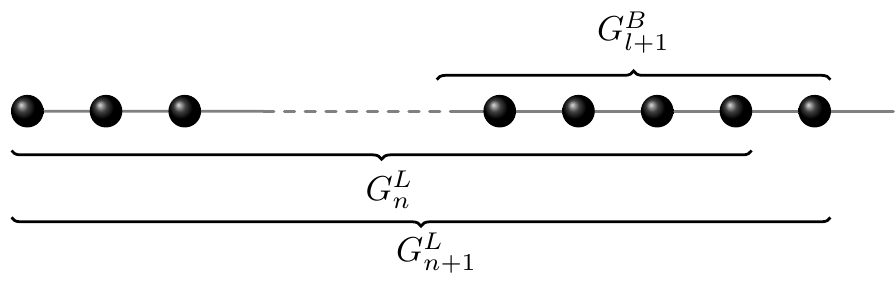}
	\caption{
	\label{fig: nachtergaele condition 3}
	Illustration of the supports of the projectors $G_{[p,q]}$ appearing in Eq.~\eqref{eq:condition3} for a chain of qubits. 
	 }
\end{figure}

 First, observe that 
\begin{equation}
|\langle\psi_{\sigma}|\psi_0\rangle|^k+|\langle\psi_{\sigma}|\psi_1\rangle|^k+|\langle\psi_{\sigma}|\psi_{\phi}\rangle|^k=1+\frac{2}{2^k},
\end{equation}
for all $\sigma\in\{0,1,\phi\}$. In particular, $\{|\psi_0\rangle^{\otimes k},|\psi_1\rangle^{\otimes k},|\psi_{\phi}\rangle^{\otimes k}\}$ constitutes an orthonormal basis up to an exponentially small error.
 We can use this to show that 
\begin{equation}\label{eq:approxprojectors}
\left|\left|\sum_{\sigma\in\{0,1,\phi\}}(|\psi_{\sigma}\rangle\langle\psi_{\sigma}|)^{\otimes k}-G^B_k\right|\right|_{\infty}\leq \frac{\sqrt{6}}{2^{k}}\qquad\text{and}\qquad \left|\left|\sum_{\sigma\in\{0,1\}}(|\psi_{i}\rangle\langle\psi_{i}|)^{\otimes k}-G^L_k\right|\right|_{\infty}\leq \frac{\sqrt{2}}{2^{k}}.
\end{equation}
We only prove the former. The latter can be shown analogously. 
Consider the operator 
\begin{equation}\label{eq:operatorB}
B \coloneqq \sum_{\sigma\in\{0,1,\phi\}}|\sigma\rangle\langle\psi_{\sigma}|^{\otimes k},
\end{equation}
where $|\sigma\rangle$ is any orthonormal basis of $\mathcal{G}^B_k$. Then we have 
\begin{align}
\left|\left|BB^{\dagger}-G^B_{k}\right|\right|_{\infty}&=\left|\left|\sum_{\sigma,\rho\in\{0,1,\phi\}}|\sigma\rangle\langle\psi_{\sigma}|^{\otimes k}|\psi_{\rho}\rangle^{\otimes k}\langle\rho|-\sum_{\sigma\in\{0,1,\phi\}}|\sigma\rangle\langle\sigma|\right|\right|_{\infty}\\
&=\left|\left|\sum_{\sigma\neq\rho}|\langle\psi_{\sigma}|\psi_{\rho}\rangle|^{ k}|\sigma\rangle\langle\rho|\right|\right|_{\infty}\\
&\leq\left|\left|\sum_{\sigma\neq\rho}|\langle\psi_{\sigma}|\psi_{\rho}\rangle|^{ k}|\sigma\rangle\langle\rho|\right|\right|_{2}\\
&=\left({\sum_{\sigma\neq\rho}|\langle\psi_{\sigma}|\psi_{\rho}\rangle|^{2k}}
\right)^{1/2}
=\frac{\sqrt{6}}{2^{k}},
\end{align}
where $\|.\|_{2}$ denotes the Schatten $2$-norm. 
We used for the inequality the monotonicity of the Schatten $p$-norms.
Notice that $B$ and $B^{\dagger}$ act as zero on the orthogonal complement of the ground space. 
Furthermore, both maps are invertible if restricted to the ground space for $n>1$ as $\{|\psi_{\sigma}\rangle\}$ constitutes a basis of the ground space.
Hence,
 $B^{\dagger}B=\sum_{\sigma}(|\psi_{\sigma}\rangle\langle\psi_{\sigma}|)^{\otimes k}$ has the same eigenvalues as $BB^{\dagger}$. 
The above is a bound on the difference between the eigenvalues of $BB^{\dagger}$ restricted to the ground space and $1$, as $BB^{\dagger}$ and $G^B_k$ can be diagonalized simultaneously.
Moreover, $G_k^B$ is the projector onto the support of $B^{\dagger}B$ and we obtain 
\begin{equation}
\left(1-\frac{\sqrt{6}}{2^{k}}\right)G^{B}_k\leq B^{\dagger}B\leq\left(1+\frac{\sqrt{6}}{2^{k}}\right) G^{B}_k.
\end{equation} 
Thus, we have
\begin{equation}
\left|\left|B^{\dagger}B-G^{B}_k\right|\right|_{\infty}\leq\frac{\sqrt{6}}{2^{k}}.
\end{equation}
In the following we denote 
\begin{equation}
X^L_k \coloneqq \sum_{\sigma\in\{0,1\}}(|\psi_{i}\rangle\langle\psi_{i}|)^{\otimes k},\qquad\text{and}\qquad X^B_k \coloneqq \sum_{\sigma\in\{0,1,\phi\}}(|\psi_{\sigma}\rangle\langle\psi_{\sigma}|)^{\otimes k}.
\end{equation}
We can now verify~\eqref{eq:condition3}. 
We successively apply the identity $G^{B/L}=X^{B/L}+\left(G^{B/L}-X^{B/L}\right)$ and the triangle inequality to obtain
\begin{align}
R& \coloneqq \left|\left|G^B_{[n-l+1,n+1]}\left(G^L_{[1,n]}-G^L_{[1,n+1]}\right)\right|\right|_{\infty}\\
&\leq \left|\left|X^B_{[n-l+1,n+1]}\left(X^L_{[1,n]}-X^L_{[1,n+1]}\right)\right|\right|_{\infty}+\frac{11}{2^{l+1}}\\
&=\left|\left|\sum_{i\in\{0,1\}}(|\psi_i\rangle\langle\psi_i|)^{\otimes (n-l)}\otimes Y_i\right|\right|_{\infty}+\frac{11}{2^{l+1}}
\end{align}
with 
\begin{equation}
Y_i \coloneqq \sum_{\substack{\sigma\in\{0,1,\phi\}\\\sigma\neq i}}\left((|\psi_{\sigma}\rangle\langle \psi_{\sigma}|)^{\otimes l}(|\psi_{i}\rangle\langle \psi_{i}|)^{\otimes l}\right)\otimes (|\psi_{\sigma}\rangle\langle\psi_{\sigma}|)\left(\id_{n+1}-|\psi_{i}\rangle\langle\psi_i|\right).
\end{equation}
We get
\begin{align}
R&\leq \left(1+\frac{\sqrt{2}}{2^{n-l}}\right)\max_i||Y_i||_{\infty}+\frac{11}{2^{l+1}}\\
&\leq \left(1+\frac{\sqrt{2}}{2^{n-l}}\right)\sum_{\substack{\sigma\in\{0,1,\sigma\}\\\sigma\neq i}}|\langle\psi_{i}|\psi_{\sigma}\rangle|^{l}+\frac{11}{2^{l+1}}\\
&=\left(1+\frac{\sqrt{2}}{2^{n-l}}\right)\frac{2}{2^{l}}+\frac{11}{2^{l+1}}\leq \frac{8.5}{2^{l}},
\end{align}
for $n\geq l+2$.
We achieve $\varepsilon_l\leq\frac{1}{2\sqrt{l+1}}<\frac{1}{\sqrt{l+1}}$ by setting $l=6$.
The first inequality follows as in Ref.~\cite[Lemma~18]{brandao_local_2016}:
Consider the operator 
\begin{equation}
\tilde{B}=\sum_{i\in\{0,1\}}|i\rangle\langle\psi_i|^{\otimes n-l},
\end{equation}
with $|i\rangle$ denoting any orthonormal basis of the space $\mathcal{G}_{n-l}$. 
We obtain
\begin{align}
\left|\left|\sum_{i\in\{0,1\}}(|\psi_i\rangle\langle\psi_i)^{\otimes (n-l)}\otimes Y_i\right|\right|_{\infty}&=\left|\left|\sum_{i\in\{0,1\}}\tilde{B}^{\dagger}|i\rangle\langle i|\tilde{B}\otimes Y_i \right|\right|_{\infty}\\
&=\left|\left|\tilde{B}^{\dagger}\left(\sum_{i\in\{0,1\}}|i\rangle\langle i|\otimes Y_i\right)\tilde{B} \right|\right|_{\infty}\\
&\leq \left|\left|\tilde{B}\tilde{B}^{\dagger}\right|\right|_{\infty}\left|\left|\sum_{i\in\{0,1\}}|i\rangle\langle i|\otimes Y_i\right|\right|_{\infty}\\
&\leq \left(1+\frac{\sqrt{2}}{2^{n-l}}\right)\max_i||Y_i||_{\infty}.
\end{align} 
The first inequality follows from the submultiplicativity of the spectral norm and the fact that $||\tilde{B}||^2_{\infty}$ is the maximal singular value of $\tilde{B}$ squared and so is $||\tilde{B}\tilde{B}^{\dagger}||_{\infty}$.
The second inequality follows from~\eqref{eq:approxprojectors}.
\end{proof}

From Lemma~\ref{lemma:Nachtergaele} we obtained an $\alpha>0$ such that $\Delta(H_n)\geq \alpha$ for all $n$.
In summary, we obtain an $\varepsilon$-approximate $2$-design from the sequence of bounds
\begin{align}
g\left(v_{n}^{*k},2\right)\leq g(v_{n},2)^k\leq\left(\left(1+\Delta(H_n)\right)^{-\frac12}\right)^k.
\end{align}
We complete the proof of Theorem~\ref{theorem:t-design} using Lemma~\ref{lemma:expandertodesign} and the bound
\begin{align}
&\left(\left(1+\frac{\Delta(H_n)}{9}\right)^{-\frac12}\right)^k\leq 2^{-4n}\varepsilon\\
\implies& -\frac{k}{2}\log\left(1+\frac{\Delta(H_n)}{9}\right)\geq 4n+\log\left(\frac{1}{\varepsilon}\right)\\
\implies& k\geq\frac{2\left(4n+\log\left(\frac{1}{\varepsilon}\right)\right)}{\log\left(1+\frac{\Delta(H_n)}{9}\right)}\geq \frac{18\left(4n+\log\left(\frac{1}{\varepsilon}\right)\right)}{\Delta(H_n)}\label{eq:correctinequalityk}.
\end{align}
As we group in layers of three to obtain the probability distribution $v_n$, it suffices to choose the depth $D$ of the effective circuit~\eqref{eq:circuitpicture} as
\begin{equation}
D=m-1\geq \frac{54\left(4n+\log\left(\frac{1}{\varepsilon}\right)\right)}{\Delta(H_n)}\in\mathcal{O}\left(4n+\log\left(\frac{1}{\varepsilon}\right)\right),
\end{equation}
which shows that the approximate $2$-design property follows in linear depth.

\section{Numerical results for the gap $\Delta\left(H^B_{7}\right)$}\label{appendix:numericalgap}
As was shown in Appendix~\ref{appendix:proof-2-design}, the spectral gap $\Delta(H_n)$ determines the convergence rate of the effective circuits~\eqref{eq:circuitpicture} to a $2$-design.
We have further proven that the spectral gap is lower bounded by $\Delta\left(H^B_{7}\right)/32$.
This immediately implies the existence of gap in the thermodynamic limit but does not give us the size of this gap.
We can use standard techniques for exact diagonalization to approximate the gap $\Delta\left(H^B_{7}\right)$.
In more detail, the Hamiltonians $H^B_n$ are unitarily equivalent to 
\begin{equation}
\tilde{H}^B_{n} \coloneqq \sum_{i=1}^{n-1}\left(\left(\id-P_i^{XZ}\right)+\left(\id-P_i^{ZX}\right)\right)+\sum_{i=1}^n\left(\id-P_i^Z\right),
\end{equation}
with $P_i^Z$ as in~\eqref{eq:def-projectors} and
\begin{equation}\label{eq:alternativeprojectors}
P^{XZ}_i \coloneqq \frac{1}{2\pi}\int \left(e^{\mathrm{i}\varphi^{XZ}_i X_i\otimes Z_{i+1}}\right)^{\otimes 2,2}\mathrm{d}\varphi_i^{XZ}\qquad P^{ZX}_i \coloneqq \frac{1}{2\pi}\int \left(e^{\mathrm{i}\varphi^{ZX}_i Z_i\otimes X_{i+1}}\right)^{\otimes 2,2}\mathrm{d}\varphi_i^{ZX}.
\end{equation}
The unitary that maps the Hamiltonians is defined as
\begin{equation}
V_n=\begin{cases}\prod_{i=1}^{ n-1}CZ_{2i,2i+1}\quad \text{for}\quad n\;\;\text{even}\\\prod_{i=1}^{ n-2}CZ_{2i,2i+1}\quad \text{for}\quad n\;\;\text{odd}.\end{cases}
\end{equation}
It can be straightforwardly shown that $V_nH^B_nV_n^{\dagger}=\tilde{H}^{B}_{n}$.

The local terms are now $256\times 256$ matrices, which can be computed by \textit{Mathematica} on a laptop using the definition~\eqref{eq:alternativeprojectors}. 
Furthermore, we have used the in-build \textit{python}-function scipy.sparse.linalg.eigsh to compute the first four eigenvalues of $\tilde{H}^B_{7}$.
Notice that the ground space degeneracy is $3$ according to Lemma~\ref{lemma:groundspace}.
For the spectral gap we obtain the following numerical values for small $n$:
	 \begin{center}
		\begin{tabular}{|c|c|c|c|c|c|c|c|c|c|}\hline
			$n$ & \,$3$\,&\,$4$\,&\,$5$\,&\,$6$\,&\,$7$\\ \hline
			$\Delta\left(H_n^B\right)\approx$ & $0.070$&$0.096$&$0.105$&$0.109$&$0.111$\\ \hline
			$\Delta\left(H_n\right)\approx$&$0.241$&$0.202$&$0.175$&$0.157$&$0.146$\\ \hline
		\end{tabular}
    \end{center}
We thus find strong numerical evidence for the size of the constants involved in our asymptotical result.
Plugging this in the Nachtergaele bound yields that $\{ U_\beta^m\}_\beta$ forms an $\varepsilon$-approximate $2$-design in depth $m-1 \approx 15700(4n+\log(1/\varepsilon))$.
If we simply extrapolate the numerics in Appendix~\ref{appendix:numericalgap} we get the more moderate constant $m-1\approx 490(4n+\log(1/\varepsilon))$ as a lower bound sufficient for the $2$-design property.
As these rigorous arguments provide upper bounds, we expect the real constants that yield anticoncentration to be much lower.
\section{Proof of average-case hardness}\label{appendix:proof_average}

This appendix applies the analysis from Ref.~\cite{bouland_quantum_2018} to the architectures developed 
in Ref.~\cite{bermejo-vega_architectures_2018}. 
We obtain average-case hardness results for these architectures.
Drawing an angle $\varphi\in S^1$ uniformly at random means to draw from the Haar measure $H\left(S^1\right)$ of $S^1$ viewed as a compact Lie group.
Furthermore, the product group $(S^1)^{\times Dn}$ is what we draw from in the architecture in Definition~\ref{def:architecture}. We will abbreviate the corresponding Haar measure with
\begin{equation}
H \coloneqq H\left(\left(S^1\right)^{\times Dn}\right).
\end{equation}

Now, we introduce distributions close to the Haar measure as in Ref.~\cite{bouland_quantum_2018}.
\begin{definition}[$\theta$-perturbed Haar-distribution]
	The distribution $H_{\theta}$ is defined by drawing $\{R^l_j\}_{j\in V, l\in [D]}$ from $H$ and then setting the random unitaries $G^l_j \coloneqq R^l_je^{-\ii h^l_j\theta}$ with $r^l_j \coloneqq -\ii\log R^l_j$.
\end{definition}
The intuition for the above definition is to draw from the Haar measure and then `rotate back by a small angle $\theta$`. 
Notice that this is a modification of the definition in Ref.~\cite[Definition~9]{bouland_quantum_2018} where the latter applies to the Haar measure on the group of all unitaries. 
For technical reasons explained further in the proof of Theorem~\ref{theorem:main}, we need to consider a truncated version of the above distribution. 
\begin{definition}[$(\theta,K)$-truncated perturbed Haar distribution]
	The distribution $H_{\theta,K}$ is defined by drawing $\{R^l_j\}_{j\in V,l\in [D]}$ from $H$ and then setting the unitaries
	\begin{equation}
	G^l_j \coloneqq R^l_j\left(\sum_{k=0}^{K}\frac{(-\ii h^l_j\theta)^k}{k!}\right).
	\end{equation}
\end{definition}
We can restrict our analysis to the computation of $p_0(C) \coloneqq |\bra {0^n} C \ket {+^n}|^2$. 
The reason for this is that our architecture admits the \emph{hiding property}, i.e. we can hide the fixed outcome in the circuit without changing the measure. 
Indeed, for a string $y$, we can equivalently view this outcome as $0^n$ for the circuit that has additional $X$ gates inserted before measurement wherever $y$ contains a $1$. 
These $X$ gates are equivalent to a $Z$-gate applied after (or before) the random gate $G_i$ but since $Z$ is diagonal, this does not change the probabilities of the gates $C_i$ because of the translation-invariance of the Haar measure.
We can now state the main theorem:

\begin{theorem}[Main theorem]\label{theorem:main}
	For the input $C$, a circuit as in Definition~\ref{def:architecture}, it is \#P-hard to compute $\frac34+\frac{1}{\poly(n)}$ of the probabilities $p_0(C')$ over the choice $C'$ drawn from $C*H_{\theta,K}$ and over the choice of $\theta\in[0, \poly^{-1}(n))$ with $K=\poly(n)$. Here, the multiplication is meant gate-wise over the random gates.
\end{theorem}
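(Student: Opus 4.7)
}
The overall strategy is a Lipton-style worst-case to average-case self-reduction, following the template of Ref.~\cite{bouland_quantum_2018}. First I would invoke the hiding property (which the excerpt already sketches via translation invariance of the Haar measure on $S^1$ together with the fact that a Pauli $X$ can be propagated through the circuit to meet a Hadamard and be absorbed into an adjacent $e^{\ii\alpha Z}$) to reduce the task of computing a generic outcome probability to computing $p_0(C)=|\langle 0^n|C|+^n\rangle|^2$. Worst-case $\#\mathsf{P}$-hardness of exactly computing $p_0(C)$ for $C$ in our circuit family is inherited from the exact sampling hardness result of Ref.~\cite{bermejo-vega_architectures_2018} via Stockmeyer's algorithm~\cite{stockmeyer_approximation_1985}, so it suffices to reduce exact computation of a worst-case $p_0(C)$ to computing a $3/4+1/\poly(n)$ fraction of $p_0(C')$ with $C'\sim C\ast H_{\theta,K}$.

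The heart of the reduction is the construction of a polynomial interpolation between the fixed worst-case instance and a freshly drawn Haar-random instance. Concretely, fix $C$ specified by phases $\{\beta_j^l\}$ and draw Haar-random rotations $\{R_j^l\}$ with logarithms $r_j^l=-\ii\log R_j^l$. For each local gate I replace $e^{\ii\beta_j^l h_j^l}$ by the $\theta$-dependent operator
\begin{equation}
G_j^l(\theta)\coloneqq R_j^l \sum_{k=0}^{K}\frac{\bigl(\ii\theta(\beta_j^l-r_j^l)h_j^l\bigr)^k}{k!},
\end{equation}
with $K=\poly(n)$, and call the resulting circuit $\tilde C(\theta)$. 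At $\theta=0$ each gate is exactly $R_j^l$ (a Haar sample), while for small $\theta$ the joint distribution of $\{G_j^l(\theta)\}$ is exactly $H_{\theta,K}$ up to the affine reparametrisation $\beta_j^l-r_j^l\mapsto -h_j^l$. A Feynman path-integral expansion of $p_0(\theta)\coloneqq|\langle 0^n|\tilde C(\theta)|+^n\rangle|^2$ shows it is a polynomial in $\theta$ of degree at most $2DnK=\poly(n)$. I would then verify, as in Ref.~\cite{bouland_quantum_2018}, that the induced distribution on $\tilde C(\theta)$ is statistically $1/2^{\poly(n)}$-close to $H_{\theta,K}$ (up to the circuit-specific translation), so that an oracle $\mathcal O$ succeeding on a $3/4+1/\poly(n)$ fraction of $H_{\theta,K}$ also succeeds on essentially the same fraction of the $\tilde C(\theta)$ ensemble.

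To recover $p_0(1)=p_0(C)$ from oracle evaluations, I would sample $L=\poly(n)$ values $\theta_1,\dots,\theta_L\in[0,\poly^{-1}(n))$, query $\mathcal O$ at each, and run Berlekamp--Welch~\cite{berlekamp_polynomial_1986} to reconstruct the degree-$\poly(n)$ polynomial $p_0(\theta)$ from these evaluations, then evaluate at $\theta=1$ by analytic continuation of the polynomial (here the polynomial interpolation is essential, which is why we truncate to $K$ terms rather than use the exact exponential). A union bound and a Markov-type argument, exactly as in Ref.~\cite{bouland_quantum_2018}, show that whenever $\mathcal O$ is correct on a $3/4+1/\poly(n)$ fraction the fraction of corrupted evaluations among $\theta_1,\dots,\theta_L$ is below the Berlekamp--Welch threshold with probability $1/2+1/\poly(n)$; polynomial-many independent repetitions and majority voting amplify this to $1-2^{-\poly(n)}$.

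The main obstacle, and the reason $K$ and $\theta$ must be chosen carefully, is the simultaneous control of three competing errors: (i) truncating the Taylor series at order $K$ must perturb each gate by $\ll 1/2^{\poly(n)}$ so that $p_0(\theta)$ is genuinely a polynomial reproducing the unitary circuit at $\theta=1$ up to negligible error, forcing $K=\Theta(n/\log(1/\theta))$; (ii) the interpolation range must be small enough that the non-unitary $G_j^l(\theta)$ still defines a distribution statistically close to $H_{\theta,K}$, forcing $\theta=1/\poly(n)$; (iii) the polynomial degree must remain $\poly(n)$ so that Berlekamp--Welch uses only $\poly(n)$ queries. These constraints are mutually compatible, and verifying the quantitative bookkeeping -- in particular that the truncation, interpolation, and oracle errors all compose into a $1/2+1/\poly(n)$ success probability before amplification -- is the principal bit of work, directly parallelling the analysis in Ref.~\cite{bouland_quantum_2018} but with the Haar measure on $S^1$ and the translation-invariant commuting Ising circuit in place of the universal two-qubit Haar ensemble.
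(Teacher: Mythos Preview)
Your proposal is correct and follows essentially the same route as the paper: the hiding property reduces to $p_0$, a Taylor-truncated one-parameter interpolation between the worst-case instance and a fresh Haar sample yields a degree-$\poly(n)$ polynomial via the Feynman path expansion, and Berlekamp--Welch together with Markov/union-bound arguments recovers the worst-case value from a $3/4+1/\poly(n)$ oracle. One small correction: worst-case $\#\mathsf{P}$-hardness of computing $p_0(C)$ is not obtained \emph{via} Stockmeyer (Stockmeyer runs in the opposite direction, from hardness of computing probabilities to hardness of sampling); it is established directly in Ref.~\cite{bermejo-vega_architectures_2018} as Lemma~\ref{lemma:worstcasehardness}.
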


Before we come to the proof, let us briefly discuss what this means. Clearly, for every circuit $C$ it is $C*H=H$ by definition of the Haar measure.
We will show a worst-to-average reduction for the $(\theta,K)$-truncated perturbed Haar measure for 
polynomially large $K$ and small $\theta$.
It therefore \emph{average-case hard} to approximate the outcome probabilities of our circuit over a 
distribution that is indistinguishable from the Haar measure.
For the proof of Theorem~\ref{theorem:main}, we need two lemmas. The first one has been
proven in Ref.~\cite{bermejo-vega_architectures_2018}:

\begin{lemma}[Hardness of approximation in worst case]\label{lemma:worstcasehardness}Approximating the output probabilities of the architecture to within precision $2^{-\poly(n)}$ is \#P-hard.
\end{lemma}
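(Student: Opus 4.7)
The plan is to follow the template established by Terhal--DiVincenzo and Bremner--Jozsa--Shepherd for worst-case \#P-hardness of commuting/constant-depth quantum architectures. The central idea is that by adversarially choosing the single-qubit angles $\beta_i$, the output probability $p_0(C) = |\langle 0^n | e^{\ii H} | \psi_\beta \rangle|^2$ can be made to encode a \#P-hard quantity, and additive approximation to precision $2^{-\poly(n)}$ suffices to recover that quantity exactly.

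First I would rewrite $p_0(C)$ in a Feynman path-integral form. Since $e^{\ii H}$ is diagonal in the computational basis, $|\psi_\beta\rangle$ is a product of $|+\rangle$-like states carrying computational phases $e^{\ii\beta_i}$, and the final $X$-basis measurement is equivalent to applying Hadamards and then measuring in $Z$, expanding over the computational basis yields
\begin{equation}
p_0(C) \;=\; \frac{1}{4^n}\,\Bigl|\sum_{x \in \{0,1\}^n} e^{\ii\,\Phi_\beta(x)}\Bigr|^2,
\end{equation}
where $\Phi_\beta(x)$ is a real quadratic form in the bits $x_i$ whose coefficients are determined jointly by the fixed Ising couplings of $H$ and the tunable angles $\beta_i$. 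This is exactly the form of an Ising partition function at imaginary temperature.

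Second, I would reduce a known \#P-hard counting problem to exact evaluation of this sum. The most direct route exploits the measurement-based-quantum-computation universality of the architecture: one embeds a polynomial-size universal quantum circuit $V$ whose acceptance probability $|\langle 0|V|0\rangle|^2$ encodes a \#P-hard quantity (for instance a gap version of \#3SAT) into the MBQC pattern implemented by the constant-time Ising dynamics, with the phases $\beta_i$ playing the role of MBQC measurement angles. A standard post-selection gadget on a designated subset of qubits then converts the acceptance probability of $V$ into the output probability $p_0(C)$ up to a known normalization. Alternatively, one can work directly at the level of the sum above and reduce from the \#P-hardness of evaluating Ising partition functions with complex weights, along the lines of Bremner--Jozsa--Shepherd and using gadget encodings of the permanent.

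Third, I would promote exact hardness to hardness of additive approximation of $p_0(C)$ to precision $2^{-\poly(n)}$. This step is routine: the amplitude $\sum_x e^{\ii\Phi_\beta(x)}$ has magnitude at most $2^n$, so after clearing the $4^{-n}$ normalization the embedded \#P-hard integer is bounded by $4^n$; an additive approximation of $p_0(C)$ to precision $2^{-\poly(n)}$ with a sufficiently large polynomial therefore pins down the integer exactly. The main obstacle is the second step: the architecture is rigid (fixed 2D nearest-neighbour geometry, uniform Ising couplings of $\pi/4$, constant evolution time), so only the single-qubit phases $\beta_i$ remain as encoding parameters, and one must verify these suffice to realise any polynomial-size quantum computation. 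This is exactly the ingredient supplied by the MBQC-based analysis of Ref.~\cite{bermejo-vega_architectures_2018}, which would be imported rather than re-derived.
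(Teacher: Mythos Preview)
The paper does not actually prove this lemma; it simply cites it as established in Ref.~\cite{bermejo-vega_architectures_2018} (``The first one has been proven in Ref.~\cite{bermejo-vega_architectures_2018}''). Your sketch is a faithful outline of how that cited result is obtained---IQP-type path-integral form, MBQC universality of the cluster-state architecture with tunable single-qubit phases to embed a post-selected universal computation, and the standard rounding argument to pass from exact to $2^{-\poly}$-additive approximation---and you explicitly flag that the crucial universality step is imported from Ref.~\cite{bermejo-vega_architectures_2018} rather than re-derived. So your approach and the paper's coincide: both defer the substance of the proof to the earlier work. One minor slip: the sum in your displayed formula runs over $\{0,1\}^N$ with $N=nm$ lattice sites, not $\{0,1\}^n$, and the normalization is $4^{-N}$; this does not affect the argument.
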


The second lemma is analogous to Ref.~\cite[Fact~14]{bouland_quantum_2018}.

\begin{lemma}[{Analog of Ref.~\cite[Fact~14]{bouland_quantum_2018}}]\label{lemma:fact14}
	Given a circuit $C$ with random gates $\{G^l_j\}_{j,l}$, 
	we can choose independent Haar-random gate entries $\{R^l_j\}_{j,l}$ and use this choice to publish $C_1$ from $C*H_{\theta}$ and $C_2$ from $C*H_{\theta,K}$.
	Then, $|p_0(C_1)-p_0(C_2)|\leq 2^{-\poly(n)}$ for a sufficiently large choice $K=\poly(n)$.
\end{lemma}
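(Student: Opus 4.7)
The plan is to reduce Lemma~\ref{lemma:fact14} to a deterministic bound by coupling the two distributions through their shared Haar factors, and then apply a standard hybrid argument that turns a gate-wise Taylor estimate into a global operator-norm bound. Fix any draw $\{R_j^l\}_{j,l}$ of independent Haar-random unitaries and build $C_1$ and $C_2$ from the same draw. Then the two circuits share the entangling layers and the Haar factors, and differ only in that every random gate of $C_1$ uses the exact exponential $E_j^l \coloneqq e^{-\ii h_j^l \theta}$, whereas the corresponding gate of $C_2$ uses its $K$-th order Taylor polynomial $T_j^l \coloneqq \sum_{k=0}^K (-\ii h_j^l\theta)^k/k!$. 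What remains is a uniform bound on $|p_0(C_1)-p_0(C_2)|$ over the choice of $\{R_j^l\}$.

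The first quantitative step is to bound the single-gate error. Each $h_j^l$ is (a tensor of) Paulis and hence has operator norm $O(1)$, so by Taylor's theorem
\begin{equation}
\delta \coloneqq \|E_j^l - T_j^l\|_\infty \le \sum_{k=K+1}^\infty \frac{\theta^k}{k!} \le \frac{\theta^{K+1}}{(K+1)!}\,e^{\theta},
\end{equation}
while $\|T_j^l\|_\infty \le e^{\theta}$. The second step is to propagate this error through the full circuit using a telescoping identity. Writing $C_1 = U_M \cdots U_1$ and $C_2 = V_M \cdots V_1$ with $M = \poly(n)$ the total number of gates, $\|U_i\|_\infty \le 1$, and $\|V_i\|_\infty \le e^{\theta}$, one has
\begin{equation}
C_1 - C_2 \;=\; \sum_{i=1}^M U_M \cdots U_{i+1}\,(U_i - V_i)\, V_{i-1} \cdots V_1,
\end{equation}
and therefore $\|C_1 - C_2\|_\infty \le M\,e^{M\theta}\,\delta$, since only the $O(nD)$ random gates contribute a nonzero $U_i - V_i$. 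Converting to a probability difference via $p_0(C) = |\langle 0^n|C|+^n\rangle|^2$ and $|a^2 - b^2| \le |a-b|(|a|+|b|)$ yields
\begin{equation}
|p_0(C_1) - p_0(C_2)| \;\le\; 2\,e^{M\theta}\,\|C_1 - C_2\|_\infty.
\end{equation}

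Finally, choose $\theta \le 1/\poly(n)$ so that $e^{M\theta} = O(1)$ and pick $K = \poly(n)$. By Stirling, $\delta \le (e\theta/(K+1))^{K+1}$, which is $2^{-\poly(n)}$ for any desired inverse-exponential precision once $K$ is taken slightly larger than that precision exponent; the polynomial prefactors $Me^{M\theta}$ are swamped. I do not expect a real obstacle here: the only mild subtlety is that the truncated polynomials $T_j^l$ are not unitary, which is what forces the factor $e^{M\theta}$ in the hybrid bound. This factor is harmless precisely because $\theta$ is inverse-polynomial and the total gate count is polynomial, so with $K = \poly(n)$ chosen appropriately the full error is $2^{-\poly(n)}$ as required.
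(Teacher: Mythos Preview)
Your proof is correct and takes a genuinely different route from the paper. The paper expands both amplitudes via the Feynman path-integral representation, bounds each matrix element $|\langle y|G^l_{1,i}-G^l_{2,i}|y'\rangle|\le \kappa/K!$, and then invokes the triangle inequality over all $2^{\poly(n)}$ paths to obtain $|\langle 0^n|C_1|+^n\rangle-\langle 0^n|C_2|+^n\rangle|\le 2^{O(\poly(n))}/K!$; the exponential prefactor is then killed by choosing $K=\poly'(n)$ large enough. You instead work directly in operator norm and use a hybrid/telescoping identity, so your prefactor is only $M\,e^{M\theta}$ rather than $2^{\poly(n)}$. Both approaches suffice because $1/(K+1)!$ is super-exponentially small, but your argument is tighter and more transparent about where the non-unitarity of the truncated gates enters (the $e^{M\theta}$ factor). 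Two small cosmetic points: in this architecture the generators are single-qubit $Z$'s (not tensors of Paulis), so $\|h^l_j\|_\infty$ is bounded by the angle range rather than equal to $1$---this only changes constants; and your final step invoking $\theta\le 1/\poly(n)$ to get $e^{M\theta}=O(1)$ is convenient but not essential, since even for $\theta=O(1)$ the factorial decay of $\delta$ swamps $e^{M\theta}=2^{O(\poly(n))}$.
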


The proof is based on the standard bound on truncated Taylor expansion.
\begin{proof}
	Both circuits $C_s$ with $s\in\{1,2\}$ admit the form 
	\begin{equation}
	C_s=\left(\prod_{i\in V} H_i\prod_{i\in V} G^1_{s,i}\prod_{e\in E} CZ_e\right)...\left(\prod_{i\in V} H_i\prod_{i\in V} G^D_{s,i}\prod_{e\in E} CZ_e\right),
	\end{equation}
	with 
	\begin{equation}
	G^l_{1,i}=G^l_iR^l_i\left(\sum_{k=0}^{\infty}\frac{(-\ii r^l_i\theta)^k}{k!}\right),\;\;
	G^l_{2,i} \coloneqq G^l_iR^l_i\left(\sum_{k=0}^{K}\frac{(-\ii r^l_i\theta)^k}{k!}\right).
	\end{equation}
	Thus, the standard (Suzuki) bound on the truncated expansion of the matrix exponential can be applied directly: 
	\begin{equation}
	\left|\left\langle \psi \right| G^l_{1,i}-G^l_{2,i}\left| \phi\right\rangle\right|\leq \frac{\kappa}{K!}
	\end{equation}
	for a constant $\kappa$.
	In particular, the fact that the above expression does not grow with $n$ is an artefact of the fact that all local gates have uniformly bounded norm.
	We use the well known \emph{Feynman path integral representation} of amplitutes:
	\begin{align*}\label{eq:Feynman}
	\bra {0^n} C_s \ket {+^n}=\sum_{y_2^1,...y^D_{n+1}\in\{0,1\}^n}~~\prod_{l=1}^D\left(\left\langle y^{l+1}_{1}\left|E\right|y^l_{n+1}\right\rangle\prod_{i=1}^n\left\langle y_{i+1}^l\left|G^l_{s,i}\right|y^l_{i}\right\rangle\right)
	\end{align*}
	with $y^1_1=+^n$ and $y^{D+1}_{1}=0^n$.
	Applying the standard bound from above to the Feynman expansion of $|\bra {0^n} C_1\ket{+^n}-\bra {0^n} C_2\ket{+^n}|$ and using the triangle inequality we immediately obtain the bound
	\begin{align*}
	|\bra {0^n} C_1\ket{+^n}-\bra {0^n} C_2\ket{+^n}|\leq\frac{2^{\mathcal{O}(\poly(n))}}{K!}.
	\end{align*}
	The claimed bound now follows from the choice $K=\poly'(n)$.
\end{proof}
The last ingredient we need for the proof is the \emph{Berlekamp-Welch-Algorithm}:
\begin{theorem}[Berlekamp-Welch-Algorithm]\label{theorem:berlekamp}
	Let $q$ be a degree $d$ polynomial in a single variable over a field $\mathbb{F}$.
	Suppose we are given $k$ pairs $\{(x_i,y_i)\}$ of elements in $\mathbb{F}$ with all $x_i$ distinct and with a promise that $y_i=q(x_i)$ for at least $\mathrm{max}(d+1,(k+d)/2)$ points. 
	Then, one can recover $q$ exactly in $\poly(k,d)$ deterministic time.
\end{theorem}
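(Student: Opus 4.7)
The plan is to cast polynomial recovery as a structured linear algebra problem, following the classical Berlekamp--Welch strategy for decoding Reed--Solomon codes. The central trick is to introduce an \emph{error-locator polynomial} $E(x)$ that vanishes at the (unknown) error positions, so that $q(x) E(x)$ becomes a genuine polynomial identity free of the corrupted $y_i$. Concretely, let $e \coloneqq k - \max(d+1, (k+d)/2)$ upper bound the number of errors, and search for a pair $(N, E)$ with $E \not\equiv 0$, $\deg E \leq e$, $\deg N \leq d+e$, satisfying the $k$ linear constraints
\begin{equation}
    N(x_i) \,=\, y_i \, E(x_i), \qquad i = 1, \ldots, k.
\end{equation}

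First I would exhibit a nonzero solution in order to establish feasibility. Let $S \subset \{1,\ldots,k\}$ be the set of correct indices, so $|S| \geq k - e$. Taking $E^*(x) \coloneqq \prod_{i \notin S}(x - x_i)$ (degree at most $e$) and $N^*(x) \coloneqq q(x) E^*(x)$ (degree at most $d+e$) satisfies every constraint: for $i \notin S$ both sides vanish by construction, while for $i \in S$ both equal $q(x_i) E^*(x_i) = y_i E^*(x_i)$. Since the system is linear in the at most $(e+1)+(d+e+1)$ unknown coefficients, the associated homogeneous linear system has a nontrivial solution, and Gaussian elimination locates one in $\poly(k, d)$ time over $\mathbb{F}$.

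The crux of the argument is that \emph{every} nonzero solution yields the same quotient $N/E = q$. Given two solutions $(N_1, E_1)$ and $(N_2, E_2)$, the polynomial $N_1 E_2 - N_2 E_1$ has degree at most $d+2e$ and, at every sample point,
\begin{equation}
    N_1(x_i) E_2(x_i) \,=\, y_i E_1(x_i) E_2(x_i) \,=\, N_2(x_i) E_1(x_i),
\end{equation}
so it vanishes at all $k$ distinct $x_i$. The hypothesis $|S| \geq \max(d+1, (k+d)/2)$ is precisely tuned so that $d + 2e < k$, forcing $N_1 E_2 \equiv N_2 E_1$ identically in $\mathbb{F}[x]$. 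Matching against the canonical solution $(N^*, E^*)$ identifies the common ratio as $q$, so polynomial long division extracts $q$ from any recovered $(N, E)$.

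The main obstacle I foresee is the degree/threshold bookkeeping in the uniqueness step: one must verify carefully that the two clauses in $\max(d+1, (k+d)/2)$ together force both $d + 2e < k$ (so that $N_1 E_2 - N_2 E_1$ really is the zero polynomial) and $|S| \geq d+1$ (so that $q$ is uniquely determined by its restriction to correct samples), and handle the edge case where a generic linear solver might return a degenerate $E$ by picking any basis vector of the nullspace. Beyond that, the recovery reduces to standard Gaussian elimination and polynomial arithmetic, both deterministic in $\poly(k, d)$ time over an arbitrary field $\mathbb{F}$.
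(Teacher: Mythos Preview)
The paper does not supply a proof of this theorem; it is stated as a classical result and attributed to the original reference~\cite{berlekamp_polynomial_1986}, then invoked as a black box inside the proof of Theorem~\ref{theorem:main}. There is therefore nothing in the paper to compare your argument against.

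That said, your proposal is precisely the standard Berlekamp--Welch error-locator argument, and it is correct in outline. One remark on the bookkeeping you flagged: with the threshold stated as $\max(d+1,(k+d)/2)$, the worst case gives $e \le (k-d)/2$ and hence only $d+2e \le k$, not the strict inequality you use. In the usual formulation one requires at least $\lceil (k+d+1)/2 \rceil$ correct samples (equivalently $e \le \lfloor (k-d-1)/2 \rfloor$), which is what actually forces $d+2e < k$ and makes the uniqueness step go through cleanly. This is a quirk of how the paper phrased the hypothesis rather than a flaw in your approach; in the paper's application the parameters are chosen with polynomial slack, so the boundary case never arises. Your handling of the degenerate-$E$ possibility (any nontrivial solution must have $E\not\equiv 0$ because otherwise $N$ would have $k$ roots while $\deg N \le d+e < k$) is also correct.
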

\begin{proof}[Proof of Theorem~\ref{theorem:main}]
	Let $C$ be a circuit in our architecture with random gates $G^l_i$. We can draw a family of gate entries $R^l_i$ from $H^G$ and then use it to publish a circuit from $H_{\theta,K}$ with random gate entries 
	\begin{equation}
	R'^{l}_i(\theta) \coloneqq R^l_i\left(\sum_{i=0}^K\frac{(-\ii r_i\theta)^k}{k!}\right)
	\end{equation}
	with $r^l_i=-\ii \log R^l_i$. Furthermore, we publish a circuit $C'$ from $H_{\theta,K}$ by defining random gates $G'^l_i(\theta) \coloneqq G^l_i*R'^l_i(\theta)$. 
	We have to check that the function $q$ defined as $q(\theta) \coloneqq p_0(C'(\theta))$ is a polynomial. Again, this follows from a Feynman path integral representation:
	\begin{equation}
		\bra {0^n} C_s \ket {+^n}=\sum_{y_2^1,...y^D_{n+1}\in\{0,1\}^n}~~\prod_{l=1}^D\left(\left\langle y^{l+1}_{1}\left|E\right|y^l_{n+1}\right\rangle\prod_{i=1}^n\left\langle y_{i+1}^l\left|G'^l_{i}(\theta)\right|y^l_{i}\right\rangle\right)
	\end{equation}
	where each $\left\langle {y^l_{i+1}}\left|G'^l_{i}(\theta)\right| y^l_{i}\right\rangle$ is a polynomial in $\theta$ of degree $K$. 
	Thus, the above expression is a polynomial of degree $nKD$ and the absolute square is a polynomial of degree $2nKD$. 
	We will now prove that computing $\frac34+\frac{1}{\poly n}$ of the above probabilities over the choice of truncated-perturbed Haar-random circuits is enough to compute the probability for $C$ with only a polynomial overhead. However, we know the latter to be \#P-hard from Lemma~\ref{lemma:worstcasehardness}.

	Assume there is a Turing machine $\mathcal{O}$ that can accurately compute $p_0(C')$ for $\frac34+\frac{1}{\poly n}$ of the choices of $C'$.
	Given such a machine, we can use it to define another machine $\mathcal{O}$ that does the following:
	For fixed $\theta_1,...,\theta_k\in \left[0,\poly^{-1}(n)\right)$ it queries $\mathcal{O}(\theta_l)$ for $l=1,...,k$.
	Then, $\mathcal{O}'$ applies the \emph{Berlekamp-Welch Algorithm} to the points $\{(\theta_l,\mathcal{O}(\theta_l))\}$ to compute a polynomial $\tilde{q}$ of degree $2nK$ and then evaluates $\tilde{q}(1)$.
	Using Markov's inequality and the union bound, we obtain that with a probability of at least $\frac12+\frac{1}{\poly n}$ the requirements of the Berlekamp-Welch algorithm are met. 
	This implies $\tilde{q}=q$.
	
	Finally, we have to show that $p_0(C'(1))$ is a $2^{-\poly(n)}$ additive approximation to $p_0(C)$. However, this is the content of Lemma~\ref{lemma:fact14}.
\end{proof}

The careful reader might observe that the distribution $H_{\theta,K}$ takes values outside the unitary group.
However, the gates are exponentially close to unitaries and Theorem~\ref{theorem:main} is necessarily true if the full approximate average-case hardness conjecture holds~\cite{bouland_quantum_2018}.
Moreover, ~\citet{movassagh_efficient_2018} have provided an alternative interpolation, based on the QR-decomposition instead of the geodesics approach taken here in analogy to Ref.~\cite{bouland_quantum_2018}.
This interpolation is in the unitary group such that the 
output probabilities are rational functions with polynomially many parameters to fix. 
Then, one can apply a version of the Berlekamp-Welch algorithm for rational functions~\cite[Alg.~2]{movassagh_efficient_2018}.
 We remark that the same fix carries over to our setting.
 Here, the interpolation based on the QR-decomposition yields a rational interpolation in the subgroup of the gates $\exp(\varphi \ii Z)$.
One can further relax the assumption of exactness to additive errors of the form $2^{-\poly(N)}$, which is crucial to formalize the result in a Turing machine model which has only finite precision. 
This uses powerful results by~\citet{rakhmanov_interpolation_2007} and~\citet{paturi_extrapolation_1992} for the stable interpolation and extrapolation of polynomials.
The corresponding fraction of hard instances is reduced from $1/4 - 1/\poly(n)$ to $1/\poly(n)$, though.
We omit the details here and instead refer to Refs.~\cite{aaronson_bosonsampling_2010,bouland_quantum_2018,haferkamp_average_2018}.

\section{Average-case hardness for generalized circuit sampling and commuting quantum circuits}\label{appendix:generalized-average}
In this appendix we generalize the average-case hardness result in Ref.~\cite{bouland_quantum_2018} and in Appendix~\ref{appendix:proof_average}.
First, we generalize the notion of an \textit{architecture} in Ref.~\cite{bouland_quantum_2018}.

\begin{definition}[Generalized circuit sampling]
 A \emph{generalized architecture} $\mathcal{A}$ is a family of directed graphs $G=(E,V)$, one for each $n$ with equally many adjacent input and output edges at each vertex. 
 We denote with $m_v$ for $v\in V$ the number of input edges.
 Furthermore, every vertex is equiped with a label which is either a Lie-subgroup $G_v$ of $\mathbb{U}(d^{m_v})$ or a fixed unitary in $\mathbb{U}(d^{m_v})$.
 This graph specifies a protocol applied to an input product state.
 If the label is a fixed unitary, we apply this unitary and if it is a subgroup, we draw a unitary Haar-randomly from this subgroup.
 We refer to the gates drawn from Lie-subgroubs as \emph{random gates}.
\end{definition}
Similar to the discussion in Appendix~\ref{appendix:proof_average}, we define the perturbed Haar measure.
\begin{definition}[Perturbed Haar measure]
	The distribution $H^{\mathcal{A}}_{\theta,K}$ is defined by drawing a circuit $A$ with the  generalized architecture $\mathcal{A}$ and local random gates $R_v$ drawn from the Haar measures on $G_v\subseteq \mathbb{U}(d^{m_v})$ and then setting
	\begin{equation}
	G_v=R_v\left(\sum_{k=0}^K\frac{(-\ii\theta r_v)^k}{k!}\right),
	\end{equation}
	with $r_v \coloneqq -\ii\log R_v$.
\end{definition}
This allows us to formulate the following theorem:
\begin{theorem}[Average case hardness in generalized architectures]\label{theorem:generalizedarchitecture}
	Let $\mathcal{A}$ be such that it is \#P-hard to compute the probabilities $|\langle 0^n|U|0^n\rangle|^2$ in the worst-case.
	For a circuit $C$ with underlying architecture $\mathcal{A}$, it is \#P-hard to compute $\frac34+\frac{1}{\poly(n)}$ of the probabilities $p_0(C')$ over the choice $C'$ drawn from $C*H^{\mathcal{A}}$ and over the choice $\theta\in[0,\poly^{-1}(n))$ and $K=\poly(n)$.
\end{theorem}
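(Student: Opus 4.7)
The plan is to mimic the worst-to-average reduction from the proof of Theorem~\ref{theorem:main} in Appendix~\ref{appendix:proof_average}, replacing the one-parameter family $e^{\ii\varphi Z}$ by Haar-random elements of the Lie subgroups $G_v\subseteq\mathbb{U}(d^{m_v})$ and invoking the hypothesized worst-case \#P-hardness in place of Lemma~\ref{lemma:worstcasehardness}. Since each $G_v$ is a closed subgroup of the compact group $\mathbb{U}(d^{m_v})$, it is itself compact and carries a bi-invariant normalized Haar measure, so the gate-wise constructions of Appendix~\ref{appendix:proof_average} make sense verbatim. The worst-case hardness hypothesis plays the role of Lemma~\ref{lemma:worstcasehardness} as the endpoint of the reduction at $\theta=1$.

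Concretely, given a worst-case circuit $C$ with random gates $\{G_v\}_{v\in V}$, I would independently sample Haar-random $R_v\in G_v$, pick any $r_v$ in the Lie algebra of $G_v$ with $\exp(\ii r_v)=R_v$, and define the interpolating gates
\begin{equation}
G'_v(\theta)\coloneqq G_v\cdot R_v\cdot\sum_{k=0}^{K}\frac{(-\ii\theta r_v)^k}{k!}.
\end{equation}
By bi-invariance of Haar, $G'_v(0)=G_vR_v$ is again Haar-distributed on $G_v$, so $C'(0)\sim C\ast H^{\mathcal{A}}$; for $\theta\in[0,\poly^{-1}(n))$ the resulting distribution agrees with $H^{\mathcal{A}}_{\theta,K}$ in analogy to the single-qubit case. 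Expanding $p_0(C'(\theta))=|\langle 0^n|C'(\theta)|0^n\rangle|^2$ via Feynman path integrals as in the proof of Theorem~\ref{theorem:main}, each matrix element of $C'(\theta)$ is a polynomial in $\theta$ of degree at most $K|V|$, so $p_0(C'(\theta))$ is a univariate polynomial of degree at most $2K|V|=\poly(n)$. At $\theta=1$ and for $K=\poly(n)$ sufficiently large, the truncation estimate underlying Lemma~\ref{lemma:fact14} shows $|p_0(C'(1))-p_0(C)|\leq 2^{-\poly(n)}$, since the local truncation error is of order $\kappa/K!$ with $\kappa$ depending only on the (bounded) fan-in $\max_v m_v$.

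The reduction then proceeds identically to Appendix~\ref{appendix:proof_average}: query the hypothesized oracle at $\poly(n)$ distinct values $\theta_1,\dots,\theta_k\in[0,\poly^{-1}(n))$, apply the Berlekamp–Welch algorithm (Theorem~\ref{theorem:berlekamp}) to reconstruct $p_0(C'(\cdot))$ exactly from the queries, and evaluate the reconstructed polynomial at $\theta=1$. Markov's inequality combined with the union bound guarantees that a $3/4+1/\poly(n)$-correct oracle supplies enough correct samples for Berlekamp–Welch to succeed with probability at least $1/2+1/\poly(n)$; polynomially many independent repetitions and a majority vote amplify this to exponential confidence. The final value approximates $p_0(C)$ to additive error $2^{-\poly(n)}$, which is \#P-hard by hypothesis.

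The main obstacle to making this fully rigorous in the generalized setting is the uniform control of the truncation error across arbitrary Lie subgroups. One must argue that $\|r_v\|_\infty$ is bounded independently of $n$ (which holds by compactness, with $\|r_v\|_\infty\leq\pi$ after an appropriate branch choice for the matrix logarithm), and that the Feynman-sum bound in Lemma~\ref{lemma:fact14} remains of the form $2^{\mathcal{O}(\poly(n))}/K!$, so that $K=\poly(n)$ still suffices. Subtleties regarding the measurable choice of logarithm and the case in which some $R_v$ lies on the cut locus of $\exp$ are benign: any measurable selection with bounded norm yields the same polynomial interpolation and the same truncation bound, since only the operator norm of $r_v$ enters the estimates.
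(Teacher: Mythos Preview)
Your proposal is correct and follows exactly the approach the paper indicates: the paper's own ``proof'' consists of the single sentence that the argument in Appendix~\ref{appendix:proof_average} ``can be straightforwardly generalized'' to the Lie-subgroup setting, and you have carried out precisely that generalization. Your additional care about compactness of $G_v$, the uniform bound $\|r_v\|_\infty\leq\pi$ via a branch choice of the matrix logarithm, and the measurable-selection/cut-locus subtleties goes beyond what the paper spells out but is entirely in the spirit of the intended argument.
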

The proof in Appendix~\ref{appendix:proof_average} can be straightforwardly generalized to show Theorem~\ref{theorem:generalizedarchitecture}.
Furthermore, the result also yields average-case hardness of exact evaluation for IQP circuits~\cite{Bremner} so long as the random angles are chosen uniformly in $[0,2\pi)$ rather than discretely as in the original model.

\end{document}